\documentclass[11pt, draftclsnofoot, journal, letterpaper, onecolumn]{IEEEtran}

\usepackage[dvips]{graphicx}
\usepackage{times}
\usepackage{cite}
\usepackage{amsmath}
\usepackage{array}
\usepackage{amssymb}

\usepackage{bbm}

\usepackage{stfloats}
\usepackage{rotating,threeparttable,booktabs}
\usepackage{bm}
\usepackage{dcolumn,booktabs}
\usepackage{multirow}
\usepackage{graphicx}
\usepackage{subfigure}
\usepackage{color}

\usepackage[letterpaper]{geometry}

\geometry{letterpaper,
  left=.70in,right=.70in,
  top=.889in,bottom=.889in}

\usepackage{amsmath,amsthm,amssymb,amsfonts}
\makeatletter
\thm@headfont{\sc}
\makeatother
\newtheorem{theorem}{Theorem}

\newtheorem{lemma}{Lemma}

%
%

\begin{document}

\title{\LARGE {Channel Estimation and Optimal Training Design for Correlated MIMO Two-Way Relay Systems in Colored Environment}}
\author{\IEEEauthorblockN{Rui~Wang, Meixia Tao, \textit{Senior Member, IEEE}, Hani Mehrpouyan, \textit{Member, IEEE}, and Yingbo Hua, \textit{Fellow, IEEE}\vspace{-20pt}}
\thanks{R. Wang is with the Department of Information and Communications at Tongji University, Shanghai, 201804, P. R. China. Email: liouxingrui@gmail.com. M. Tao is with the Department of Electronic Engineering at Shanghai Jiao Tong University, Shanghai, China. Emails: mxtao@sjtu.edu.cn. H. Mehrpouyan is with the Department of Computer and Electrical Engineering and Computer Science at California State University, Bakersfield, CA, USA. Email: hani.mehr@ieee.org. Y. Hua is with the Department of Electrical Engineering at the University of California, Riverside, CA, USA. Email: yhua@ee.ucr.edu.}}

\maketitle
\vspace{-1cm}
\begin{abstract}

In this paper, while considering the impact of antenna correlation and the interference from neighboring users, we analyze channel estimation and training sequence design for multi-input multi-output (MIMO) two-way relay (TWR) systems.  To this end, we propose to decompose the bidirectional transmission links into two phases, i.e., the multiple access (MAC) phase and the broadcasting (BC) phase. By considering the Kronecker-structured channel model, we derive the optimal linear minimum mean-square-error (LMMSE) channel estimators. The corresponding training designs for the MAC and BC phases are then formulated and solved to improve channel estimation accuracy. For the general scenario of training sequence design for both phases, two iterative training design algorithms are proposed that are verified to produce training sequences that result in near optimal channel estimation performance. Furthermore, for specific practical scenarios, where the covariance matrices of the channel or disturbances are of particular structures, the optimal training sequence design guidelines are derived. In order to reduce training overhead, the minimum required training length for channel estimation in both the MAC and BC phases are also derived. Comprehensive simulations are carried out to demonstrate the effectiveness of the proposed training designs.

\end{abstract}


\section{Introduction}
Relay assisted cooperative communications has been regarded as one of the most promising techniques in combating long distance channel fading in complex wireless communication systems. One popular example is one-way relaying, which has been well studied in the past decade \cite{Hua_JSAC_12, Rong2009,Ting2011}. Although one-way relaying shows great potential in reducing power consumption, enhancing reliability, and extending coverage, it suffers from low spectral efficiency due to the half-duplex nature of the network. To overcome this disadvantage, by using the idea of network coding, two-way relaying (TWR) has been proposed and has received great attention recently \cite{ZhangSL_PLNC06}. In fact, TWR can maintain the advantages of traditional relaying while doubling spectrum efficiency.

The improvement in spectrum efficiency in TWR is achieved by applying self-interference cancelation at each source node and extracting the desired information from the received network-coded messages. In this case, the accuracy of the self-interference cancelation process significantly affects the performance of TWR systems. Moreover, when using the popular amplify-and-forward (AF) relaying strategy, the accuracy of self-interference cancelation process is highly dependent on the precision of the channel estimation process. Thus, obtaining highly accurate channel state information (CSI) becomes more important in TWR systems compared to traditional one-way relaying systems. In fact, devising new channel estimation schemes for TWR systems has received great attention recently. For example, in \cite{Gao_tcom2009}, the authors propose to estimate the cascaded channel of TWR systems under the AF relaying strategy. By using multiple phase shift keying (M-PSK) training symbols, blind and partially-blind channel estimators are investigated in \cite{Abdallah2012twc, Abdallah2012tsp}. Different from \cite{Gao_tcom2009, Abdallah2012twc, Abdallah2012tsp}, where flat fading channel are assumed, the authors in \cite{Gongpu_twc2011} investigate time varying channel estimation via a new complex-exponential basis expansion model. Moreover, in \cite{Gao_tsp2009, Shun_tsp2012}, the channel estimation process for TWR is extended to the scenario of orthogonal frequency division multiplexing (OFDM) systems.

It is worth noting that the works summarized above are concerned with single-antenna TWR systems. As expected, the multi-antenna or multi-input multi-output (MIMO) technique can be introduced to TWR systems to further improve transmission reliability and bandwidth efficiency. One efficient way to realize such performance improvement is to exploit the estimated CSI for the application of source and relay precoding
\cite{Zhang2009, Rong_tsp2012, RuiWang, Xu_Hua_TWC}.
Therefore, in MIMO TWR systems, in addition to affecting the performance of self-interference cancelation, inaccurate channel estimation also imposes a negative effect on the precoder design.

Fig. \ref{SystemModel} depicts a MIMO TWR setup. Let us denote the process of data transmission from the source nodes to the relay and relay to the source nodes as the broadcasting (BC) and multiple access (MAC) phases, respectively. In \cite{Jian_GC2008}, a MIMO channel estimator is proposed that uses the self-interference as a training sequence to estimate the channel matrices corresponding to the BC phase. In \cite{Zhaoxi_2011}, the performances of different channel estimators, including individual and cascaded channel estimators, are compared based on the least squares (LS) criterion. In \cite{Pham_tcom2010}, an LS estimator is used to obtain the cascaded channel matrices corresponding to the BC and MAC phases using a single carrier cyclic prefix. Note that in the contributions of \cite{Jian_GC2008, Zhaoxi_2011, Pham_tcom2010}, the channel statistics, whether cascaded channels or the individual channels, are assumed to be unknown deterministic matrices. Based on the estimation theory, if channel statistics are known, the channel estimation can be conducted under the Bayesian framework and the estimation accuracy can be further enhanced. Hence, by taking these statistics into account, we seek to improve upon the channel estimators in \cite{Jian_GC2008, Zhaoxi_2011, Pham_tcom2010}.

Very recently, the authors in \cite{Kim_cl2013, Rong_tsp2013} independently investigate the minimum mean-square-error (MMSE) channel estimation for TWR systems based on a correlated Gaussian MIMO channel model. In particular, in \cite{Kim_cl2013}, the cascaded channel matrices for AF TWR systems are estimated and the training sequences at the two source nodes are optimized to minimize the total channel estimation MSE. Different from \cite{Kim_cl2013}, the authors in \cite{Rong_tsp2013} aim to estimate the individual channel matrices for each link. To reach this goal, two different estimation schemes, i.e., the superimposed channel training and the two-stage channel estimation schemes, are proposed. In addition, the training sequences at the two source nodes, as well as, at the relay node are jointly optimized to improve channel estimation accuracy.

\begin{figure}[t]
\begin{centering}
\includegraphics[scale=0.65]{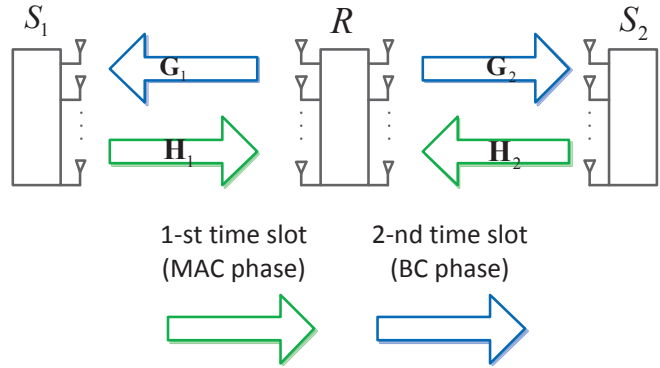}
\vspace{-0.1cm}
\caption{An illustration of MIMO two-way relay system.} \label{SystemModel}
\end{centering}
\vspace{-0.5cm}
\end{figure}
In this paper, similar to \cite{Jian_GC2008, Zhaoxi_2011, Pham_tcom2010, Kim_cl2013, Rong_tsp2013}, while assuming that the channel statistics are known, we analyze and devise channel estimators for correlated MIMO TWR systems. Specifically, we consider the Kronecker-structured channel model, such that the individual channel matrices can be estimated based on the Bayesian framework. However, \textit{unlike} \cite{Kim_cl2013, Rong_tsp2013}, we take into account the interference from the nearby users. Thus, in this model, the disturbance at the source nodes and the relay node consists of both noise and interference. Note that the considered colored estimation environment may be more practical for applications in today's more densely deployed wireless networks. Although channel estimation in point-to-point MIMO systems in colored environments has been studied in \cite{Biguesh_tsp2009, Bjornson2010}, to the best of our knowledge, this topic has not been addressed in the TWR scenario.

To enhance TWR performance, we seek to estimate the individual channel matrices corresponding to source-to-relay and relay-to-source links, see Fig. \ref{SystemModel}. To this end, a new two-phase estimation scheme is proposed, where the bidirectional transmission of a TWR system is decomposed into the MAC and BC phases. For the MAC and BC phases, the channel estimation is performed at the relay node and two source/user nodes, respectively. The proposed estimation scheme is different from the ones in \cite{Kim_cl2013, Rong_tsp2013}, where the channel estimation is assumed to only be conducted at the user ends. As such, our proposed estimation scheme can more efficiently support precoding at the relay since it requires significantly less feedback overhead \cite{RuiWang, RuiTVT2012, WangTWC2012}. Based on the proposed estimation scheme, we derive the optimal linear MMSE (LMMSE) estimator for each phase. Next, the corresponding training design problems are formulated with the aim of minimizing the total MSE of channel estimation process for each phase. The training design problem considered here is different from that of \cite{Kim_cl2013, Rong_tsp2013}, since we take into account the effect of colored disturbances caused by interference at the relay node and user ends. Moreover, the training design scenarios for point-to-point systems in \cite{Biguesh_tsp2009, Bjornson2010} are different from the scenario under consideration in this paper, since our proposed training sequence design is optimized to simultaneously enhance channel estimation accuracy over both links in the BC and MAC phases. Although, for the general scenario, it is difficult to derive the optimal training sequence structures as in \cite{Kim_cl2013, Rong_tsp2013, Biguesh_tsp2009, Bjornson2010}, we propose two iterative design algorithms to solve the training design problems. These algorithms are verified to converge quickly to the near optimal solution and to not be sensitive to the initialization process. For some special cases, where the covariance matrices of the channels or disturbances have specific forms, two specific approaches are applied to obtain the optimal training sequences: 1) the original problem is converted into a standard convex optimization problem; 2) the optimal structures of the training sequences are first derived and then used to reduce the original non-convex problem into a simple power allocation problem. Finally, to reduce training overhead, the minimum required training length for channel estimation in both the MAC and BC phases are derived and extensive simulations are carried out to support the findings of the paper.


The rest of the paper is organized as follows. In Section II, we present the system model.
The optimal LMMSE estimators for both MAC and BC phases are derived in Section III. The training designs for the MAC and the BC phases are analyzed in Section IV and V, respectively. Simulation results are provided in Section VI.
Finally, we conclude the paper in Section VII.

\emph{Notations}: $\cal E(\cdot)$ denotes the expectation operator.
$\otimes$ denotes the Kronecker operator. ${\rm vec}(\cdot)$ signifies the matrix vectorization operator.
Superscripts ${\bf A}^T$, ${\bf A}^{*}$, and ${\bf A}^H$ denote the transpose, conjugate, and conjugate transpose of matrix $\bf A$, respectively.
${\rm Tr}({\bf A})$, ${\bf A}^{-1}$, $\det({\bf A})$, and ${\rm Rank}(\bf A)$ stand for the trace, inverse, determinant, and rank of ${\bf A}$, respectively.
${\bm \lambda}({\bf A})$ denotes a vector containing eigenvalues of ${\bf A}$.
${\rm Blkdiag}({\bf A}_1,{\bf A}_2,\cdots,{\bf A}_N)$ denotes a block diagonal matrix constructed by matrices ${\bf A}_i$, for $\forall i$.
${\rm Diag}(\bf a)$ denotes a diagonal matrix with ${\bf a}$ being its diagonal entries.
${\bf A}(n:m,:)$ and ${\bf A}(:,n:m)$ denote the sub-matrices constructed by $n$ to $m$ rows and $n$ to $m$ columns of ${\bf A}$, respectively.
$||{\bf A}||^2_F$ denotes the Frobenius norm of ${\bf A}$.
%
${\bf 0}$ and ${\bf I}$ denote the zero and identity matrices, respectively.
${\Re}(z)$ denotes the real part of complex variable $z$.
%
The distribution of a circular symmetric complex Gaussian vector with mean vector $\bf x$ and covariance matrix ${\bf \Sigma} $ is denoted by ${\cal CN}({\bf x},{\bf \Sigma})$.
${\mathbb C}^{x \times y}$ denotes the space of complex $x \times y$ matrices. $\mathbb{S}^N$ and $\mathbb{S}^N_{+}$ denote the set of symmetric $N \times N$ matrices and the set of positive semidefinite $N \times N$ matrices, respectively. ${\bf x} \preccurlyeq {\bf y}$ denotes that the vector $\bf y$ majorizes the vector $\bf x$.

\section{System Model}
Consider a TWR system, where source nodes $S_1$ and $S_2$ intend to exchange messages with one another through a relay node $R$. $S_1$, $R$, and $S_2$ are assumed to be equipped with $N_1$, $M$, and $N_2$ antennas, respectively. The channel matrices from $S_1$ and $S_2$ to $R$ are denoted by ${\bf H}_1$ and ${\bf H}_2$, respectively, and the channel matrices from $R$ to $S_1$ and $S_2$ are denoted by ${\bf G}_1$ and ${\bf G}_2$, respectively.

Signal transmission within the TWR system is assumed to be achieved in two time slots. In the first phase, referred to as the MAC phase, the source node $S_i$, for $i=1,2$, transmits its signal to the relay node $R$, while in the second phase, referred to as the BC phase, the relay node $R$ forwards the its combined received signal to the two source nodes $S_1$ and $S_2$. The proposed channel estimator aims to obtain the individual channels of the two hops, i.e., $\{{\bf H}_1,{\bf H}_2,{\bf G}_1,{\bf G}_2\}$. Note that different from the cases studied in \cite{Gao_tcom2009,Kim_cl2013}, where the cascaded channels are estimated, here, the individual channel matrices are estimated. This approach enhances precoding design and/or power allocation at the relay node, which can further improve the overall system performance \cite{Zhang2009, Rong_tsp2012, RuiWang, Xu_Hua_TWC,RuiTVT2012, WangTWC2012}.

Following the transmission model in Fig. \ref{SystemModel}, we assume that ${\bf H}_1$ and ${\bf H}_2$ are estimated in the MAC phase via the training signals sent from the two source nodes, and ${\bf G}_1$ and ${\bf G}_2$ are estimated in the BC phase via the training signal transmitted from the relay.

The received training signals at the relay node in the MAC phase can be expressed as
\begin{equation}\label{EQU-1}
{\bf y}_R (t)= {\bf H}_1 {\bf s}_1 (t)+ {\bf H}_2 {\bf s}_2(t)  + {\bf n}_R(t),
\end{equation}
where ${\bf s}_i (t)\in \mathbb{C}^{N_i \times 1}$ denotes the training signal at the source $S_i$ and ${\bf n}_R(t) \in \mathbb{C}^{M \times 1}$ represents the correlated Gaussian disturbance at the relay node. ${\bf n}_R(t)$ models the total background noise as well as the interference from adjacent communication links. ${\bf n}_R(t)$ is modeled as a stochastic process with respect to the time variable $t$ \cite{Biguesh_tsp2009,Bjornson2010,Yong_tsp2007}. Here, the channel matrix ${\bf H}_i \in \mathbb{C}^{M \times N_i}$ is modeled by the Rayleigh fading with mean zero and covariance ${\bf Z}_{H_i} \in \mathbb{S}^{MN_i \times MN_i}_+$, i.e., ${\rm vec}({\bf H}_i )\sim \mathcal{CN}({\bf 0}, {\bf Z}_{H_i})$. To estimate the channel matrices at the relay, the source nodes typically need to send a sequence of known training signals. Assuming training sequences have a length of $L_S$, the received signal in \eqref{EQU-1} can be written in matrix from as\vspace{-0pt}
\begin{equation}\label{EQU-2}
{\bf Y}_R = {\bf H}_1 {\bf S}_1 + {\bf H}_2 {\bf S}_2  + {\bf N}_R,
\end{equation}
where ${\bf Y}_R \triangleq [{\bf y}_R (1), {\bf y}_R (2), \cdots, {\bf y}_R (L_S)] \in \mathbb{C}^{M \times L_S}$, ${\bf S}_i \triangleq [{\bf s}_1 (1), {\bf s}_1 (2), \cdots, {\bf s}_1 (L_S)] \in \mathbb{C}^{N_i \times L_S}$ and ${\bf N}_R \triangleq [{\bf n}_R(1), {\bf n}_R(2), \cdots, {\bf n}_R(L_S)] \in \mathbb{C}^{M \times L}$. Here, the disturbance ${\bf N}_R$ is modeled as ${\rm vec}({\bf N}_R)\sim \mathcal{CN}({\bf 0}, {\bf K}_R)$ with ${\bf K}_R \in \mathbb{S}^{ML_S \times ML_S}_+$.
Suppose that the source node $S_i$ has a maximum power of $\tau_i$ during the channel estimation phase, the training sequence ${\bf S}_i$ should fulfill the following power constraint
\begin{equation}\label{EQU-3}
{\rm Tr} ({\bf S}_i {\bf S}^H_i) \leq \tau_i.
\end{equation}

In the BC phase, the received training signals at the source nodes are given by
\begin{equation}\label{EQU-4}
{\bf y}_i (t)= {\bf G}_i {\bf s}_R(t) + {\bf n}_i(t),~i=1,2
\end{equation}
where ${\bf s}_R (t)\in \mathbb{C}^{M \times 1}$ denotes the training signal at the relay node and ${\bf n}_i(t) \in \mathbb{C}^{N_i \times 1}$ represents the correlated Gaussian disturbance at the node $S_i$. The channel matrix ${\bf G}_i \in \mathbb{C}^{M \times N_i}$ is modeled by a Rayleigh fading parameter with mean zero and covariance ${\bf Z}_{G_i} \in \mathbb{S}^{MN_i \times MN_i}_+$, i.e., ${\rm vec}({\bf G}_i )\sim \mathcal{CN}({\bf 0}, {\bf Z}_{G_i})$. As in \eqref{EQU-1}, here, the disturbance term ${\bf n}_i(t)$ also includes the total background noise and interference from nearby communication nodes. By rewriting \eqref{EQU-4} into matrix form, we have
\begin{equation}\label{EQU-5}
\begin{split}
{\bf Y}_i = {\bf G}_i {\bf S}_R + {\bf N}_i,~i=1,2,
\end{split}
\end{equation}
where ${\bf Y}_i \triangleq [{\bf y}_i (1), {\bf y}_i (2), \cdots, {\bf y}_i (L_R)] \in \mathbb{C}^{N_i \times L_R}$, ${\bf S}_R \triangleq [{\bf s}_R (1), {\bf s}_R (2), \cdots, {\bf s}_R (L_R)] \in \mathbb{C}^{N_i \times L_R}$ and ${\bf N}_i \triangleq [{\bf n}_i(1), {\bf n}_i(2), \cdots, {\bf n}_i(L_R) \in \mathbb{C}^{M \times L_R}$. The disturbance ${\bf N}_i$ is modeled as ${\rm vec}({\bf N}_i)\sim \mathcal{CN}({\bf 0}, {\bf K}_i)$ with ${\bf K}_i \in \mathbb{S}^{N_i L_R \times N_i L_R}$. Here, we assume that the training sequence length at the relay is $L_R$. The following condition must be met to satisfy the power constraint at the relay node\vspace{-12pt}
\begin{equation}\label{EQU-55}
\begin{split}
{\rm Tr}({\bf S}_R {\bf S}^H_R) \leq \tau_R.
\end{split}
\end{equation}
In \eqref{EQU-55}, $\tau_R$ denotes the maximum power at the relay node during the training phase.

In this work, we assume that the channel covariance matrices ${\bf Z}_{H_i}$ and ${\bf Z}_{G_i}$ and the covariance of disturbance ${\bf K}_R$ and ${\bf K}_i$, for $i=1,2$, are structured and their statistics are known. Let us first focus on the properties of the channel statistics.

The correlation amongst the channel parameters can be caused by insufficient antenna spacing as verified by measurements in \cite{Chizhik_JSAC2003,Kai_VTC2002}. Accordingly, the channel matrices are assumed to follow the \emph{Kronecker-structured} model, i.e., the covariance matrices are separated between the transmitter and receiver sides and given by ${\bf Z}_{H_i} = {\bf Z}_{t,{H_i}} \otimes {\bf Z}_{r,H} $,  ${\bf Z}_{G_i} = {\bf Z}_{t,G} \otimes {\bf Z}_{r,{G_i}} $ for $j={1,2}$. Here, indexes `$t$' and `$r$' denote `transmitter' and `receiver', respectively. In addition, since the channels ${\bf H}_1$ and ${\bf H}_2$ terminate at the relay node and the channels ${\bf G}_1$ and ${\bf G}_2$ begin at the relay node, we have that ${\bf Z}_{r,H_1}={\bf Z}_{r,H_2} = {\bf Z}_{r,H}$ and ${\bf Z}_{t,G_1} = {\bf Z}_{t,G_2} = {\bf Z}_{t,G}$. Using the Kronecker model and the above definitions, the channel matrices can be expressed as \cite{Kim_cl2013,Rong_tsp2013,Biguesh_tsp2009,Bjornson2010,Yong_tsp2007}
\begin{equation}\label{EQU-6}
\begin{split}
{\bf H}_i = {\bf C}_{r,H} {\bf W}_{H_i} {\bf C}^T_{t,H_i},~
{\bf G}_i = {\bf C}_{r,G_i} {\bf W}_{G_i} {\bf C}^T_{t,G},~i=1,2
\end{split}
\end{equation}
where ${\bf Z}_{t,H_i} = {\bf C}_{t,H_i}  {\bf C}^H_{t,H_i} $, ${\bf Z}_{r,H} = {\bf C}_{r,H}  {\bf C}^H_{r,H} $, ${\bf Z}_{t,G} = {\bf C}_{t,G}  {\bf C}^H_{t,G} $, and ${\bf Z}_{r,G_i} = {\bf C}_{r,G_i}  {\bf C}^H_{r,G_i} $. ${\bf W}_{H_i}$ and ${\bf W}_{G_i}$ are unknown matrices, where their entries are modeled by $\mathcal{CN}(0, 1)$.

The structured disturbance covariance ${\bf K}_i$, for $i\in \{R, 1,2 \}$, are assumed to be modeled by \cite{Biguesh_tsp2009,Bjornson2010,Yong_tsp2007}\vspace{-6pt}
\begin{equation}\label{EQU-7}
\begin{split}
{\bf K}_i = {\bf K}_{q,i} \otimes {\bf K}_{r,i},~i=R, 1,2,
\end{split}
\end{equation}
where ${\bf K}_{q,1} / {\bf K}_{q,2} \in \mathbb{C}^{L_S \times L_S}$, ${\bf K}_{q,R} \in \mathbb{C}^{L_R \times L_R}$ denote the temporal covariance matrix and ${\bf K}_{r,1} \in \mathbb{C}^{N_1 \times N_1}$, ${\bf K}_{r,2} \in \mathbb{C}^{N_2 \times N_2}$, and ${\bf K}_{r,R} \in \mathbb{C}^{M \times M}$ denote the received spatial covariance
matrix. Moreover, it is assumed that ${\bf K}_{r,1}$, ${\bf K}_{r,2}$, and ${\bf K}_{r,R}$ share the same eigenvectors with ${\bf Z}_{r,{G_1}}$, ${\bf Z}_{r,{G_2}}$ and ${\bf Z}_{r,H}$, respectively. This assumption is valid when the disturbances are either spatially uncorrelated or share the same spatial structure as the channel \cite{Yong_tsp2007, Bjornson2010}. In addition, as summarized in \cite{Bjornson2010}, this assumption models the following scenarios: 1) Additive noise-limited scenario, ${\bf K}_{r,i} = \mu_i {\bf I}$ with $\mu_i$ being some variance, for $i=R, 1,2$; 2) Interference-limited scenario, ${\bf K}_{r,i} = {\bf Z}_{r,{G_i}}$, for $i=1,2$, and ${\bf K}_{r,R}={\bf Z}_{r,H}$;
3) Additive noise and temporally uncorrelated interference scenario, ${\bf K}_{r,i} = \mu_i {\bf I}+\nu_i {\bf Z}_{r,{G_i}}$, for $i=1,2$, and ${\bf K}_{r,R} = \mu_R {\bf I}+\nu_R{\bf Z}_{r,H}$ with $\nu_i$, for $i=R,1,2$, being the number of interfering users; and 4) Additive noise and spatially uncorrelated interference scenario, ${\bf K}_{r,i} = {\bf I}$.

The singular value decomposition (SVD) of ${\bf Z}_{t,{H_i}}$, ${\bf Z}_{r,H}$, ${\bf Z}_{t,G}$, and ${\bf Z}_{r,{G_i}}$ are given by
\begin{equation}\label{EQU-8}
\begin{split}
{\bf Z}_{a,b} =  {\bf U}_{a,b} {\bm \Sigma}_{a,b} {\bf U}^H_{a,b},~
 a \in \{r,t\},b \in \{H,H_1,H_2, G, G_1, G_2\},
\end{split}
\end{equation}
where ${\bf U}_{a,b}$ denotes the unitary eigenvector matrix and ${\bm \Sigma}_{a,b}$ is a diagonal matrix with $[{\bm \Sigma}_{a,b}]_{n,n} ={\sigma}_{a,b,n}$ being the $n$-th eigenvalue of ${\bf Z}_{a,b}$.
Accordingly, the SVD decomposition of ${\bf C}_{a,b}$ is denoted by ${\bf C}_{a,b} =  {\bf U}_{a,b} {\bm \Sigma}^{1/2}_{a,b} \tilde{{\bf U}}^H_{a,b}$ with $\tilde{{\bf U}}_{a,b}$ represting a unitrary matrix. The SVD decomposition of ${\bf K}_{q,i}$ and ${\bf K}_{r,i}$ is denoted by \vspace{-0pt}
\begin{equation}\label{EQU-8}
{\bf K}_{a,b} =  {\bf V}_{a,b} {\bm \Delta}_{a,b} {\bf V}^H_{a,b},~
 a \in \{r,t\},b \in \{1,2, R\},
\end{equation}
where ${\bf V}_{a,b}$ denotes the unitary eigenvector matrix, ${\bm \Delta}_{a,b}$ is a diagonal matrix with $[{\bm \Delta}_{a,b}]_{n,n} ={\delta}_{a,b,n}$ being the $n$-th eigenvalue of ${\bf K}_{a,b}$. As mentioned before, it is assumed that ${\bf V}_{r,1} = {\bf U}_{r,G_1}$, ${\bf V}_{r,2} = {\bf U}_{r,G_2}$ and ${\bf V}_{r,R} = {\bf U}_{r,H}$.

\vspace{-0pt}
\section{Channel Estimation for Two-Way Relay Systems}

Following the proposed estimation scheme in Section II, we next obtain the channel estimates based on \eqref{EQU-2} and \eqref{EQU-5}.
For the estimation during the MAC phase, we rewrite \eqref{EQU-2} as
\begin{align}\label{III-1}
{\bf Y}_R = {\bf C}_{r,H} {\bf W}_{H_1} {\bf C}^T_{t,H_1} {\bf S}_1 + {\bf C}_{r,H} {\bf W}_{H_2} {\bf C}^T_{t,H_2} {\bf S}_2  + {\bf N}_R
 = {\bf C}_{r,H} {\bf W}_{H} {\bf C}^T_{t,H} {\bf S} + {\bf N}_R,
\end{align}
where
${\bf W}_{H} \triangleq [{\bf W}_{H_1}, {\bf W}_{H_2}]$, ${\bf C}^T_{t,H} \triangleq {\rm Blkdiag}({\bf C}^T_{t,H_1}, {\bf C}^T_{t,H_2})$, and ${\bf S} \triangleq [{\bf S}^T_1, {\bf S}^T_2]^T$. Vectorizing ${\bf Y}_R$ in \eqref{III-1} and applying the identity\vspace{-12pt}
\begin{align}\label{III-1-1}
\begin{split}
{\rm vec}({\bf A}{\bf B}{\bf C})=({\bf C}^T\otimes {\bf A}){\rm vec}({\bf B}),
\end{split}
\end{align}
\vspace{-30pt}\\
we can rewrite \eqref{III-1} into\vspace{-12pt}
\begin{equation}\label{III-2}
\begin{split}
{\bf y}_R = \left({\bf S}^T {\bf C}_{t,H}\otimes {\bf C}_{r,H} \right) {\bf w}_{H} + {\bf n}_R,
\end{split}
\end{equation}
\vspace{-24pt}\\
where ${\bf y}_R \triangleq{\rm vec}({\bf Y}_R)$, ${\bf w}_{H} \triangleq {\rm vec}({\bf W}_{H})$ and ${\bf n}_R \triangleq {\rm vec}({\bf N}_R)$. The estimation of ${\bf w}_{H}$ based on the LMMSE criterion can be obtained as $\hat{{\bf w}}_{H} = {\bf T}_R {\bf y}_R$. The estimation matrix ${\bf T}_R$ has the following form \cite{StevenBook}
\begin{equation}\label{III-4}\vspace{-12pt}
\begin{split}
{\bf T}_R = \mathfrak{R}_{w_h y_R} \mathfrak{R}^{-1}_{y_R y_R}.
\end{split}
\end{equation}
where
\begin{equation}\label{III-5}\nonumber
\begin{split}
\mathfrak{R}_{w_h y_R} \triangleq& {\cal E} ({\bf w}_{H} {\bf y}^H_R)
 =   {\bf C}^H_{t,H} {\bf S}^* \otimes {\bf C}^H_{r,H}\\
\mathfrak{R}_{y_R y_R}  \triangleq& {\cal E} ({\bf y}_R {\bf y}^H_R)
 = \left({\bf S}^T {\bf C}_{t,H}\otimes {\bf C}_{r,H} \right) \left({\bf S}^T {\bf C}_{t,H}\otimes {\bf C}_{r,H} \right)^H + {\bf K}_R
 = {\bf S}^T {\bf C}_{t,H} {\bf C}^H_{t,H} {\bf S}^* \otimes {\bf C}_{r,H} {\bf C}^H_{r,H} + {\bf K}_R.
\end{split}
\end{equation}
Let us define ${\bf h} \triangleq {\rm vec}({\bf H}) = \left( {\bf C}_{t,H} \otimes {\bf C}_{r,H} \right) {\bf w}$
with ${\bf H} \triangleq [{\bf H}_1,{\bf H}_2]$,
the resulting estimation error, or mean-square-error (MSE), $e_R$ can be derived as
\begin{equation}\label{III-6}
\begin{split}
e_R  =& {\cal E}\left(||{\bf h} - \hat{{\bf h}} ||^2_2 \right)\notag\\
=& {\rm Tr}\left[ {\bf C}_{0,H} ({\bf w}_{H} - \hat{{\bf w}}_{H})({\bf w}_{H} - \hat{{\bf w}}_{H})^H \right]\notag\\
 =& {\rm Tr}\left[ {\bf C}_{0,H} ({\bf w}_{H} - {\bf T}_R {\bf y}_R)({\bf w}_{H} - {\bf T}_R {\bf y}_R)^H \right],
\end{split}
\end{equation}
where ${\bf C}_{0,H} \triangleq {\bf C}^H_{t,H} {\bf C}_{t,H} \otimes {\bf C}^H_{r,H} {\bf C}_{r,H} $.
Substituting ${\bf T}_R$ into \eqref{III-6} and using the matrix identity $({\bf I} + {\bf A}{\bf B})^{-1} = {\bf I}- {\bf A}({\bf I}+{\bf B}{\bf A})^{-1}{\bf B}$, we obtain the following more compact form for $e_R$
\begin{equation}\label{III-7}
e_R   = {\rm Tr}\left[ {\bf C}_{0,H} \left({\bf I} + \left({\bf S}^T {\bf C}_{t,H}\otimes {\bf C}_{r,H} \right)^H {\bf K}^{-1}_R
 \left({\bf S}^T {\bf C}_{t,H}\otimes {\bf C}_{r,H} \right)  \right )^{-1} \right].
\end{equation}
Note that since the channel estimation model in \eqref{III-2} is linear and Gaussian, the proposed LMMSE estimator is equivalent to the optimal MMSE estimator.

During the BC phase, the channel estimation should be based on the received signal in \eqref{EQU-5}. By vectorizing ${\bf Y}_i$ in \eqref{EQU-5}, we have
\begin{equation}\label{III-71}
\begin{split}
{\bf y}_i = \left({\bf S}^T_R {\bf C}_{t,G}\otimes {\bf C}_{r,G_i} \right) {\bf w}_{G_i} + {\bf n}_i,~i=1,2
\end{split}
\end{equation}
where ${\bf y}_i \triangleq{\rm vec}({\bf Y}_i)$, ${\bf w}_{G_i} \triangleq {\rm vec}({\bf W}_{G_i})$, and ${\bf n}_i \triangleq {\rm vec}({\bf N}_i)$. Similar to steps above, the estimation MSE of ${\bf g}_i$ can be obtained as
\begin{equation}\label{III-72}
\begin{split}
e_i & = {\cal E}\left(||{\bf g}_i - \hat{{\bf g}}_i  ||^2_2\right)
 = {\rm Tr}\left[ {\bf C}_{0,G_i} ({\bf w}_{G_i} - \hat{{\bf w}}_{G_i})({\bf w}_{G_i} - \hat{{\bf w}}_{G_i})^H \right] \\
& = {\rm Tr}\left[ {\bf C}_{0,G_i} ({\bf w}_{H} - {\bf T}_i {\bf y}_i)({\bf w}_{H} - {\bf T}_i {\bf y}_i)^H \right],~i=1,2
\end{split}
\end{equation}
where ${\bf C}_{0,G_i} \triangleq {\bf C}^H_{t,G} {\bf C}_{t,G} \otimes {\bf C}^H_{r,G_i} {\bf C}_{r,G_i} $ and
\begin{equation}\label{III-722}
\begin{split}
{\bf T}_i = \mathfrak{R}_{w_{G_i} y_i} \mathfrak{R}^{-1}_{y_i y_i}.
\end{split}
\end{equation}
In \eqref{III-722},
\begin{equation}\label{III-73}\nonumber
\begin{split}
\mathfrak{R}_{w_{G_i} y_i} =& {\cal E}({\bf w}_{G_i} {\bf y}^H_i)
 =   {\bf C}^H_{t,G} {\bf S}^*_R \otimes {\bf C}^H_{r,G_i}\\
\mathfrak{R}_{y_i y_i} =& {\cal E}({\bf y}_i {\bf y}^H_i)
 = \left({\bf S}^T_i {\bf C}_{t,G}\otimes {\bf C}_{r,G_i} \right) \left({\bf S}^T_R {\bf C}_{t,G}\otimes {\bf C}_{r,G_i} \right)^H + {\bf K}_i
 = {\bf S}^T_R {\bf C}_{t,G} {\bf C}^H_{t,G} {\bf S}^*_R \otimes {\bf C}_{r,G_i} {\bf C}^H_{r,G_i} + {\bf K}_i.
\end{split}
\end{equation}
By substituting ${\bf T}_i$ into \eqref{III-72}, we obtain 
\begin{equation}\label{III-74}
\begin{split}
e_i  & = {\rm Tr}\left[{\bf C}_{0,G_i} \left({\bf I} + \left({\bf S}^T_R {\bf C}_{t,G}\otimes {\bf C}_{r,G_i} \right)^H {\bf K}^{-1}_i
 \left({\bf S}^T_R {\bf C}_{t,G}\otimes {\bf C}_{r,G_i} \right)  \right )^{-1} \right].
\end{split}
\end{equation}

\vspace{+5pt}
\section{Training Sequence Design for MAC Phase}
In this section, the design of the training sequences for the MAC phase are analyzed. Namely, we shall optimize the training sequences ${\bf S}_1$ and ${\bf S}_2$ subject to two source power constraints to minimize the total estimation MSE, i.e., $e_R$ in \eqref{III-6}.
The corresponding training sequence optimization problem can be formulated as
\begin{equation}\label{IV-1}
\begin{split}
 \min_{{\bf S}_1, {\bf S}_2} ~~& {\rm Tr}\left[ {\bf C}_{0,H} \left({\bf I} + \left({\bf S}^T {\bf C}_{t,H}\otimes {\bf C}_{r,H} \right)^H {\bf K}^{-1}_R
 \left({\bf S}^T {\bf C}_{t,H}\otimes {\bf C}_{r,H} \right)  \right )^{-1} \right]  \\
{\rm s.t.}~~&  {\rm Tr}({\bf S}_i {\bf S}^H_i) \leq \tau_i,~i=1,2.
\end{split}
\end{equation}

Before solving \eqref{IV-1}, we first introduce the following lemma that deals with the minimum length of the training sequence ${\bf S}$, $L_S$.
\vspace{-7pt}
\begin{lemma}\label{MAC_length}
To achieve an arbitrary small MSE with infinite power at the source nodes, the minimum length of the source training sequence should be set to $L_S = N_1+N_2$. Otherwise, even with infinite power at the source nodes, the total MSE is lower bounded by $\sum^M_{n=1}\sigma_{r,H,n}\sum^{N_1+N_2}_{m=L_S+1}{\sigma_{t,H,m}}$
{with $\sigma_{t,H,m}$ being the $m$-th element of ${\bm \lambda}({\bf Z}_{r,H})$} and ${\bf Z}_{t,H}={\rm Blkdiag}({\bf Z}_{t,H_1}, {\bf Z}_{t,H_2})$. Moreover, for ${\bf K}_{q,R} = q {\bf I}$ and any power constraint at the source node, if the optimal solution of ${\bf S}$ in \eqref{IV-1} has a rank $r$, the minimum length of source training sequence can be set to $L_S = r$.
\end{lemma}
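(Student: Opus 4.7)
The plan is to exploit the Kronecker structure of ${\bf K}_R$ from \eqref{EQU-7} to collapse the MSE expression \eqref{III-7} into a form in which the training matrix ${\bf S}$ enters only through a single $(N_1+N_2)\times(N_1+N_2)$ matrix. By the mixed-product property of the Kronecker product, the argument of the inverse in \eqref{III-7} can be rewritten as ${\bf I}+{\bf A}\otimes{\bf B}$ with ${\bf A}\triangleq{\bf C}^H_{t,H}{\bf S}^*{\bf K}^{-1}_{q,R}{\bf S}^T{\bf C}_{t,H}$ and ${\bf B}\triangleq{\bf C}^H_{r,H}{\bf K}^{-1}_{r,R}{\bf C}_{r,H}$, where ${\bf B}$ is a fixed positive-definite $M\times M$ matrix. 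This step isolates all training-induced rank deficiency inside ${\bf A}$.

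Next, I would note that ${\rm Rank}({\bf A})\leq L_S$ since ${\bf S}^T$ has only $L_S$ rows, so whenever $L_S<N_1+N_2$ the matrix ${\bf A}$ carries a null subspace of dimension at least $N_1+N_2-L_S$ regardless of the source powers. Scaling ${\bf S}$ by an arbitrarily large factor drives every nonzero eigenvalue of ${\bf A}$ to infinity, so the eigenvalues of $({\bf I}+{\bf A}\otimes{\bf B})^{-1}$ associated with ${\rm range}({\bf A})\otimes\mathbb{C}^M$ vanish while those aligned with ${\rm null}({\bf A})\otimes\mathbb{C}^M$ remain equal to $1$. Hence the inverse converges to ${\bf P}_A\otimes{\bf I}_M$, where ${\bf P}_A$ is the orthogonal projector onto ${\rm null}({\bf A})$. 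Setting $\tilde{\bf Z}_{t,H}\triangleq{\bf C}^H_{t,H}{\bf C}_{t,H}$, which shares its eigenvalues with ${\bf Z}_{t,H}$, the identity ${\rm Tr}\bigl((\tilde{\bf Z}_{t,H}\otimes{\bf Z}_{r,H})({\bf P}_A\otimes{\bf I}_M)\bigr)={\rm Tr}(\tilde{\bf Z}_{t,H}{\bf P}_A)\cdot{\rm Tr}({\bf Z}_{r,H})$ gives the infinite-power MSE ${\rm Tr}({\bf Z}_{r,H})\cdot{\rm Tr}(\tilde{\bf Z}_{t,H}{\bf P}_A)$. A Ky Fan style trace inequality then lower-bounds ${\rm Tr}(\tilde{\bf Z}_{t,H}{\bf P}_A)$ by the sum $\sum_{m=L_S+1}^{N_1+N_2}\sigma_{t,H,m}$ of the smallest $N_1+N_2-L_S$ eigenvalues, and combining with ${\rm Tr}({\bf Z}_{r,H})=\sum_{n=1}^M\sigma_{r,H,n}$ yields the stated lower bound. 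The case $L_S\geq N_1+N_2$ is then immediate: pick any ${\bf S}$ making ${\bf A}$ positive definite and send the powers to infinity, whereupon every eigenvalue of ${\bf A}\otimes{\bf B}$ diverges and $e_R\to 0$.

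For the final claim, the assumption ${\bf K}_{q,R}=q{\bf I}$ reduces ${\bf A}$ to $q^{-1}{\bf C}^H_{t,H}{\bf S}^*{\bf S}^T{\bf C}_{t,H}$, so the MSE depends on ${\bf S}$ only through ${\bf S}{\bf S}^H$. Given an optimal ${\bf S}^{\star}$ of rank $r$, I would take its thin SVD ${\bf S}^{\star}={\bf U}{\bm \Sigma}{\bf V}^H$ with ${\bf U}\in\mathbb{C}^{(N_1+N_2)\times r}$ and ${\bf V}\in\mathbb{C}^{L_S\times r}$, and define $\tilde{\bf S}\triangleq{\bf U}{\bm \Sigma}\in\mathbb{C}^{(N_1+N_2)\times r}$. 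Because ${\bf V}^H{\bf V}={\bf I}$, one checks $\tilde{\bf S}\tilde{\bf S}^H={\bf S}^{\star}{\bf S}^{\star H}$, so every diagonal block satisfies $\tilde{\bf S}_i\tilde{\bf S}_i^H={\bf S}_i^{\star}{\bf S}_i^{\star H}$, the per-source power constraints \eqref{EQU-3} remain fulfilled, and the MSE is preserved; hence the training length may be taken as $L_S=r$.

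The main technical obstacle will be handling the infinite-power limit rigorously alongside the Ky Fan step: one must justify that the limit of $({\bf I}+{\bf A}\otimes{\bf B})^{-1}$ along admissible power scalings is indeed ${\bf P}_A\otimes{\bf I}_M$, and that the minimum of ${\rm Tr}(\tilde{\bf Z}_{t,H}{\bf P})$ over all projectors ${\bf P}$ of rank at least $N_1+N_2-L_S$ is attained on the smallest-eigenvalue eigenspace of $\tilde{\bf Z}_{t,H}$. Everything else is direct Kronecker algebra built on the hypothesis in \eqref{EQU-7}.
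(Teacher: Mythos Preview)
Your proposal is correct and reaches the same conclusions as the paper, but the route is genuinely different in the first two parts. The paper first uses the standing assumption that ${\bf K}_{r,R}$ and ${\bf Z}_{r,H}$ share eigenvectors to collapse \eqref{III-7} into the scalar-indexed sum $e_R=\sum_{n=1}^M\sigma_{r,H,n}\,{\rm Tr}\big[({\bf Z}_{t,H}^{-1}+\beta_{R,n}\,{\bf S}^*{\bf K}_{q,R}^{-1}{\bf S}^T)^{-1}\big]$, and then invokes the majorization inequality ${\bm\lambda}({\bf Z}_{t,H}^{-1})+{\bm\lambda}({\bf S}^*{\bf K}_{q,R}^{-1}{\bf S}^T)\preccurlyeq{\bm\lambda}({\bf Z}_{t,H}^{-1}+{\bf S}^*{\bf K}_{q,R}^{-1}{\bf S}^T)$ together with Schur convexity of ${\rm Tr}[(\cdot)^{-1}]$ to obtain the lower bound. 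You instead keep the full Kronecker form ${\bf I}+{\bf A}\otimes{\bf B}$, observe that $({\bf I}+{\bf A}\otimes{\bf B})^{-1}\succeq{\bf P}_A\otimes{\bf I}_M$ for the projector ${\bf P}_A$ onto ${\rm null}({\bf A})$, and finish with a Ky~Fan trace bound on ${\rm Tr}(\tilde{\bf Z}_{t,H}{\bf P}_A)$. Your argument is slightly more elementary and, notably, does not rely on the shared-eigenvector hypothesis between ${\bf K}_{r,R}$ and ${\bf Z}_{r,H}$; the paper's decomposition buys a formula that is reused verbatim in Section~\ref{sec:4-b}, so its extra structure pays off elsewhere. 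Two minor clean-ups for your write-up: (i) ${\bf C}_{0,H}=\tilde{\bf Z}_{t,H}\otimes{\bf C}_{r,H}^H{\bf C}_{r,H}$, not $\tilde{\bf Z}_{t,H}\otimes{\bf Z}_{r,H}$, though the traces coincide so the bound is unaffected; (ii) the ``infinite-power limit'' detour is unnecessary, since $({\bf I}+{\bf A}\otimes{\bf B})^{-1}\succeq{\bf P}_A\otimes{\bf I}_M$ already holds at any finite power, giving the lower bound directly. For the final claim under ${\bf K}_{q,R}=q{\bf I}$ your thin-SVD construction is exactly the paper's.
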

\vspace{-7pt}
\begin{proof}
See Appendix~\ref{prof_theorem1}
\end{proof}


With the minimum length of the training sequences determined, we seek to solve the non-convex optimization problem in \eqref{IV-1} with respect to ${\bf S}_1$ and ${\bf S}_2$. Although the objective function in \eqref{IV-1} has a similar form to that of point-to-point systems, there are two power constraints in \eqref{IV-1} that make the problem of solving this non-convex optimization problem more difficult than that of point-to-point systems in \cite{Biguesh_tsp2009, Bjornson2010}. In order to proceed, we first note that $e_R$ in \eqref{III-7} can be obtained by substituting \eqref{III-4} into {\eqref{III-6}}. Thus, to make the problem tractable, we propose an iterative algorithm, which decouples the primal problem into two sub-problems and solves each of them in an alternating approach.
Let us rewrite \eqref{III-7} into the following form
\begin{equation}\label{IV-2}
\begin{split}
\tilde{e}_R = & {\rm Tr}\left[ {\bf C}_{0,H} ({\bf w}_{H} - {\bf T}_R {\bf y}_R)({\bf w}_{H} - {\bf T}_R {\bf y}_R)^H \right] \\
 = & {\rm Tr}\left[{\bf C}_{0,H} -  \left({\bf S}^T {\bf C}_{t,H}\otimes {\bf C}_{r,H} \right)^H {\bf T}^H_R {\bf C}^H_{0,H} -
  {\bf C}_{0,H}{\bf T}_R \left({\bf S}^T {\bf C}_{t,H}\otimes {\bf C}_{r,H} \right)
  \right. \\
 &+ \left. {\bf C}_{0,H} {\bf T}_R \left({\bf S}^T {\bf C}_{t,H}\otimes {\bf C}_{r,H} \right)
 \times \left({\bf S}^T {\bf C}_{t,H}\otimes {\bf C}_{r,H} \right)^H {\bf T}^H_R
 + {\bf C}_{0,H} {\bf T}_R {\bf K}_R {\bf T}^H_R\right].
\end{split}
\end{equation}
Then, the optimization problem in \eqref{IV-1} is equivalent to
\begin{equation}\label{IV-22-a}
\begin{split}
& \min_{{\bf T}_R, {\bf S}_1, {\bf S}_2}~~   \tilde{e}_R  \\
& {\rm s.t.} ~~  {\rm Tr}({\bf S}_i {\bf S}^H_i) \leq \tau_i,~i=1,2.
\end{split}
\end{equation}

In the first subproblem, we intend to optimize the LMMSE estimator matrix ${\bf T}_R$ for a given ${\bf S}_1$ and ${\bf S}_2$. Since ${\bf T}_R$ is not related to the the power constraint, the problem simplifies to an unconstrained optimization problem given by
\begin{equation}\label{IV-222}
\begin{split}
 \min_{{\bf T}_R}~~  \tilde{e}_R.
\end{split}
\end{equation}
Given that \eqref{IV-222} is convex with respect to ${\bf T}_R$, by setting its gradient to zero, it can be shown that the optimal ${\bf T}_R$ is equal to \eqref{III-4}.

In the second subproblem, the training sequences ${\bf S}_1$ and ${\bf S}_2$ need to be optimized for a given ${\bf T}_R$ by solving the following optimization problem
\begin{equation}\label{IV-2222}
\begin{split}
 & \min_{{\bf S}_1, {\bf S}_2}~~ \tilde{e}_R \\
& {\rm s.t.}~~  {\rm Tr}({\bf S}_i {\bf S}^H_i) \leq \tau_i,~i=1,2.
\end{split}
\end{equation}
Next, it is shown that the optimization problem in \eqref{IV-2222} can be transformed into a convex quadratically constrained quadratic programable (QCQP) problem \cite{Boyd2004}. To achieve this goal, we first reformulate the last term in \eqref{IV-2} as
\begin{equation}\label{IV-3}
\begin{split}
& {\rm Tr}\left[ {\bf C}_{0,H}{\bf T}_R \left({\bf S}^T {\bf C}_{t,H}\otimes {\bf C}_{r,H} \right)\left({\bf S}^T {\bf C}_{t,H}\otimes {\bf C}_{r,H} \right)^H {\bf T}^H_R \right] \\
\overset{(a)}{=}& {\rm Tr}\left[ {\bf T}^H_R {\bf C}_{0,H}{\bf T}_R ({\bf S}^T \otimes {\bf I}) ({\bf C}_{t,H} {\bf C}^H_{t,H} \otimes {\bf C}_{r,H} {\bf C}^H_{r,H}) ({\bf S}^* \otimes {\bf I}) \right]  \\
 \overset{(b)}{=} & {\rm vec}({\bf S} \otimes {\bf I})^H \left( {\bf T}^H_R {\bf C}_{0,H} {\bf T}_R \otimes {\bf C}^T_{tr}   \right) {\rm vec}({\bf S} \otimes {\bf I}) \\
 \overset{(c)}{=}& {\bf s}^H {\bf E}^H \left( {\bf T}^H_R {\bf C}_{0,H} {\bf T}_R \otimes {\bf C}^T_{tr}   \right) {\bf E} {\bf s},
\end{split}
\end{equation}
where ${\bf C}_{tr} \triangleq {\bf C}_{t,H} {\bf C}^H_{t,H} \otimes {\bf C}_{r,H} {\bf C}^H_{r,H}$, ${\bf s} \triangleq {\rm vec}({\bf S} )$, ${\bf E} \triangleq {\rm Blkdiag}(\tilde{{\bf E}}_{(1)},\tilde{{\bf E}}_{(2)},\cdots, \tilde{{\bf E}}_{(L_S)} )$, $\tilde{{\bf E}}_{(i)} = \tilde{{\bf E}}$, $\tilde{{\bf E}} \triangleq \left[\bar{{\bf E}}_{(1)};\bar{{\bf E}}_{(2)};\cdots; \bar{{\bf E}}_{(M)} \right]$, $\bar{{\bf E}}_{(i)} \triangleq {\rm Blkdiag}(\underbrace{{\bf e}_i,{\bf e}_i,\cdots, {\bf e}_i}_{N_1+N_2 ~\text{elements}} )$, and ${\bf e}_i \triangleq [0,0,\cdots,\underbrace{1}_{i-\text{th element}},\cdots,0]^T$.  In \eqref{IV-3}, Eq. $(a)$ is obtained by using the circular property ${\rm Tr}\{{\bf A}{\bf B}\}={\rm Tr}\{{\bf B}{\bf A}\}$ and the matrix identity
\begin{equation}\label{IV-4}
\begin{split}
({\bf A}\otimes {\bf B})({\bf C}\otimes {\bf D}) ={\bf A} {\bf C} \otimes {\bf B} {\bf D},
\end{split}
\end{equation}
Eq. $(b)$ is obtained by using the identity
\begin{equation}\label{IV-5}
\begin{split}
{\rm Tr}({\bf A} {\bf B} {\bf C} {\bf D}) = {\rm vec}({\bf D})^T ({\bf A} \otimes {\bf C}^T)  {\rm vec}({\bf B}^T)~{\rm and}~
({\bf A}\otimes {\bf B} )^H &= {\bf A}^H \otimes {\bf B}^H,
\end{split}
\end{equation}
and Eq. $(c)$ is obtained by using ${\rm vec}({\bf S} \otimes {\bf I}) = {\bf E} {\bf s}$. Similarly, the term ${\rm Tr}\left[{\bf C}_{0,H}{\bf T}_R ({\bf S}^T {\bf C}_{t,H}\otimes {\bf C}_{r,H} )\right]$ can be expressed as
\begin{equation}\label{IV-7}
\begin{split}
 {\rm Tr}\left[{\bf C}_{0,H}{\bf T}_R ({\bf S}^T {\bf C}_{t,H}\otimes {\bf C}_{r,H} )\right]
  = &  {\rm Tr}\left[ ({\bf S}\otimes {\bf I} )^T ({\bf C}_{t,H}\otimes {\bf C}_{r,H}) {\bf C}_{0,H}{\bf T}_R \right] \\
 = & {\rm vec}({\bf S}\otimes {\bf I})^T  {\rm vec}({\bf C}_T)   \\
 = & {\rm vec}({\bf C}_T)^T {\bf E} {\bf s},
\end{split}
\end{equation}
where ${\bf C}_T \triangleq ({\bf C}_{t,H}\otimes {\bf C}_{r,H}) {\bf C}_{0,H}{\bf T}_R$. To obtain \eqref{IV-7}, we use the fact
\begin{equation}\label{IV-8}
\begin{split}
{\rm Tr}({\bf A}^T{\bf B}) = {\rm vec}({\bf A})^T {\rm vec}({\bf B}).
\end{split}
\end{equation}
The source power constrain in \eqref{IV-2222} can be rewritten as
\begin{equation}\label{IV-9}
\begin{split}
{\rm Tr}({\bf S}_i {\bf S}^H_i) = {\rm Tr}({\bf E}_i {\bf S} {\bf S}^H),
\end{split}
\end{equation}
where ${\bf E}_1 \triangleq {\rm Blkdiag}({\bf I}_{N_1}, {\bf 0}_{N_2\times N_2})$ and ${\bf E}_2 \triangleq {\rm Blkdiag}({\bf 0}_{N_1\times N_1}, {\bf I}_{N_2})$.
Based on the property that ${\rm Tr}({\bf A} {\bf B} {\bf C} {\bf D}) = {\rm vec}({\bf D}^T)^T ({\bf C}^T \otimes {\bf A})  {\rm vec}({\bf B})$, Eq. \eqref{IV-9} can be further modified as
\begin{equation}\label{IV-11}
\begin{split}
{\rm Tr}({\bf S}_i {\bf S}^H_i) =  {\bf s}^H ({\bf I} \otimes {\bf E}_i) {\bf s}.
\end{split}
\end{equation}
According to \eqref{IV-3}, \eqref{IV-7}, and \eqref{IV-11}, the optimization problem in \eqref{IV-2222} can be transformed into
\begin{equation}\label{IV-12}
\begin{split}
  \min_{{\bf s}}&~~ {\bf s}^H {\bf E}^H \left( {\bf T}^H_R {\bf C}_{0,H}{\bf T}_R \otimes {\bf C}^T_{tr}   \right) {\bf E} {\bf s} - 2 \Re({\rm vec}({\bf C}_T)^T {\bf E} {\bf s})\\
 {\rm s.t.}& ~~ {\bf s}^H ({\bf I} \otimes {\bf E}_i) {\bf s} \leq \tau_i,~i=1,2.
\end{split}
\end{equation}
Since both ${\bf E}^H \left( {\bf T}^H_R {\bf T}_R \otimes {\bf C}^T_{tr}   \right) {\bf E}$ and ${\bf I} \otimes {\bf E}_i$ are positive semidefinite matrices, we conclude that the optimization problem in \eqref{IV-12} is a convex QCQP problem, which can be easily solved by applying the available software package in for example \cite{CVX}.

In summary, we outline the proposed iterative training design algorithm as follows:

\vspace{-9pt}
\hrulefill
\par
{\footnotesize
\textbf{Algorithm 1}
\begin{itemize}
\item \textbf{Initialize} ${\bf S}_1$, ${\bf S}_2$
\item \textbf{Repeat}
\begin{itemize}
\item Update the LMMSE estimator matrix ${\bf T}_R$ using \eqref{III-4} for fixed ${\bf S}_1$ and ${\bf S}_2$;
\item For fixed ${\bf T}_R$, solve the convex QCQP problem in \eqref{IV-12} to get the optimal ${\bf S}_1$ and ${\bf S}_2$;
\end{itemize}
\item \textbf{Until} The difference between the MSE from one iteration to another is smaller than a certain predetermined threshold.
\end{itemize}}
\vspace{-9pt}
\hrulefill
\vspace{-7pt}
\begin{theorem}\label{MAC_conveg}
The proposed iterative precoding design in Algorithm 1 is convergent and the limit point of the iteration is a stationary point of
\eqref{IV-22-a}.
\end{theorem}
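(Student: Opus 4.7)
The plan is to view Algorithm 1 as a two-block coordinate descent procedure applied to the joint objective $\tilde{e}_R(\mathbf{T}_R, \mathbf{S}_1, \mathbf{S}_2)$ in \eqref{IV-22-a}, with blocks $\mathbf{T}_R$ and $(\mathbf{S}_1, \mathbf{S}_2)$. The feasible region is the Cartesian product of the full space (for $\mathbf{T}_R$) and the two trace-power balls (for $\mathbf{S}_1, \mathbf{S}_2$), so block-wise minimization is well-posed. Both subproblems are solved to global optimality: the $\mathbf{T}_R$-update is the closed-form \eqref{III-4} of a strictly convex unconstrained quadratic (assuming $\mathbf{K}_R \succ \mathbf{0}$), and the $(\mathbf{S}_1, \mathbf{S}_2)$-update is the global solution of the convex QCQP \eqref{IV-12}.

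First I would establish convergence of the sequence of objective values. Because each block is minimized exactly, $\{\tilde{e}_R^{(k)}\}$ is monotonically non-increasing. As $\tilde{e}_R$ is a mean-square-error it is bounded below by zero, so the monotone convergence theorem gives $\tilde{e}_R^{(k)} \to \tilde{e}_R^{\star}$ for some $\tilde{e}_R^{\star}\geq 0$, which proves the first half of the theorem. The iterate sequence itself is bounded: $(\mathbf{S}_1^{(k)}, \mathbf{S}_2^{(k)})$ lies in a compact ball by the power constraints, and the closed-form \eqref{III-4} is continuous in $\mathbf{S}$, so $\{\mathbf{T}_R^{(k)}\}$ is uniformly bounded as well. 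By Bolzano–Weierstrass the iterates admit a convergent subsequence $(\mathbf{T}_R^{(k_j)}, \mathbf{S}_1^{(k_j)}, \mathbf{S}_2^{(k_j)}) \to (\mathbf{T}_R^{\star}, \mathbf{S}_1^{\star}, \mathbf{S}_2^{\star})$.

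Next I would show that every such limit point is stationary for \eqref{IV-22-a}. Continuity of the $\mathbf{T}_R$-update implies $\mathbf{T}_R^{\star}$ is the LMMSE estimator \eqref{III-4} for $(\mathbf{S}_1^{\star}, \mathbf{S}_2^{\star})$, so $\nabla_{\mathbf{T}_R}\tilde{e}_R = \mathbf{0}$ there. For the $(\mathbf{S}_1, \mathbf{S}_2)$-block, I would invoke the standard convergence theorem for two-block coordinate descent on a continuously differentiable objective over a product of closed convex sets: the limit point satisfies the KKT conditions of \eqref{IV-12} at $\mathbf{T}_R = \mathbf{T}_R^{\star}$. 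Combining the two blocks recovers exactly the KKT system of \eqref{IV-22-a}, so $(\mathbf{T}_R^{\star}, \mathbf{S}_1^{\star}, \mathbf{S}_2^{\star})$ is a stationary point. A minor verification is that Slater's condition holds for \eqref{IV-12}, which is immediate since $\mathbf{s}=\mathbf{0}$ is strictly feasible, so KKT optimality coincides with stationarity in the second block.

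The main obstacle will be applying the coordinate-descent convergence result rigorously in degenerate cases: if $\mathbf{C}_{t,H}$, $\mathbf{C}_{r,H}$ or $\mathbf{T}_R^{\star}$ are rank-deficient, then the Hessian $\mathbf{E}^H(\mathbf{T}_R^H \mathbf{C}_{0,H}\mathbf{T}_R \otimes \mathbf{C}_{tr}^T)\mathbf{E}$ is only positive semidefinite and the QCQP minimizer need not be unique. Classical multi-block BCD results (e.g.\ Bertsekas-type) require uniqueness of each per-block minimizer and would fail here; this is precisely where the restriction to exactly two blocks matters, because the two-block version of the theorem drops the uniqueness requirement and still yields stationarity of every limit point. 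This is the crux of the argument and deserves explicit justification rather than a black-box citation.
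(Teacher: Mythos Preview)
Your proposal is correct and follows essentially the same two-step logic as the paper: monotone descent of the objective bounded below gives convergence of the objective values, and block-wise optimality of each subproblem at the limit point yields the combined KKT system of \eqref{IV-22-a}. The paper's own proof is a bare-bones version of this argument---it asserts monotonicity, notes the lower bound of zero, then claims the limit point is a fixed point of the iteration and reads off the KKT conditions of each block---without explicitly checking boundedness of the iterates, existence of the limit point, or the non-uniqueness issue you raise; your treatment simply fills in those technical details.
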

\vspace{-7pt}
\begin{proof}
Since in the proposed iterative algorithm, the solution for each subproblem is optimal, the total MSE is {not increased} after each iteration. Meanwhile, the total MSE is lower bounded  by zero. Hence, the proposed algorithm is convergent.
It further means that there must exist a
limit point, denoted as $ \bar{{\bf X}} = \left\{{\bf \bar{T}}_R, {\bf \bar{S}}_i, i=1,2 \right\}$
after the convergence. At the limit point, the solutions will not change if we continue the iteration. Otherwise, the total MSE can be further decreased, which contradicts the assumption of convergence.

Since at the limit point $ \bar{{\bf X}}$, ${\bf \bar{T}}_R$ is the local minimizer of subproblem \eqref{IV-222}, it can be concluded that ${\bf \bar{T}}_R$ satisfies the following Karush-Kuhn-Tucker (KKT) condition \cite{Boyd2004}
\begin{equation}\label{IV-13}
\begin{split}
\frac{\partial \tilde{e}_R({\bf \bar{S}}_1,{\bf \bar{S}}_2)}{\partial {\bf T}_R}|_{{\bf T}_R = {\bf \bar{T}}_R} = {\bf 0},
\end{split}
\end{equation}
where $\tilde{e}_R({\bf \bar{S}}_1,{\bf \bar{S}}_2)$ denotes the function $e_R$ with ${\bf S}_1$ and ${\bf S}_2$ being evaluated at ${\bf \bar{S}}_1$ and ${\bf \bar{S}}_2$, respectively. Similarly, ${\bf \bar{S}}_i$, for $i=1,2$, is the local minimizer of subproblem \eqref{IV-2222}, which satisfies the following KKT conditions
\begin{equation}\label{IV-14}
\begin{split}
\frac{\partial \tilde{e}_R({\bf \bar{T}}_R)}{\partial {\bf T}_R}|_{{\bf S}_i = {\bf \bar{S}}_i} = {\bf 0},~
\lambda_i \left({\rm Tr}(\bar{{\bf S}}_i \bar{{\bf S}}^H_i)-\tau_i \right) = 0, ~~i=1,2
\end{split}
\end{equation}
where $\lambda_i$ is the lagrangian multiplier associated with the source power constraint. By summing up the KKT conditions given in \eqref{IV-13} and \eqref{IV-14}, it can be concluded that the limit point $ \bar{{\bf X}}$ satisfies the KKT conditions of the primal problem in \eqref{IV-22-a}, which further means that $ \bar{{\bf X}}$ is a stationary point of \eqref{IV-22-a}.
\end{proof}

To this point, it is shown that the joint source training design can be solved via \emph{Algorithm 1}. In the following, we illustrate that for some special cases, the optimal solution of \eqref{IV-1} can be obtained in closed-form.

\subsection{When ${\bf K}_{r,R} = {\bf Z}_{r,H}$}

We first consider the case with ${\bf K}_{r,R} = {\bf Z}_{r,H}$. This case is applicable in the scenario where the disturbance is dominated by the interference from neighboring users as shown in \cite{Yong_tsp2007}. As summarized in Section II, this scenario corresponds to the interference-limited case. Accordingly, the LMMSE estimator given in \eqref{III-4} can be rewritten as
\begin{equation}\label{III-15}\nonumber
\begin{split}
 {\bf T}_R
= & \left[{\bf C}^H_{t,H} {\bf S}^* \otimes {\bf C}^H_{r,H} \right]
 \left[{\bf S}^T {\bf C}_{t,H} {\bf C}^H_{t,H} {\bf S}^* \otimes {\bf C}_{r,H} {\bf C}^H_{r,H} + {\bf K}_R \right]^{-1} \\
 = &\left[ {\bf C}^H_{t,H} {\bf S}^* \otimes {\bf C}^H_{r,H}\right]
 \left[ \left( {\bf S}^T {\bf C}_{t,H} {\bf C}^H_{t,H} {\bf S}^* + {\bf K}_{q,R} \right)^{-1} \otimes {\bf Z}^{-1}_{r,H}\right] \\
 = & \underbrace{{\bf C}^H_{t,H} {\bf S}^* \left( {\bf S}^T {\bf C}_{t,H} {\bf C}^H_{t,H} {\bf S}^* + {\bf K}_{q,R} \right)^{-1}}_{\triangleq {\bf T}_{R,1}} \otimes {\bf C}^{-1}_{r,H}, \\
\end{split}
\end{equation}
which further leads to
\begin{equation}\label{III-15}\nonumber
\begin{split}
{\bf T}^H_R {\bf C}_{0,H} {\bf T}_R  =  ({\bf T}_{R,1}  \otimes {\bf C}^{-1}_{r,H})^H ({\bf C}^H_{t,H} {\bf C}_{t,H} \otimes {\bf C}^H_{r,H} {\bf C}_{r,H})  \times ({\bf T}_{R,1}  \otimes {\bf C}^{-1}_{r,H})
=   {\bf T}^H_{R,1}{\bf C}^H_{t,H} {\bf C}_{t,H} {\bf T}_{R,1} \otimes {\bf I},
\end{split}
\end{equation}
and
\begin{equation}\label{IV-16}
\begin{split}
& {\rm Tr}\left[{\bf C}_{0,H} {\bf T}_R \left({\bf S}^T {\bf C}_{t,H}\otimes {\bf C}_{r,H} \right)\left({\bf S}^T {\bf C}_{t,H}\otimes {\bf C}_{r,H} \right)^H {\bf T}^H_R \right] \\
= & {\rm Tr}\left[\left({\bf S}^T {\bf C}_{t,H}{\bf C}^H_{t,H}{\bf S}^* \otimes {\bf C}_{r,H} {\bf C}^H_{r,H}  \right)
 \left( {\bf T}^H_{R,1}{\bf C}^H_{t,H} {\bf C}_{t,H} {\bf T}_{R,1} \otimes {\bf I} \right)\right] \\
\overset{(a)}{=}& {\rm Tr}({\bf Z}_{r,H}) {\rm Tr}\left[{\bf S}^T {\bf C}_{t,H}{\bf C}^H_{t,H}{\bf S}^* {\bf T}^H_{R,1}{\bf C}^H_{t,H} {\bf C}_{t,H}{\bf T}_{R,1}\right] \\
\overset{(b)}{=}& {\rm Tr}({\bf Z}_{r,H}) \sum^{2}_{i=1}{\rm Tr}\left({\bf S}^T_i {\bf Z}_{t,H_i} {\bf S}^*_i {\bf T}^H_{R,1}{\bf C}^H_{t,H}{\bf C}_{t,H}{\bf T}_{R,1} \right).
\end{split}
\end{equation}
In \eqref{IV-16}, Eq. $(a)$ is obtained by using the identity
${\rm Tr}({\bf A}\otimes {\bf B}) = {\rm Tr}({\bf A}) {\rm Tr}({\bf B})$,
and Eq. $(b)$ is derived based on the fact that ${\bf C}_{t,H}$ is a block diagonal matrix as shown in \eqref{III-1}.
In addition,  the term ${\rm Tr}\left[{\bf C}_{0,H}{\bf T}_R ({\bf S}^T {\bf C}_{t,H}\otimes {\bf C}_{r,H} )\right]$ can be reexpressed as
\begin{equation}\label{IV-18}
\begin{split}
 {\rm Tr}\left[{\bf C}_{0,H}{\bf T}_R ({\bf S}^T {\bf C}_{t,H}\otimes {\bf C}_{r,H} )\right]
& =  {\rm Tr}\left[ ({\bf S}^T {\bf C}_{t,H} {\bf C}^H_{t,H} {\bf C}_{t,H} {\bf T}_{R,1} ) \otimes {\bf C}_{r,H} {\bf C}^H_{r,H} \right] \\
& =  {\rm Tr}({\bf Z}_{r,H}) \sum^{2}_{i=1} {\rm Tr}\left( {\bf S}^T_i {\bf Z}_{t,H_i} {\bf C}_{t,H_i} {\bf T}_{R,1,i}\right),
\end{split}
\end{equation}
where ${\bf T}_{R,1,1} \triangleq {\bf T}_{R,1}(1:N_1,:)$ and ${\bf T}_{R,1,2} \triangleq {\bf T}_{R,1}(N_1+1:N_1+N_2,:)$.
Based on \eqref{IV-16} and \eqref{IV-18}, \eqref{IV-2222} is equivalent to the following optimization problem
\begin{equation}\label{IV-19}
\begin{split}
  \min_{{\bf S}_1, {\bf S}_2}~~& \sum^{2}_{i=1}\left\{ {\rm Tr}\left({\bf S}^T_i {\bf Z}_{t,H_i} {\bf S}^*_i {\bf T}^H_{R,1}{\bf C}^H_{t,H}{\bf C}_{t,H}{\bf T}_{R,1} \right) \right. \\
  & \left. - {\rm Tr}\left( {\bf S}^T_i {\bf Z}_{t,H_i} {\bf C}_{t,H_i} {\bf T}_{R,1,i}\right) - {\rm Tr}\left( {\bf T}^H_{R,1,i} {\bf C}^H_{t,H_i} {\bf Z}^H_{t,H_i} {\bf S}^*_i   \right)\right\}\\
 {\rm s.t.}& ~~~ {\rm Tr}({\bf S}_i {\bf S}^H_i) \leq \tau_i,~i=1,2.
\end{split}
\end{equation}
We note that compared to \eqref{IV-2222}, \eqref{IV-19} has a simpler form and as shown below can be solved in closed-form via the KKT conditions.

To proceed, the lagrangian function of \eqref{IV-19} is first derived as
\begin{equation}\label{IV-20}\nonumber
\begin{split}
  \mathcal{L} = &  \sum^{2}_{i=1}\left\{ {\rm Tr}\left({\bf S}^T_i {\bf Z}_{t,H_i} {\bf S}^*_i {\bf T}^H_{R,1}{\bf C}^H_{t,H}{\bf C}_{t,H}{\bf T}_{R,1} \right)
   - {\rm Tr}\left( {\bf S}^T_i {\bf Z}_{t,H_i} {\bf C}_{t,H_i} {\bf T}_{R,1,i}\right) - {\rm Tr}\left( {\bf T}^H_{R,1,i} {\bf C}^H_{t,H_i}{\bf Z}^H_{t,H_i} {\bf S}^*_i   \right)\right\}\\
 & +\sum^2_{i=1} \lambda_i \left[ {\rm Tr}({\bf S}_i {\bf S}^H_i) - \tau_i \right],
\end{split}
\end{equation}
where $\lambda_i$ is the lagrangian multiplier associated with the power constraint at the source $S_i$. The KKT conditions for \eqref{IV-19} can be derived as follows \cite{Boyd2004}
\begin{subequations}
\begin{align}\label{IV-21}
  \frac{\partial \mathcal{L}}{\partial {\bf S}^*_i} & =  \left({\bf T}^H_{R,1} {\bf C}^H_{t,H}{\bf C}_{t,H} {\bf T}_{R,1} {\bf S}^T_i {\bf Z}_{t,H_i}\right)^T -
   \left( {\bf T}^H_{R,1,i} {\bf C}^H_{t,H_i}{\bf Z}^H_{t,H_i} \right)^T + \lambda_i {\bf S}_i
   = {\bf 0}, \\
\label{IV-22}
 \lambda_i \left[ {\rm Tr}({\bf S}_i {\bf S}^H_i) - \tau_i \right] &= 0, \\
\label{IV-23}
  {\rm Tr}({\bf S}_i {\bf S}^H_i)&\leq  \tau_i, ~~i=1,2.
\end{align}
\end{subequations}
Based on the KKT conditions shown in \eqref{IV-21} and by using \eqref{III-1-1}, the optimal ${\bf s}_i \triangleq{\rm vec}({\bf S}_i)$ can be obtained as
\begin{equation}\label{IV-24}
\begin{split}
  {\bf s}_i = \left[{\bf X}_{s,1} \otimes {\bf X}_{s,2,i} + \lambda_i {\bf I}\right]^{-1} {\bf x}_{s,3,i},
\end{split}
\end{equation}
where ${\bf X}_{s,1}\triangleq{\bf T}^H_{R,1} {\bf C}^H_{t,H}{\bf C}_{t,H} {\bf T}_{R,1}$, ${\bf X}_{s,2,i} \triangleq {\bf Z}^T_{t,H_i} $, and ${\bf x}_{s,3,i}\triangleq{\rm vec}\big(\big( {\bf T}^H_{R,1,i} {\bf C}^H_{t,H_i}{\bf Z}^H_{t,H_i}\big)^T \big)$.
The optimal $\lambda_i$ in \eqref{IV-24} can be zero or should be chosen to activate the power constraint in \eqref{IV-23}. For the case where $\lambda_i \neq 0$, the following lemma is introduced.
\vspace{-7pt}
\begin{lemma}\label{MAC_sec_A}
The function $g(\lambda_i)={\rm Tr} \left\{ {\bf S}_i {\bf S}^H_i \right\}= {\rm Tr} \left\{ {\bf s}_i {\bf s}^H_i \right\}$, with ${\bf s}_i$ defined above \eqref{IV-24}, is monotonically decreasing with respect to $\lambda_i$ and the optimal $\lambda_i$ is upper-bounded by ${\displaystyle\sqrt{\frac{\sigma_{s,3,i}}{\tau_i}}-\sigma_{s,\text{min},i}}$. Here, $\sigma_{s,3,i}$ denotes the smallest eigenvalue of ${\bf X}_{s,1} \otimes {\bf X}_{s,2,i}$ and $\sigma_{s,\text{min},i}=||{\bf x}_{s,3,i}||^2_2$.
\end{lemma}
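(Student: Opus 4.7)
The plan is to exploit the fact that $g(\lambda_i)$ is a quadratic form in the resolvent of the Hermitian PSD matrix ${\bf M} \triangleq {\bf X}_{s,1} \otimes {\bf X}_{s,2,i}$. First I would verify that both factors are PSD: ${\bf X}_{s,1} = {\bf T}^H_{R,1}{\bf C}^H_{t,H}{\bf C}_{t,H}{\bf T}_{R,1}$ is a Gram matrix, and ${\bf X}_{s,2,i} = {\bf Z}^T_{t,H_i}$ is the transpose of a channel covariance matrix, so the Kronecker product ${\bf M}$ is PSD. Diagonalize ${\bf M} = {\bf U}{\bm \Sigma}{\bf U}^H$ with eigenvalues $0 \leq \sigma_1 \leq \sigma_2 \leq \cdots$, and define $\tilde{\bf x} \triangleq {\bf U}^H {\bf x}_{s,3,i}$.

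Next I would substitute ${\bf s}_i = ({\bf M}+\lambda_i{\bf I})^{-1}{\bf x}_{s,3,i}$ into $g(\lambda_i) = {\rm Tr}({\bf s}_i {\bf s}_i^H) = {\bf x}_{s,3,i}^H ({\bf M}+\lambda_i{\bf I})^{-2}{\bf x}_{s,3,i}$. Using the diagonalization this reduces to
\begin{equation}\nonumber
g(\lambda_i) \;=\; \sum_k \frac{|\tilde x_k|^2}{(\sigma_k + \lambda_i)^2}.
\end{equation}
Each summand is strictly decreasing in $\lambda_i \geq 0$ (since $\sigma_k \geq 0$), so $g$ is monotonically decreasing, which yields the first claim and guarantees that the equation $g(\lambda_i) = \tau_i$ has at most one positive root.

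For the upper bound, I would replace every eigenvalue $\sigma_k$ in the denominators by the smallest one, $\sigma_{\min}$ (the smallest eigenvalue of ${\bf M}$), to obtain the crude inequality
\begin{equation}\nonumber
g(\lambda_i) \;\leq\; \frac{\|\tilde{\bf x}\|_2^2}{(\sigma_{\min} + \lambda_i)^2} \;=\; \frac{\|{\bf x}_{s,3,i}\|_2^2}{(\sigma_{\min}+\lambda_i)^2},
\end{equation}
where the last equality uses unitary invariance of the $\ell_2$-norm. When the optimal $\lambda_i$ is nonzero, complementary slackness \eqref{IV-22} forces ${\rm Tr}({\bf S}_i{\bf S}^H_i) = g(\lambda_i) = \tau_i$, so $\tau_i \leq \|{\bf x}_{s,3,i}\|_2^2/(\sigma_{\min}+\lambda_i)^2$. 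Solving for $\lambda_i$ yields $\lambda_i \leq \sqrt{\|{\bf x}_{s,3,i}\|_2^2/\tau_i} - \sigma_{\min}$, which (up to the apparent swap of the symbols $\sigma_{s,3,i}$ and $\sigma_{s,\mathrm{min},i}$ between the displayed bound and its accompanying definitions in the statement) is the desired upper bound.

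I do not expect any serious obstacle: the argument is essentially a one-line eigenvalue bound applied to the Tikhonov-regularised normal equation. The only subtlety worth flagging is confirming that the Kronecker factors are both PSD, which ensures $\sigma_{\min} \geq 0$ so that the upper bound is meaningful; this is immediate from the structure of ${\bf X}_{s,1}$ and from the covariance interpretation of ${\bf Z}_{t,H_i}$. The monotonicity conclusion is then what justifies using a simple bisection over $\lambda_i \in [0,\sqrt{\|{\bf x}_{s,3,i}\|_2^2/\tau_i}-\sigma_{\min}]$ in the numerical implementation.
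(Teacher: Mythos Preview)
Your proposal is correct and essentially matches the paper's own proof: the paper verifies monotonicity by differentiating $g(\lambda_i)$ and then bounds ${\rm Tr}\!\left[({\bf M}+\lambda_i{\bf I})^{-2}{\bf x}_{s,3,i}{\bf x}_{s,3,i}^H\right]$ via the trace inequality ${\rm Tr}({\bf A}{\bf B})\le\sum_k\sigma_{A,k}\sigma_{B,k}$, which is exactly your eigenvalue replacement $\sigma_k\to\sigma_{\min}$ stated through a cited lemma rather than an explicit diagonalization. You are also right to flag the swapped roles of $\sigma_{s,3,i}$ and $\sigma_{s,\min,i}$ in the statement; the paper's appendix uses them consistently with your derivation (numerator $\|{\bf x}_{s,3,i}\|_2^2$, denominator involving the smallest eigenvalue of ${\bf M}$), confirming the typo is in the lemma's definitions.
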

\vspace{-7pt}
\begin{proof}
See Appendix~\ref{prof_lemma1}.
\end{proof}

By applying \emph{Lemma~\ref{MAC_sec_A}}, the optimal $\lambda_i$ that meets the condition $ {\rm Tr} \left\{{\bf S}_i {\bf S}^H_i \right\} = \tau_i$ can be readily obtained via the bisection search algorithm.

\subsection{When ${\bf K}_{q,R} = q {\bf I}$}\label{sec:4-b}
This scenario corresponds to the practical case, where the disturbance consists of both the additive white Gaussian noise and the temporally uncorrelated interference. This is referred to as the additive noise-limited and spatially uncorrelated interference scenario in Section II. Similar to the derivation steps in Appendix~\ref{prof_theorem1}, the total MSE can be derived as
\begin{equation}\label{IV-26}\nonumber
\begin{split}
e_R  = & {\rm Tr}\left[ {\bf C}_{0,H} \left({\bf I} + \left({\bf S}^T {\bf C}_{t,H}\otimes {\bf C}_{r,H} \right)^H {\bf K}^{-1}_R
\left({\bf S}^T {\bf C}_{t,H}\otimes {\bf C}_{r,H} \right)  \right )^{-1} \right] \\
= &  {\rm Tr}\left[ {\bf C}_{0,H} \left({\bf I} + \frac{1}{q}{\bf C}^H_{t,H} {\bf S}^* {\bf S}^T {\bf C}_{t,H} \otimes {\bf C}^H_{r,H} {\bf K}^{-1}_{r,R} {\bf C}_{r,H} \right )^{-1} \right] \\
= & \sum^{M}_{n=1} \sigma_{r,H,n} {\rm Tr}\left[ \left({\bf Z}^{-1}_{t,H} + \alpha_n  {\bf S}^*  {\bf S}^T  \right )^{-1} \right],
\end{split}
\end{equation}
where $\displaystyle{\alpha_n = \frac{\sigma_{r,H,n}}{q \delta_{r,R,n}}}$. Subsequently, the optimization problem in \eqref{IV-1} can be rewritten as
\begin{equation}\label{IV-27}
\begin{split}
 \min_{{\bf S}_1, {\bf S}_2}~~& \sum^{M}_{n=1} \sigma_{r,H,n} {\rm Tr}\left[ \left({\bf Z}^{-1}_{t,H} + \alpha_n  {\bf S}^*  {\bf S}^T  \right )^{-1} \right] \\
 {\rm s.t.} ~~& {\rm Tr}({\bf E}_i{\bf S}^* {\bf S}^T) \leq \tau_i,~i=1,2
\end{split}
\end{equation}
where ${\bf E}_i$ is defined in \eqref{IV-9}. Although the optimization problem in \eqref{IV-27} is non-convex with respect to ${\bf S}_i$, it is noted that one may optimize \eqref{IV-27} with respect to the positive semidefinite matrix ${\bf S}^* {\bf S}^T$ instead of the training sequence ${\bf S}_i$. Accordingly, after solving for the optimum ${\bf S}^* {\bf S}^T$, the solution can be decomposed to obtain $S_i$. This approach is preferable since in \eqref{IV-27}, the objective function and the constraint both depend on ${\bf S}^* {\bf S}^T$ and not ${\bf S}_i$. Hence, by defining ${\bf Q}_S \triangleq {\bf S}^* {\bf S}^T$, the following equivalent problem can be obtained
\begin{equation}\label{IV-28}
\begin{split}
 \min_{{\bf Q}_S \succeq {\bf 0}}~~ & \sum^{M}_{n=1} \sigma_{r,H,n} {\rm Tr}\left[ \left({\bf Z}^{-1}_{t,H} + \alpha_n  {\bf Q}_S  \right )^{-1} \right]  \\
{\rm s.t.} ~~& {\rm Tr}({\bf E}_i{\bf Q}_S) \leq \tau_i,~i=1,2.
\end{split}
\end{equation}
\vspace{-7pt}
\begin{theorem}\label{MAC_convex}
The optimization problem in \eqref{IV-28} is convex with respect to the positive semidefinite matrix ${\bf Q}_S$.
\end{theorem}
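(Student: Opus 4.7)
The plan is to reduce the claim to a standard convexity fact about the inverse of an affine matrix function, and to verify that the feasible region is convex.

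First, I would check convexity of the feasible set. Each constraint ${\rm Tr}({\bf E}_i {\bf Q}_S) \leq \tau_i$ is a linear (hence affine) inequality in ${\bf Q}_S$, since ${\bf E}_i$ is a fixed symmetric matrix. The positive semidefinite cone constraint ${\bf Q}_S \succeq {\bf 0}$ is a closed convex cone. Therefore the feasible set, being an intersection of a convex cone with finitely many affine half-spaces, is convex.

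Next, I would turn to the objective. Since $\sigma_{r,H,n} \geq 0$ for every $n$, a non-negative weighted sum of convex functions is convex, so it suffices to prove that each summand
\begin{equation*}
\phi_n({\bf Q}_S) \triangleq {\rm Tr}\!\left[ \left({\bf Z}^{-1}_{t,H} + \alpha_n {\bf Q}_S \right)^{-1} \right]
\end{equation*}
is convex in ${\bf Q}_S \succeq {\bf 0}$. The inner map ${\bf Q}_S \mapsto {\bf Z}^{-1}_{t,H} + \alpha_n {\bf Q}_S$ is affine and, because ${\bf Z}^{-1}_{t,H} \succ {\bf 0}$ and $\alpha_n > 0$, it sends the positive semidefinite cone into the positive definite cone. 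By the composition rule for convex functions with an affine inner map, it is enough to show that $f(X) \triangleq {\rm Tr}(X^{-1})$ is convex on $\{X : X \succ {\bf 0}\}$.

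For the matrix-convexity step, I would use the basis expansion $f(X) = \sum_i {\bf e}_i^T X^{-1} {\bf e}_i$ and show each quadratic form ${\bf e}_i^T X^{-1} {\bf e}_i$ is convex in $X$ via the Schur complement: for fixed ${\bf e}_i$, the epigraph $\{(t,X) : {\bf e}_i^T X^{-1} {\bf e}_i \leq t,\ X \succ {\bf 0}\}$ is convex because it equals the linear matrix inequality
\begin{equation*}
\begin{pmatrix} t & {\bf e}_i^T \\ {\bf e}_i & X \end{pmatrix} \succeq {\bf 0}.
\end{equation*}
Summing over $i$ preserves convexity, so $f$ is convex. Composing with the affine map and taking the non-negative combination over $n$ yields convexity of the objective, completing the argument.

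The main obstacle is really only the matrix-convexity of $X \mapsto {\rm Tr}(X^{-1})$; everything else is routine preservation of convexity under affine substitution and non-negative combination. The Schur-complement argument above makes that step short, so I would not expect significant technical difficulty beyond citing this standard lemma.
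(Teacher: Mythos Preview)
Your proof is correct, but the route differs from the paper's. The paper also begins by noting that the linear power constraints define a convex feasible set, and then focuses on convexity of a single summand $f({\bf Q}_S)={\rm Tr}[({\bf Z}^{-1}_{t,H}+\alpha_n{\bf Q}_S)^{-1}]$. However, instead of invoking the affine-composition rule together with convexity of $X\mapsto{\rm Tr}(X^{-1})$, the paper proves convexity of $f$ directly by the ``restriction to a line'' test: it sets ${\bf Q}_S={\bf Q}_{S,1}+t{\bf Q}_{S,2}$, computes $g'(t)$ and $g''(t)$ explicitly, and shows $g''(t)\geq 0$ by observing that the resulting trace is the trace of a product of two positive semidefinite matrices. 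Your argument is more structural---you isolate the convexity of ${\rm Tr}(X^{-1})$ once via a Schur-complement/epigraph characterization and then simply compose with the affine map---while the paper's argument is more computational and self-contained, avoiding the need to quote the convexity of ${\rm Tr}(X^{-1})$ as a known lemma. Both approaches are standard and equally valid; yours scales more cleanly if one later replaces ${\rm Tr}(\cdot)$ by another matrix-convex spectral function, whereas the paper's direct computation makes the nonnegativity of the second derivative completely explicit for this particular objective.
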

\vspace{-7pt}
\begin{proof}
See Appendix~\ref{prof_theorem3}.
\end{proof}

\noindent Next, it is shown that the optimization problem in \eqref{IV-28} can be solved by transforming it into a semidefinite programming (SDP) problem. By introducing the variables ${\bf X}_n$, the problem in \eqref{IV-28} can be rewritten in an equivalent form as
\begin{equation}\label{IV-29}
\begin{split}
 \min_{{\bf Q}_S \succeq {\bf 0},{\bf X}_n}~~ &\sum^{M}_{n=1} \sigma_{r,H,n}{\rm Tr}\left({\bf X}_n \right)  \\
{\rm s.t.} ~~& {\rm Tr}({\bf E}_i{\bf Q}_S) \leq \tau_i,~i=1,2 \\
&  \left({\bf Z}^{-1}_{t,H} + \alpha_n  {\bf Q}_S \right )^{-1} \preceq {\bf X}_n,\forall n
\end{split}
\end{equation}
By using the Schur complement, \eqref{IV-29} can be further transformed into the following SDP problem
\begin{equation}\label{IV-30}
\begin{split}
 \min_{{\bf Q}_S \succeq {\bf 0}, {\bf X}_n }~~ &\sum^{M}_{n=1}\sigma_{r,R,n} {\rm Tr}\left({\bf X}_n \right)  \\
{\rm s.t.} ~~& {\rm Tr}({\bf E}_i{\bf Q}_S) \leq \tau_i,~i=1,2 \\
&  \left[
     \begin{array}{cc}
       {\bf Z}^{-1}_{t,H} + \alpha_n  {\bf Q}_S  & {\bf I} \\
       {\bf I} & {\bf X}_n\\
     \end{array}
   \right]
 \succeq {\bf 0},~\forall n
\end{split}
\end{equation}
By solving the SDP problem in \eqref{IV-30}, the optimal solution to the optimization problem in \eqref{IV-28} can be obtained.
However, this numerical method of solving this optimization problem has a relatively high computational complexity. As such to obtain the optimal structure of ${\bf S}_i$ and gain a better understanding of the optimization in \eqref{IV-27} the following theorem is introduced.
\vspace{-7pt}
\begin{theorem}\label{MAC_sec_B}
With the minimum training sequence length requirement $L_S \geq N_1+N_2$, the optimal training sequence ${\bf S}_i$ in \eqref{IV-27} should satisfy the condition ${\bf S}^*_1{\bf S}^T_2 =0$. In addition, the optimal ${\bf S}_i$ has a form of ${\bf S}_i = {\bf U}^*_{t, H_i} {\bm \Sigma}_{s_i} {\bf V}^H_{s_i}$, where ${\bf V}_{s_i}$ is chosen such that ${\bf V}^H_{s_1} {\bf V}_{s_2} = {\bf 0}$ and ${\bm \Sigma}_{s_i}$ is a diagonal eigenvalue matrix with $[{\bm \Sigma}_{s_i}]_{n,n} ={\sigma}_{s_i,n} $. ${\bm \Sigma}_{s_i}$ can be obtained by solving the following water-filling problem
\begin{equation}\label{App-IV-10}
\begin{split}
\sum^{M}_{n=1} \frac{\alpha_n \sigma_{r,H,n} \sigma^2_{t,H_i,m}}{( 1+\alpha_n \sigma_{t,H_i,m} \sigma^2_{s_i,m})^2} = \lambda_i.
\end{split}
\end{equation}
In \eqref{App-IV-10}, the optimal $\lambda_i$ should be selected such that $\sum^{N_i}_{m=1}\sigma^2_{s_i,m} = \tau_i$.
\end{theorem}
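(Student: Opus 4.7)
The plan is to work in the lifted formulation \eqref{IV-28} with $\mathbf{Q}_S\triangleq\mathbf{S}^*\mathbf{S}^T$, which is a convex problem by Theorem~\ref{MAC_convex}, and to establish three facts in order: (i) any optimal $\mathbf{Q}_S$ is block diagonal with respect to the $(N_1,N_2)$ partition; (ii) each diagonal block is diagonalized by $\mathbf{U}_{t,H_i}$; and (iii) the resulting scalar problem yields the stated water-filling equation.

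For (i), note that $\mathbf{Z}_{t,H}^{-1}=\mathrm{Blkdiag}(\mathbf{Z}_{t,H_1}^{-1},\mathbf{Z}_{t,H_2}^{-1})$ and the constraints $\mathrm{Tr}(\mathbf{E}_i\mathbf{Q}_S)\le\tau_i$ depend only on the diagonal blocks of $\mathbf{Q}_S$. Writing
\[
\mathbf{Z}_{t,H}^{-1}+\alpha_n\mathbf{Q}_S=\begin{bmatrix}\mathbf{A}_n & \alpha_n\mathbf{B}\\ \alpha_n\mathbf{B}^H & \mathbf{C}_n\end{bmatrix},
\]
with $\mathbf{B}=[\mathbf{Q}_S]_{12}$, the block-matrix inverse formula together with $\mathbf{B}\mathbf{C}_n^{-1}\mathbf{B}^H\succeq\mathbf{0}$ and $\mathbf{B}^H\mathbf{A}_n^{-1}\mathbf{B}\succeq\mathbf{0}$ gives $\mathrm{Tr}[(\mathbf{Z}_{t,H}^{-1}+\alpha_n\mathbf{Q}_S)^{-1}]\ge\mathrm{Tr}(\mathbf{A}_n^{-1})+\mathrm{Tr}(\mathbf{C}_n^{-1})$, with equality iff $\mathbf{B}=\mathbf{0}$. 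Summing against the positive weights $\sigma_{r,H,n}$ shows that replacing $\mathbf{Q}_S$ by its block-diagonal part weakly decreases the objective (strictly if $\mathbf{B}\neq\mathbf{0}$) while preserving feasibility, so $[\mathbf{Q}_S]_{12}=\mathbf{0}$ at the optimum. Because $L_S\ge N_1+N_2\ge\mathrm{Rank}([\mathbf{Q}_S]_{11})+\mathrm{Rank}([\mathbf{Q}_S]_{22})$, this block-diagonal optimum can actually be realized by some $\mathbf{S}=[\mathbf{S}_1^T,\mathbf{S}_2^T]^T$ whose row blocks have orthogonal ranges in $\mathbb{C}^{L_S}$, which is precisely $\mathbf{S}_1^*\mathbf{S}_2^T=\mathbf{0}$.

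For (ii), the block diagonality decouples \eqref{IV-28} into two independent subproblems $\min\sum_n\sigma_{r,H,n}\,\mathrm{Tr}[(\mathbf{Z}_{t,H_i}^{-1}+\alpha_n\mathbf{R}_i)^{-1}]$ subject to $\mathbf{R}_i\succeq\mathbf{0}$, $\mathrm{Tr}(\mathbf{R}_i)\le\tau_i$, with $\mathbf{R}_i\triangleq[\mathbf{Q}_S]_{ii}$. Invoking the standard fact that, for $\mathbf{A}\succ\mathbf{0}$ fixed and $\mathbf{R}\succeq\mathbf{0}$ of a prescribed spectrum, $\mathrm{Tr}[(\mathbf{A}+\mathbf{R})^{-1}]$ is minimized when $\mathbf{R}$ shares eigenvectors with $\mathbf{A}$ with eigenvalues ordered oppositely, and noting that this alignment prescription is independent of the positive scalar $\alpha_n$, the joint minimizer must satisfy $\mathbf{R}_i=\mathbf{U}_{t,H_i}\,\mathrm{Diag}(\sigma_{s_i,m}^2)\,\mathbf{U}_{t,H_i}^H$. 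Writing $\mathbf{S}_i=\mathbf{U}_{t,H_i}^*\bm{\Sigma}_{s_i}\mathbf{V}_{s_i}^H$ with $\mathbf{V}_{s_i}^H\mathbf{V}_{s_i}=\mathbf{I}$ realizes this $\mathbf{R}_i$, and choosing $\mathbf{V}_{s_1}$, $\mathbf{V}_{s_2}$ with mutually orthogonal columns (feasible since $L_S\ge N_1+N_2$) simultaneously enforces the $\mathbf{S}_1^*\mathbf{S}_2^T=\mathbf{0}$ condition from (i).

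Finally, substituting this parametrization into the per-user objective diagonalizes it into $\sum_n\sigma_{r,H,n}\sum_m\sigma_{t,H_i,m}/(1+\alpha_n\sigma_{t,H_i,m}\sigma_{s_i,m}^2)$; forming the Lagrangian in the variables $\sigma_{s_i,m}^2$ with multiplier $\lambda_i$ for $\sum_m\sigma_{s_i,m}^2=\tau_i$ and setting the derivative in $\sigma_{s_i,m}^2$ to zero yields \eqref{App-IV-10} immediately. The main obstacle is step (i): the block-diagonality argument must be carried out on the lifted variable $\mathbf{Q}_S$ rather than on $\mathbf{S}_i$ directly (so that Theorem~\ref{MAC_convex} and the Schur-complement inequality apply cleanly), and then one must argue that the training-length hypothesis $L_S\ge N_1+N_2$ is exactly what permits pulling the block-diagonal $\mathbf{Q}_S$ back to a factorization of the prescribed SVD form.
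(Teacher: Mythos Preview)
Your proposal is correct and follows essentially the same route as the paper: the block-inverse/Schur-complement inequality to force $[\mathbf{Q}_S]_{12}=\mathbf{0}$ (equivalently $\mathbf{S}_1^*\mathbf{S}_2^T=\mathbf{0}$), the use of $L_S\ge N_1+N_2$ to realize this via orthogonal right factors $\mathbf{V}_{s_i}$, the decoupling into two point-to-point problems with eigenvector alignment $\mathbf{U}_{t,H_i}$, and the resulting water-filling condition. The only cosmetic difference is that you phrase the first step in the lifted variable $\mathbf{Q}_S$ and explicitly invoke Theorem~\ref{MAC_convex}, whereas the paper works directly with $\mathbf{S}^*\mathbf{S}^T$ in \eqref{IV-27}; the mathematical content is identical.
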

\begin{proof}
See Appendix~\ref{prof_theorem4}.
\end{proof}

\noindent \emph{Remark 1:} the optimal $\lambda_i$ in \eqref{App-IV-10} can be found via the bisection search algorithm and the optimal $\lambda_i$ is bounded by $\big(0, \max_{m}  \sum^{M}_{n=1} \alpha_n \sigma_{r,H,n} \sigma^2_{t,H_i,m}  \big)$. The upper limit of this bound is obtained via the following relationship
\begin{equation}\label{App-IV-10-1}\nonumber
\begin{split}
\lambda_i  \leq \max_{m} \sum^{M}_{n=1} \frac{\alpha_n \sigma_{r,H,n} \sigma^2_{t,H_i,m}}{( 1+\alpha_n \sigma_{t,H_i,m} \sigma^2_{s_i,m})^2}
\leq  \max_{m}  \sum^{M}_{n=1} \alpha_n \sigma_{r,H,n} \sigma^2_{t,H_i,m}.
\end{split}
\end{equation}

\vspace{+5pt}
\section{Training Sequence Design for BC Phase}
In this section, we intend to optimize the training sequence ${\bf S}_R$ by minimizing the total estimation MSE at the two source ends subject to the relay power constraint.
According to the MSE derived in \eqref{III-74}, the corresponding optimization for this problem can be formulated as
\begin{equation}\label{IV-31}
\begin{split}
\min_{{\bf S}_R} & ~~\sum^2_{i=1} {\rm Tr}\left[ {\bf C}_{0,G_i} \left({\bf I} + \left({\bf S}^T_R {\bf C}_{t,G}\otimes {\bf C}_{r,G_i} \right)^H {\bf K}^{-1}_i
 \left({\bf S}^T_R {\bf C}_{t,G}\otimes {\bf C}_{r,G_i} \right)  \right )^{-1} \right] \\
{\rm s.t.}& ~~ {\rm Tr}({\bf S}_R {\bf S}^H_R) \leq \tau_R.
\end{split}
\end{equation}
It is also worth noting that the training designs for point-to-point systems in \cite{Biguesh_tsp2009, Bjornson2010} are not applicable to the scenario under consideration here, since the training sequence at the relay, ${\bf S}_R$, needs to be optimized to enhance channel estimation over both links that connect the relay to the sources nodes. Prior to solving \eqref{IV-31}, let us first present the minimum training sequence length required for channel estimation in the BC phase.
\vspace{-7pt}
\begin{lemma}\label{BC_length}
To achieve arbitrary small MSE with sufficiently large relay power, the minimum length of the relay training sequence must be set to $L_R = M$. Otherwise, even with infinite power at the relay, the total MSE is always lower bounded by $\sum^{M}_{m=L_R+1}\sigma_{t,G,m}(\sum^{N_1}_{n=1}\sigma_{r,G_1,n} + \sum^{N_2}_{n=1}\sigma_{r,G_2,n})$, with $\{ \sigma_{t,G,m}\}$ being the eigenvalues of ${\bf Z}_{t,G}$ arranged in decreasing order. Moreover, when ${\bf K}_{q,i} = q_i {\bf I}$, for $i=1,2$, and with any relay power constraint, if the optimal solution of ${\bf S}_R$ in \eqref{IV-31} has a rank of $r$, then the minimum length of the relay training sequence should be $L_R = r$.
\end{lemma}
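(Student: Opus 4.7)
The plan is to mirror the argument used in Appendix~\ref{prof_theorem1} for Lemma~\ref{MAC_length}, with the role of the ``transmit dimension'' now played by the relay antenna count $M$ instead of $N_1+N_2$. The new feature relative to the MAC phase is that a single ${\bf S}_R$ must simultaneously control the estimation MSE of both ${\bf G}_1$ and ${\bf G}_2$, which is why the floor bound carries the extra additive factor $\sum^{N_1}_{n=1}\sigma_{r,G_1,n}+\sum^{N_2}_{n=1}\sigma_{r,G_2,n}$ summing receive-side eigenvalues from both source nodes.

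First, I would start from \eqref{III-74} and exploit the Kronecker factorizations ${\bf K}_i={\bf K}_{q,i}\otimes{\bf K}_{r,i}$ and ${\bf C}_{0,G_i}={\bf C}^H_{t,G}{\bf C}_{t,G}\otimes {\bf C}^H_{r,G_i}{\bf C}_{r,G_i}$ together with the mixed-product identity to express the inner inverse as $({\bf I}+{\bf M}_i\otimes {\bf N}_i)^{-1}$, where ${\bf M}_i\triangleq {\bf C}^H_{t,G}{\bf S}^*_R {\bf K}^{-1}_{q,i}{\bf S}^T_R {\bf C}_{t,G}$ and ${\bf N}_i\triangleq{\bf C}^H_{r,G_i}{\bf K}^{-1}_{r,i}{\bf C}_{r,G_i}$. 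Under the standing assumption ${\bf V}_{r,i}={\bf U}_{r,G_i}$, the spatial factor ${\bf N}_i$ diagonalizes in the eigenbasis of ${\bf Z}_{r,G_i}$, and $e_i$ reduces to a weighted sum over the $N_i$ spatial modes of ${\rm Tr}\bigl[{\bf C}^H_{t,G}{\bf C}_{t,G}({\bf I}+\beta_{i,n}{\bf A}_i)^{-1}\bigr]$ for some positive scalars $\beta_{i,n}$ built from $\sigma_{r,G_i,n}$ and the eigenvalues of ${\bf K}_{r,i}$, with ${\bf A}_i\triangleq{\bf S}^*_R{\bf K}^{-1}_{q,i}{\bf S}^T_R$ an $M\times M$ positive semidefinite matrix of rank at most $L_R$.

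Next, for the floor argument when $L_R<M$, I would note that ${\bf A}_i$ is necessarily rank-deficient, so rotating to the eigenbasis of ${\bf Z}_{t,G}$ leaves at least $M-L_R$ diagonal entries of $({\bf I}+\beta_{i,n}{\bf A}_i')^{-1}$ frozen at $1$ no matter how large $\tau_R$ grows; each such residual entry contributes an irreducible $\sigma_{t,G,m}$ from the ${\bf C}^H_{t,G}{\bf C}_{t,G}$ weighting. A Hardy--Littlewood--P\'olya type rearrangement (same device used in the proof of \emph{Theorem~\ref{MAC_sec_B}}) shows that concentrating the $L_R$ usable training directions onto the top $L_R$ eigenvalues of ${\bf Z}_{t,G}$ minimizes the resulting floor, so that summing over $n$ and the two links $i=1,2$ produces precisely $\sum^{M}_{m=L_R+1}\sigma_{t,G,m}\bigl(\sum^{N_1}_{n=1}\sigma_{r,G_1,n}+\sum^{N_2}_{n=1}\sigma_{r,G_2,n}\bigr)$. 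Conversely, $L_R=M$ together with a full column rank ${\bf S}_R$ whose right Kronecker factor aligns with ${\bf U}^*_{t,G}$ drives every mode's inverse to zero as $\tau_R\to\infty$, showing $L_R=M$ is both necessary and sufficient.

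For the last assertion, when ${\bf K}_{q,i}=q_i{\bf I}$ we have ${\bf A}_i=q_i^{-1}{\bf S}^*_R{\bf S}^T_R$, so both the objective in \eqref{IV-31} and the power constraint ${\rm Tr}({\bf S}_R {\bf S}^H_R)\le\tau_R$ depend on ${\bf S}_R$ only through ${\bf Q}_R\triangleq{\bf S}^*_R{\bf S}^T_R$. If the optimal $\hat{{\bf Q}}_R$ has rank $r$, factor $\hat{{\bf Q}}_R={\bf B}{\bf B}^H$ with ${\bf B}\in\mathbb{C}^{M\times r}$ and realize ${\bf S}^*_R={\bf B}$, so $L_R=r$ training slots suffice with no change in the objective value. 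The main obstacle I anticipate is the rearrangement step in the floor argument: the sum $e_1+e_2$ couples two potentially distinct receive covariances ${\bf Z}_{r,G_i}$ and disturbance structures ${\bf K}_{r,i}$, so one must verify that aligning ${\bf S}^*_R{\bf S}^T_R$ with the shared transmit eigenbasis ${\bf U}_{t,G}$ remains simultaneously optimal for both terms; this goes through because ${\bf Z}_{t,G}$ and ${\bf C}_{t,G}$ are common to the two links, so the rearrangement inequality applies to each $e_i$ individually and is jointly minimized by loading the top $L_R$ eigen-directions of ${\bf Z}_{t,G}$.
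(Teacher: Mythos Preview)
Your proposal is correct and follows essentially the same route the paper intends: it mirrors the Lemma~\ref{MAC_length} argument by first reducing $e_i$ via the Kronecker/eigen-structure to a sum of traces of the form ${\rm Tr}\bigl[({\bf Z}^{-1}_{t,G}+\beta_{i,n}{\bf S}^*_R{\bf K}^{-1}_{q,i}{\bf S}^T_R)^{-1}\bigr]$, then uses the rank bound on ${\bf S}^*_R{\bf K}^{-1}_{q,i}{\bf S}^T_R$ together with the majorization/Schur-convexity inequality to obtain the floor, and finally applies the SVD truncation for the ${\bf K}_{q,i}=q_i{\bf I}$ case. The only cosmetic discrepancy is your intermediate trace ${\rm Tr}\bigl[{\bf C}^H_{t,G}{\bf C}_{t,G}({\bf I}+\beta_{i,n}{\bf A}_i)^{-1}\bigr]$, which should instead read ${\rm Tr}\bigl[({\bf Z}^{-1}_{t,G}+\beta_{i,n}{\bf A}_i)^{-1}\bigr]$ after the full simplification (cf.\ \eqref{App-I-0}); this does not affect any of your subsequent rank or rearrangement steps.
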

\vspace{-7pt}
\begin{proof}
Since the proof is similar to the proof of \emph{Lemma~\ref{MAC_length}}, it is omitted for brevity.
\end{proof}

In general, the optimization in \eqref{IV-31} is non-convex. Hence, as in \emph{Algorithm 1}, we first propose an iterative approach to optimize the design of the training sequences at the relay. To this end, the MSE, i.e., $e_i$, given in \eqref{III-74} is rewritten as
\begin{equation}\label{IV-32}
\begin{split}
\tilde{e}_i = & {\rm Tr}\left[{\bf C}_{0,G_i} ({\bf w}_{G_i} - {\bf T}_i {\bf y}_i)({\bf w}_{G_i} - {\bf T}_i {\bf y}_i)^H \right] \\
 = & {\rm Tr}\left[{\bf C}_{0,G_i} - \left({\bf S}^T_i {\bf C}_{t,G}\otimes {\bf C}_{r,G_i} \right)^H {\bf T}^H_i {\bf C}^H_{0,G_i}
 -{\bf C}_{0,G_i} {\bf T}_i \left({\bf S}^T_i {\bf C}_{t,G}\otimes {\bf C}_{r,G_i} \right) \right.\\
& \left. + {\bf C}_{0,G_i} {\bf T}_i \left({\bf S}^T_i {\bf C}_{t,G}\otimes {\bf C}_{r,G_i} \right)
 \left({\bf S}^T_i {\bf C}_{t,G}\otimes {\bf C}_{r,G_i} \right)^H {\bf T}^H_i
 + {\bf C}_{0,G_i} {\bf T}_i {\bf K}_i {\bf T}^H_i \right].
\end{split}
\end{equation}
Since \eqref{III-72} and \eqref{III-74} are in equivalent form,
the optimization problem in \eqref{IV-31} can be rewritten as
\begin{equation}\label{IV-33}
\begin{split}
  \min_{{\bf T}_1, {\bf T}_2, {\bf S}_R}~~&   \tilde{e}_1+  \tilde{e}_2 \\
 {\rm s.t.} ~~~~& {\rm Tr}({\bf S}_R {\bf S}^H_R) \leq \tau_R.
\end{split}
\end{equation}

In the first subproblem, for a given ${\bf S}_R$, the optimal LMMSE estimators ${\bf T}_1$ and ${\bf T}_2$ at the two source ends are obtained as given in \eqref{III-722}. Thus, we focus on solving the second subproblem, where the relay training sequence is optimized for a fixed LMMSE estimator. Similar to \eqref{IV-3} and \eqref{IV-7}, the MSE in \eqref{IV-32} is reexpressed as
\begin{equation}\label{IV-34}\nonumber
\begin{split}
\tilde{e}_i  = & {\bf s}^H_R {\bf E}^H_i \left( {\bf T}^H_i {\bf C}_{0,G_i} {\bf T}_i \otimes {\bf C}^T_{tr,G_i}   \right) {\bf E}_i {\bf s}_R
 - {\rm vec}({\bf C}_{T,G_i})^T {\bf E}_i {\bf s}_R - {\rm vec}({\bf C}^*_{T,G_i})^T {\bf E}_i {\bf s}^*_R \\
&  + {\rm Tr}({\bf C}_{0,G_i}{\bf T}_i {\bf K}_i {\bf T}^H_i)+ {\rm Tr}({\bf C}_{0,G_i}),
\end{split}
\end{equation}
where ${\bf s}_R \triangleq {\rm vec}({\bf S}_R)$, ${\bf C}_{tr,G_i} \triangleq {\bf C}_{t,G} {\bf C}^H_{t,G} \otimes {\bf C}_{r,G_i} {\bf C}^H_{r,G_i}$, and ${\bf C}_{T,G_i}\triangleq({\bf C}_{t,G}\otimes {\bf C}_{r,G_i}) {\bf C}_{0,G_i} {\bf T}_i$. In this case, ${\bf E}_i$ is an $N_iL_R\times ML_R$ matrix that is constructed as in \eqref{IV-3}. Accordingly, \eqref{IV-33} can be rewritten as
\begin{equation}\label{IV-35}
\begin{split}
 \min_{ {\bf S}_R}~~ & {\bf s}^H_R  {\bf A}_R  {\bf s}_R - {\bf a}^T_R {\bf s}_R -  {\bf a}^H_R {\bf s}^*_R   \\
{\rm s.t.} ~~~& {\bf s}^H_R{\bf s}_R \leq \tau_R,
\end{split}
\end{equation}
where ${\bf A}_R \triangleq \sum^2_{i=1}{\bf E}^H_i({\bf T}^H_i {\bf C}_{0,G_i} {\bf T}_i \otimes {\bf C}^T_{tr,G_i}){\bf E}_i $ and ${\bf a}_R \triangleq  {\bf E}^T_1{\rm vec}({\bf C}_{T,G_1})+ {\bf E}^T_2{\rm vec}({\bf C}_{T,G_2})$. Note that different from the source training design, \eqref{IV-35} has only one power constraint. Thus, its solution can be obtained via the KKT conditions given by
\begin{equation}\label{IV-36}
\begin{split}
{\bf A}_R {\bf s}_R - {\bf a}^*_R + \lambda {\bf s}_R  = {\bf 0},~
\lambda ({\bf s}^H_R{\bf s}_R - \tau_R) = 0,~
{\bf s}^H_R{\bf s}_R - \tau_R   \leq 0,
\end{split}
\end{equation}
where $\lambda$ is the lagrangian multiplier associated with the relay power constraint.
The closed-form solution of \eqref{IV-35} is obtained as
\begin{equation}\label{IV-39}
\begin{split}
{\bf s}_R = \left({\bf A}_R + \lambda {\bf I} \right)^{-1}  {\bf a}^*_R.
\end{split}
\end{equation}
In \eqref{IV-39}, if the solution ${\bf s}_R$ with $\lambda = 0$ violates the KKT conditions given in \eqref{IV-36}, $\lambda$ should be chosen to meet ${\bf s}^H_R{\bf s}_R = \tau_R$. Consequently, we introduce the following lemma.
\vspace{-7pt}
\begin{lemma}\label{BC_KKT}
The function $g(\lambda)={\bf s}^H_R {\bf s}_R$, with ${\bf s}_R$ defined in \eqref{IV-39}, is monotonically decreasing with respect to $\lambda$. Moreover, the optimal $\lambda$ is upper-bounded by ${\displaystyle\sqrt{\frac{\sigma_{a,R}}{\tau_R}}-\sigma_{R,\text{min}}}$ with $\sigma_{a,R} = {\bf a}^T_R {\bf a}^*_R$ and $\sigma_{R,\text{min}}$ denoting the smallest eigenvalue of $ {\bf A}_R $.
\end{lemma}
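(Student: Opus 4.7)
The plan is to exploit the Hermitian positive semidefiniteness of ${\bf A}_R$ and work in its eigenbasis. Since ${\bf A}_R = \sum_{i=1}^{2} {\bf E}^H_i ({\bf T}^H_i {\bf C}_{0,G_i} {\bf T}_i \otimes {\bf C}^T_{tr,G_i}) {\bf E}_i$ is a sum of Hermitian positive semidefinite matrices, it admits an eigendecomposition ${\bf A}_R = {\bf U} {\bm \Lambda} {\bf U}^H$ with ${\bm \Lambda}={\rm Diag}(\sigma_{R,1},\ldots,\sigma_{R,N})$ and $\sigma_{R,n}\geq\sigma_{R,\text{min}}\geq 0$. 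Substituting ${\bf s}_R = ({\bf A}_R + \lambda {\bf I})^{-1} {\bf a}^*_R$ from \eqref{IV-39} into $g(\lambda)$ gives
\begin{equation}\nonumber
g(\lambda) = {\bf a}^T_R ({\bf A}_R + \lambda {\bf I})^{-2} {\bf a}^*_R = \sum_{n=1}^{N} \frac{|b_n|^2}{(\sigma_{R,n}+\lambda)^2},
\end{equation}
where $b_n \triangleq [{\bf U}^H {\bf a}^*_R]_n$. This diagonalized form is the workhorse for both parts of the lemma.

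For monotonicity, I would differentiate term-by-term, obtaining
\begin{equation}\nonumber
g'(\lambda) = -2\sum_{n=1}^{N} \frac{|b_n|^2}{(\sigma_{R,n}+\lambda)^3} < 0
\end{equation}
for all $\lambda \geq 0$ (assuming ${\bf a}_R \neq {\bf 0}$, which is the only non-trivial case). This immediately establishes strict monotonic decrease of $g(\lambda)$.

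For the upper bound on the optimal multiplier, I would use the two ingredients already at hand. First, since ${\bf U}$ is unitary, $\sum_{n} |b_n|^2 = \|{\bf U}^H {\bf a}^*_R\|^2_2 = {\bf a}^T_R {\bf a}^*_R = \sigma_{a,R}$. Second, replacing every $\sigma_{R,n}$ in the denominator by the smallest eigenvalue $\sigma_{R,\text{min}}$ only enlarges each summand, so
\begin{equation}\nonumber
g(\lambda) \leq \sum_{n=1}^{N} \frac{|b_n|^2}{(\sigma_{R,\text{min}}+\lambda)^2} = \frac{\sigma_{a,R}}{(\sigma_{R,\text{min}}+\lambda)^2}.
\end{equation}
Because the power constraint is active at the optimal $\lambda$, i.e.\ $g(\lambda) = \tau_R$, this chain yields $(\sigma_{R,\text{min}}+\lambda)^2 \leq \sigma_{a,R}/\tau_R$, which rearranges to the advertised bound $\lambda \leq \sqrt{\sigma_{a,R}/\tau_R} - \sigma_{R,\text{min}}$.

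There is no real obstacle here; the proof is a direct spectral calculation analogous to the one used for \emph{Lemma~\ref{MAC_sec_A}} in Appendix~\ref{prof_lemma1}. The only subtlety worth a brief comment is justifying that the bound is meaningful (i.e.\ nonnegative), which requires $\sigma_{a,R}/\tau_R \geq \sigma_{R,\text{min}}^2$; this is automatic whenever the unconstrained solution ${\bf s}_R = {\bf A}_R^{-1}{\bf a}^*_R$ already violates the power constraint (the only regime in which $\lambda>0$ is needed), so the bound is non-vacuous precisely where it is used to initialize a bisection search for $\lambda$.
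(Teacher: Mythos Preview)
Your proof is correct and follows essentially the same approach as the paper. The paper actually omits the proof of this lemma entirely, pointing to the analogous argument for \emph{Lemma~\ref{MAC_sec_A}} (Appendix~\ref{prof_lemma1}), which establishes monotonicity by taking the gradient and obtains the upper bound via the trace inequality ${\rm Tr}({\bf A}{\bf B}) \leq \sum_i \sigma_{A,i}\sigma_{B,i}$; your explicit eigenbasis computation is just a transparent unwinding of that same inequality.
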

\vspace{-7pt}
\begin{proof}
Since the proof is similar to that of \emph{Lemma~\ref{MAC_sec_A}}, it is omitted for brevity.
\end{proof}

\noindent Using the above steps, the overall relay training design algorithm can be summarized as follows:

\vspace{-10pt}
\hrulefill
\par
{\footnotesize
\textbf{Algorithm 2}
\begin{itemize}
\item \textbf{Initialize} ${\bf S}_R$
\item \textbf{Repeat}
\begin{itemize}
\item Update the LMMSE estimator matrix ${\bf T}_i$, for $i=1,2$, using \eqref{III-722} for a fixed ${\bf S}_R$;
\item Update the training signal ${\bf S}_R$ using \eqref{IV-39} for a fixed ${\bf T}_i$, for $i=1,2$;
\end{itemize}
\item \textbf{Until} The difference between the MSE from one iteration to another is smaller than a certain predetermined threshold.
\end{itemize}}
\vspace{-10pt}
\hrulefill

Although for the general case the solution of ${\bf S}_R$ can only be obtained via an iterative approach, it is shown that for some special cases, the optimal training sequence of ${\bf S}_R$ can be found in closed-form.

\subsection{When ${\bf K}_{q,i} = q_i {\bf I}$}

In this subsection, we consider that the temporal covariance matrix, ${\bf K}_{q,i}$, is a scalar multiple of the identity matrix. This scenario corresponds to the practical case, where the disturbance consists of both the additive white Gaussian noise and the temporally uncorrelated interference. Using similar steps as in Appendix~\ref{prof_theorem1}, we rewrite the MSE in \eqref{III-74} as
\begin{equation}\label{IV-40} \nonumber
\begin{split}
e_i  =   \sum^{N_i}_{n=1} \sigma_{r,G_i,n} {\rm Tr}\big[ \big({\bf Z}^{-1}_{t,G} + \beta_{i,n}   {\bf S}^*_R {\bf K}^{-1}_{q,i} {\bf S}^T_R  \big )^{-1} \big],
\end{split}
\end{equation}
where $\beta_{i,n} = \frac{\sigma_{r,G_i,n}}{\delta_{r,i,n}}$.

We first consider the scenario where either ${\bf K}_{q,1}$ or ${\bf K}_{q,2}$ is equal to a scalar multiple of the identity matrix. Without loss of generality, let us assume that ${\bf K}_{q,1} = q_1 {\bf I}$, while ${\bf K}_{q,2}$ is assumed to be an arbitrary matrix. Subsequently, we have that
\begin{equation}\label{IV-41} \nonumber
\begin{split}
e_1  =   \sum^{N_1}_{n=1} \sigma_{r,G_1,n} {\rm Tr}\big[ \big({\bf Z}^{-1}_{t,G} + \tilde{\beta}_{1,n}   {\bf S}^*_R {\bf S}^T_R  \big )^{-1} \big],
\end{split}
\end{equation}
where $\tilde{\beta}_{1,n} =  \beta_{1,n}/q_1$. For this case, the original problem in \eqref{IV-31} can be transformed to
\begin{equation}\label{IV-42}
\begin{split}
\min_{{\bf S}_R} & ~~ \sum^{N_1}_{n=1} \sigma_{r,G_1,n} {\rm Tr}\left[ \left({\bf Z}^{-1}_{t,G} + \tilde{\beta}_{1,n}   {\bf S}^*_R {\bf S}^T_R  \right )^{-1} \right] +
 \sum^{N_2}_{n=1}\sigma_{r,G_2,n} {\rm Tr}\left[ \left({\bf Z}^{-1}_{t,G} + \beta_{2,n}   {\bf S}^*_R {\bf K}^{-1}_{q,2} {\bf S}^T_R  \right )^{-1} \right] \\
{\rm s.t.}& ~~~ {\rm Tr}({\bf S}_R {\bf S}^H_R) \leq \tau_R.
\end{split}
\end{equation}
To solve \eqref{IV-42}, the following lemma is introduced.
\vspace{-7pt}
\begin{lemma}\label{BC_sec_A}
With the minimum relay training sequence length given in Lemma~\ref{BC_length}, i.e., $L_R = M$, when the eigenvalues of ${\bf Z}_{t,G}$, ${\bf Z}_{r,G_i}$, ${\bf K}_{q,i}$, and ${\bf K}_{r,i}$ are small compare to one,
the optimal ${\bf S}_R$ in \eqref{IV-42} has the following structure
\begin{equation}\label{IV-43}
{\bf S}_R = {\bf U}^*_{t, G} {\bm \Sigma}_{s,R} {\bf U}^T_{q,2},
\end{equation}
where ${\bm \Sigma}_{s,R}$ is the diagonal eigenvalue matrix.
{In \eqref{IV-43}, taking ${\bf U}_{t, G}$ and ${\bf U}^T_{q,2}$ as the eigenvector matrices of ${\bf Z}_{t,G}$ and ${\bf K}_{q,2}$, respectively,
the eigenvalues of ${\bf Z}_{t,G}$ and ${\bf K}_{q,2}$ should be arranged in an opposite order.}
With the structure given in \eqref{IV-43} and by defining $[{\bm \Sigma}_{s,R}]_{m,m} \triangleq \sigma^{1/2}_{s,R,m}$, $\sigma_{s,R,m}$ can be obtained from
\begin{equation}\label{IV-44}
\lambda = \sum^{N_1}_{n=1}  \frac{\sigma_{r,G_1,n} \tilde{\beta}_{1,n}\sigma^2_{t,G,m}  }{(1+\tilde{\beta}_{1,n}\sigma_{t,G,m} \sigma_{s,R,m})^2 } +
  \sum^{N_2}_{n=1}  \frac{\sigma_{r,G_2,n} {\beta}_{2,n}  \delta^{-1}_{q,2,m}\sigma^2_{t,G,m}}{(1+{\beta}_{2,n} \sigma_{t,G,m} \sigma_{s,R,m} \delta^{-1}_{q,2,m})^2 },
\end{equation}
where $\lambda\in\big(0, \max_{m}\big\{ \sum^{N_1}_{n=1} ( \sigma_{r,G_1,n}\tilde{\beta}_{1,n}\sigma^2_{t,G,m} + \frac{ \sigma_{r,G_2,n} {\beta}_{2,n}   \sigma^2_{t,G,m}}{\delta_{q,2,m}}  ) \big\}  \big)$ and can be found via the bisection search algorithm to meet the relay's power constraint.
\end{lemma}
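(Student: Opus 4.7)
The plan is to first fix the left and right unitary factors of $\mathbf{S}_R$ by a linearization argument (which is what the small-eigenvalue hypothesis enables), and then determine the singular values by reducing to a scalar KKT problem in the diagonalized basis.

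\textbf{Linearization and structure.} Under the hypothesis that the eigenvalues of $\mathbf{Z}_{t,G}$, $\mathbf{K}_{q,i}$, etc., are small compared to one, $\mathbf{Z}^{-1}_{t,G}$ dominates each inverse in the objective, so the first-order Neumann expansion $(\mathbf{Z}^{-1}_{t,G} + \mathbf{X})^{-1} \approx \mathbf{Z}_{t,G} - \mathbf{Z}_{t,G}\mathbf{X}\mathbf{Z}_{t,G}$ converts \eqref{IV-42} (up to constants) into the maximization of
\[
c_1\,\mathrm{Tr}\!\bigl(\mathbf{Z}^2_{t,G}\mathbf{S}^*_R\mathbf{S}^T_R\bigr) + c_2\,\mathrm{Tr}\!\bigl(\mathbf{Z}^2_{t,G}\mathbf{S}^*_R\mathbf{K}^{-1}_{q,2}\mathbf{S}^T_R\bigr),
\]
with $c_1 = \sum_n \sigma_{r,G_1,n}\tilde{\beta}_{1,n}$ and $c_2 = \sum_n \sigma_{r,G_2,n}\beta_{2,n}$. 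Writing the SVD $\mathbf{S}_R = \mathbf{U}_R\boldsymbol{\Sigma}_R\mathbf{V}^H_R$, I would then apply the von Neumann/rearrangement trace inequality twice. The first trace, rewritten as $\mathrm{Tr}(\mathbf{U}^T_R\mathbf{Z}^2_{t,G}\mathbf{U}^*_R\boldsymbol{\Sigma}^2_R)$, is maximized when $\mathbf{U}_R = \mathbf{U}^*_{t,G}$, in which case $\mathbf{U}^T_R\mathbf{Z}^2_{t,G}\mathbf{U}^*_R = \boldsymbol{\Sigma}^2_{t,G}$. With this choice, the second trace becomes $\mathrm{Tr}(\boldsymbol{\Sigma}^2_{t,G}\boldsymbol{\Sigma}^2_R\mathbf{V}^H_R\mathbf{K}^{-1}_{q,2}\mathbf{V}_R)$, whose maximization forces $\mathbf{V}_R$ to diagonalize $\mathbf{K}^{-1}_{q,2}$ (so that $\mathbf{V}^H_R = \mathbf{U}^T_{q,2}$) and, by the rearrangement inequality, pairs the largest eigenvalues of $\boldsymbol{\Sigma}^2_{t,G}\boldsymbol{\Sigma}^2_R$ with the largest eigenvalues of $\mathbf{K}^{-1}_{q,2}$, i.e., the smallest eigenvalues of $\mathbf{K}_{q,2}$. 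This produces exactly the claimed form $\mathbf{S}_R = \mathbf{U}^*_{t,G}\boldsymbol{\Sigma}_{s,R}\mathbf{U}^T_{q,2}$ together with the opposite ordering of $\{\sigma_{t,G,m}\}$ and $\{\delta_{q,2,m}\}$.

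\textbf{Scalar reduction and KKT.} I would then substitute this structure back into the \emph{un-approximated} objective of \eqref{IV-42}. Since $\mathbf{Z}^{-1}_{t,G}$, $\mathbf{S}^*_R\mathbf{S}^T_R = \mathbf{U}_{t,G}\boldsymbol{\Sigma}^2_{s,R}\mathbf{U}^T_{t,G}$, and $\mathbf{S}^*_R\mathbf{K}^{-1}_{q,2}\mathbf{S}^T_R = \mathbf{U}_{t,G}\boldsymbol{\Sigma}^2_{s,R}\boldsymbol{\Delta}^{-1}_{q,2}\mathbf{U}^T_{t,G}$ are simultaneously diagonal in the basis $\mathbf{U}_{t,G}$, each matrix inverse in the traces is diagonal and the total MSE collapses to
\[
\sum_{m=1}^{M}\!\left[\sum_{n=1}^{N_1}\!\frac{\sigma_{r,G_1,n}\,\sigma_{t,G,m}}{1+\tilde{\beta}_{1,n}\sigma_{t,G,m}\sigma_{s,R,m}} + \sum_{n=1}^{N_2}\!\frac{\sigma_{r,G_2,n}\,\sigma_{t,G,m}}{1+\beta_{2,n}\sigma_{t,G,m}\sigma_{s,R,m}\delta^{-1}_{q,2,m}}\right],
\]
while the power constraint becomes $\sum_m \sigma_{s,R,m} \le \tau_R$. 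Forming the Lagrangian with multiplier $\lambda \ge 0$ and setting $\partial/\partial\sigma_{s,R,m} = 0$ yields \eqref{IV-44} directly. The stated bound on $\lambda$ then follows because the right-hand side of \eqref{IV-44} is a decreasing function of $\sigma_{s,R,m}$ whose value at $\sigma_{s,R,m}=0$ is precisely $\sum_{n}\sigma_{r,G_1,n}\tilde{\beta}_{1,n}\sigma^2_{t,G,m} + \sum_{n}\sigma_{r,G_2,n}\beta_{2,n}\sigma^2_{t,G,m}/\delta_{q,2,m}$; any larger $\lambda$ would drive all $\sigma_{s,R,m}$ to zero and fail to exhaust the power budget.

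\textbf{Main obstacle.} The hard part will be rigorously establishing the SVD structure in Steps 2--3, because the two trace terms involve \emph{different} PSD weightings ($\mathbf{I}$ versus $\mathbf{K}^{-1}_{q,2}$) that are coupled through the single variable $\mathbf{S}_R$. Without some form of linearization, the nonlinearity of $\mathrm{Tr}[(\cdot)^{-1}]$ and the competition between the two weights for the orientation of $\mathbf{V}_R$ prevent a direct Hadamard/von Neumann argument from simultaneously pinning down both unitary factors; the small-eigenvalue hypothesis is precisely what decouples the two traces and lets the rearrangement inequality be applied to each in isolation, which is also why the asserted opposite-ordering only emerges in this regime.
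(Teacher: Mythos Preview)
Your overall strategy—linearize via a Neumann expansion, pin down the left/right unitary factors of $\mathbf{S}_R$ by trace inequalities, then substitute back into the exact objective and solve the resulting scalar KKT problem—is exactly what the paper does, and your scalar reduction and Lagrange-multiplier steps match the paper's Appendix~E line for line.

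The one place where your argument is weaker than the paper's is the determination of the unitary factors. You optimize $\mathbf{U}_R$ using only the \emph{first} linearized trace and then, having fixed $\mathbf{U}_R=\mathbf{U}^*_{t,G}$, optimize $\mathbf{V}_R$ from the second. But $\mathbf{U}_R$ also appears in the second trace, so a sequential ``maximize trace~1, then trace~2'' argument does not by itself prove joint optimality; you would additionally need the (true but unstated) bound $\mathrm{Tr}(\mathbf{A}\mathbf{X}\mathbf{B}\mathbf{X}^H)\le\sum_i\lambda_i(\mathbf{A})\lambda_i(\mathbf{B})\sigma_i^2(\mathbf{X})$ for PSD $\mathbf{A},\mathbf{B}$ and fixed singular values of $\mathbf{X}$, together with the observation that both separate maxima are attained at the \emph{same} $(\mathbf{U}_R,\mathbf{V}_R)$. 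The paper sidesteps this coupling entirely by the change of variable $\mathbf{P}\triangleq\mathbf{Z}_{t,G}\mathbf{S}^*_R$: after this substitution, the first linearized term is $\mathrm{Tr}(\mathbf{P}\mathbf{P}^H)$ (a function of $\boldsymbol{\Sigma}_P$ only), the second is $\mathrm{Tr}(\mathbf{P}\mathbf{K}^{-1}_{q,2}\mathbf{P}^H)$ (a function of $\boldsymbol{\Sigma}_P,\mathbf{V}_P$ only), and the power constraint becomes $\mathrm{Tr}(\boldsymbol{\Sigma}_P^2\mathbf{U}_P^H\mathbf{Z}^{-2}_{t,G}\mathbf{U}_P)$ (a function of $\boldsymbol{\Sigma}_P,\mathbf{U}_P$ only). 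Thus $\mathbf{U}_P$ and $\mathbf{V}_P$ are completely decoupled—one lives only in the constraint, the other only in the objective—and a single application of the standard trace inequality to each yields $\mathbf{U}_P=\mathbf{U}_{t,G}$, $\mathbf{V}_P=\mathbf{V}_{q,2}$ with the claimed opposite ordering, without any joint-optimality caveat. Your route reaches the same destination, but the paper's substitution is the cleaner device.
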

\vspace{-7pt}
\begin{proof}
Please refer to Appendix~\ref{prof_lemma3}.
\end{proof}

For the case with ${\bf K}_{q,1} = q_1 {\bf I}$ and ${\bf K}_{q,2} = q_2 {\bf I}$, \eqref{IV-42} can be further simplified as
\begin{equation}\label{IV-45}
\begin{split}
\min_{{\bf S}_R} & ~~ \sum^{N_1}_{n=1} \sigma_{r,G_1,n} {\rm Tr}\left[ \left({\bf Z}^{-1}_{t,G} + \tilde{\beta}_{1,n}   {\bf S}^*_R {\bf S}^T_R  \right )^{-1} \right] +
 \sum^{N_2}_{n=1}\sigma_{r,G_2,n} {\rm Tr}\left[ \left({\bf Z}^{-1}_{t,G} + \tilde{\beta}_{2,n}    {\bf S}^*_R  {\bf S}^T_R  \right )^{-1} \right] \\
{\rm s.t.}& ~~~ {\rm Tr}({\bf S}_R {\bf S}^H_R) \leq \tau_R,
\end{split}
\end{equation}
where $ \tilde{\beta}_{2,n}=  \beta_{2,n}/q_2$. The optimal solution of \eqref{IV-45} is given in the following lemma.

\begin{lemma}\label{BC_sec_AA}
The optimal ${\bf S}_R$ in \eqref{IV-45} is of the form
\begin{equation}\label{IV-46}\nonumber
\begin{split}
{\bf S}_R = {\bf U}^*_{t, G} {\bm \Sigma}_{s,R} {\bf V}^T_{s,R},
\end{split}
\end{equation}
where ${\bm \Sigma}_{s,R}$ is a positive real diagonal matrix, ${\bf U}_{t, G}$ is the eigenvector matrix of ${\bf Z}_{t,G}$, and ${\bf V}_{s,R}$ is an arbitrary unitary matrix. The corresponding eigenvalues of ${\bf Z}_{t,G}$ are arranged in the same order as the diagonal elements of ${\bm \Sigma}_{s,R}$. The optimal ${\bm \Sigma}_{s,R}$ can be obtained from
\begin{equation}\label{IV-47}\nonumber
\begin{split}
\lambda =  \sum^{N_1}_{n=1}   \frac{\sigma_{r,G_1,n}  \tilde{\beta}_{1,n} \sigma^2_{t,G,m} }{(1+\tilde{\beta}_{1,n} \sigma_{t,G,m} \sigma_{s,R,m})^2 } +
  \sum^{N_1}_{n=1} \frac{\sigma_{r,G_2,n}  \tilde{{\beta}}_{2,n} \sigma^2_{t,G,m} }{(1+\tilde{{\beta}}_{2,n} \sigma_{t,G,m} \sigma_{s,R,m} )^2 },
\end{split}
\end{equation}
where $\lambda\in \big(0, \max_{m}\big\{ \sum^{N_1}_{n=1} (\sigma_{r,G_1,n}\tilde{\beta}_{1,n}\sigma^2_{t,G,m} +\sigma_{r,G_2,n}\tilde{\beta}_{2,n} \sigma^2_{t,G,m}) \big\}  \big)$ can be found via the bisection search to meet the relay's power constraint.
\end{lemma}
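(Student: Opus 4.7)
The plan is to exploit the fact that both the objective and the power constraint in \eqref{IV-45} depend on the training sequence only through the positive semidefinite matrix ${\bf Q}_R \triangleq {\bf S}^*_R {\bf S}^T_R$, since ${\rm Tr}({\bf S}_R {\bf S}_R^H) = {\rm Tr}({\bf Q}_R)$. I would first recast \eqref{IV-45} as the equivalent minimization of $\sum_{n}\sigma_{r,G_1,n}{\rm Tr}[({\bf Z}^{-1}_{t,G} + \tilde{\beta}_{1,n}{\bf Q}_R)^{-1}] + \sum_{n}\sigma_{r,G_2,n}{\rm Tr}[({\bf Z}^{-1}_{t,G} + \tilde{\beta}_{2,n}{\bf Q}_R)^{-1}]$ over ${\bf Q}_R \succeq {\bf 0}$ subject to ${\rm Tr}({\bf Q}_R) \leq \tau_R$. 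By the same argument used to establish Theorem \ref{MAC_convex}, each summand is convex in ${\bf Q}_R$, so the reformulated problem is a convex SDP.

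Next, I would pin down the eigenstructure of the optimal ${\bf Q}_R$ by an alignment argument. For fixed eigenvalues of ${\bf Q}_R$ and any scalar $c>0$, the quantity ${\rm Tr}[({\bf Z}^{-1}_{t,G} + c {\bf Q}_R)^{-1}]$ is minimized when ${\bf Q}_R$ commutes with ${\bf Z}^{-1}_{t,G}$ and its eigenvalues are arranged in the opposite order to those of ${\bf Z}^{-1}_{t,G}$; equivalently, ${\bf Q}_R = {\bf U}_{t,G} {\rm Diag}(\sigma_{s,R,m}) {\bf U}^H_{t,G}$ with the $\sigma_{s,R,m}$ in the \emph{same} order as the $\sigma_{t,G,m}$. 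This is a standard consequence of the Hardy--Littlewood--P\'olya rearrangement inequality applied to the convex decreasing function $f(x)=1/x$, combined with the fact that commuting Hermitian matrices extremize the trace of a convex spectral function of their sum. The critical observation is that every summand in the objective involves the same pair $({\bf Z}^{-1}_{t,G},{\bf Q}_R)$ and differs only in the positive scalar $\tilde{\beta}_{i,n}$, so a single alignment minimizes every summand simultaneously---in particular no small-eigenvalue approximation, such as the one invoked in Lemma \ref{BC_sec_A}, is needed here. Factoring ${\bf Q}_R = {\bf U}_{t,G}{\bm \Sigma}^2_{s,R}{\bf U}_{t,G}^H$ gives ${\bf S}_R = {\bf U}^*_{t,G} {\bm \Sigma}_{s,R} {\bf V}^T_{s,R}$ for an arbitrary unitary ${\bf V}_{s,R}$, which is exactly the claimed structure; the freedom in ${\bf V}_{s,R}$ reflects that ${\bf Q}_R$ is insensitive to a right unitary factor of ${\bf S}_R$.

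With the eigenvectors fixed, the problem collapses to a scalar water-filling over the eigenvalues $\sigma_{s,R,m}\geq 0$ of ${\bf Q}_R$:
\begin{equation*}
\min ~\sum_{m=1}^{M}\sum_{i=1}^{2}\sum_{n=1}^{N_i}\frac{\sigma_{r,G_i,n}\,\sigma_{t,G,m}}{1+\tilde{\beta}_{i,n}\,\sigma_{t,G,m}\,\sigma_{s,R,m}}\quad\text{subject to}\quad\sum_{m=1}^{M}\sigma_{s,R,m}\leq\tau_R.
\end{equation*}
Each summand is strictly decreasing in $\sigma_{s,R,m}$, so the power constraint is active at the optimum. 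Writing the Lagrangian with multiplier $\lambda\geq 0$ and setting the partial derivative in $\sigma_{s,R,m}$ to zero yields precisely the water-filling equation stated in the lemma. The upper bound on $\lambda$ is obtained by evaluating the right-hand side of that equation at $\sigma_{s,R,m}=0$, which is the largest value the expression can attain and hence an upper bound on any admissible multiplier; $\lambda$ can then be located by bisection to enforce $\sum_m\sigma_{s,R,m}=\tau_R$. The main obstacle throughout is the alignment step, i.e., justifying rigorously that a single choice of eigenvectors and ordering is simultaneously optimal for every $(i,n)$ term; once this is granted, the reduction to a one-dimensional water-filling and the KKT analysis are routine.
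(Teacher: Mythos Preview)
Your proposal is correct and follows essentially the same route as the paper. The paper's proof invokes the majorization relation ${\bm \lambda}({\bf Z}^{-1}_{t,G}) + {\bm \lambda}(\tilde{\beta}_{i,n}{\bf S}^*_R{\bf S}^T_R) \preccurlyeq {\bm \lambda}({\bf Z}^{-1}_{t,G} + \tilde{\beta}_{i,n}{\bf S}^*_R{\bf S}^T_R)$ together with Schur convexity of $\sum_j 1/x_j$ to obtain the same simultaneous alignment of eigenvectors that you derive via the rearrangement inequality; your explicit observation that every summand depends on the pair $({\bf Z}^{-1}_{t,G},{\bf Q}_R)$ only through a positive scalar, and your subsequent KKT water-filling and bisection-bound steps, simply spell out what the paper leaves implicit.
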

\begin{proof}
The expression of MSE, i.e., $e_i$, for $i=1,2$, in \eqref{IV-45} contains a term ${\rm Tr}\big[ \big({\bf Z}^{-1}_{t,G} + \tilde{\beta}_{i,n}   {\bf S}^*_R {\bf S}^T_R  \big )^{-1} \big]$. If the eigenvalues of ${\bf Z}_{t,G}$ and $ {\bf S}^*_R  {\bf S}^T_R$ are arranged in the same order, we have \cite{Book_Majorization}
\begin{equation}\label{IV-47-1}
{\bm \lambda}({\bf Z}^{-1}_{t,G} )+ {\bm \lambda}(\tilde{\beta}_{i,n}   {\bf S}^*_R {\bf S}^T_R ) \preccurlyeq {\bm \lambda}({\bf Z}^{-1}_{t,G} + \tilde{\beta}_{i,n}   {\bf S}^*_R {\bf S}^T_R ).
\end{equation}
Since the function ${\rm Tr}[({\bf A})]$ is a schur convex function with respect to the eigenvalues of $\bf A$ \cite{Bjornson2010}, based on \eqref{IV-47-1}, we can easily obtain the results in \emph{Lemma~\ref{BC_sec_AA}}.
%
\end{proof}

\subsection{When ${\bf Z}_{t,G} = a {\bf I}$}
This case corresponds to a scenario, where the relay antennas are far enough from one another such that they are spatially uncorrelated. In this case, the corresponding training design problem can be formulated as
\begin{equation}\label{IV-48}
\begin{split}
\min_{{\bf S}_R} & ~~ \sum^{N_1}_{n=1} \tilde{\sigma}_{r,G_1,n}  {\rm Tr}\left[ \left( {\bf I} + \bar{{\beta}}_{1,n}   {\bf S}^*_R {\bf K}^{-1}_{q,1} {\bf S}^T_R  \right )^{-1} \right] +
 \sum^{N_2}_{n=1} \tilde{\sigma}_{r,G_2,n}  {\rm Tr}\left[ \left( {\bf I} + \bar{\beta}_{2,n}   {\bf S}^*_R {\bf K}^{-1}_{q,2} {\bf S}^T_R  \right )^{-1} \right] \\
{\rm s.t.}& ~~~ {\rm Tr}({\bf S}^*_R {\bf S}^T_R) \leq \tau_R,
\end{split}
\end{equation}
where $\tilde{\sigma}_{r,G_i,n} = a {\sigma}_{r,G_i,n}$, and $\bar{{\beta}}_{i,n} =  a{\beta}_{i,n}$ for $i=1,2$. In general, the optimization problem in \eqref{IV-48} is non-convex with respect to ${\bf S}_R$. However, based on the minimum training sequence length derived in \emph{Lemma~\ref{BC_length}}, i.e., $L_R = M$, \eqref{IV-48} is equivalent to the following problem
\begin{equation}\label{IV-49}
\begin{split}
\min_{{\bf S}_R} & ~~ \sum^{N_1}_{n=1} \tilde{\sigma}_{r,G_1,n} {\rm Tr}\left[ \left({\bf I} + \bar{{\beta}}_{1,n}  {\bf S}^T_R {\bf S}^*_R {\bf K}^{-1}_{q,1}   \right )^{-1} \right] +
 \sum^{N_2}_{n=1}\tilde{\sigma}_{r,G_2,n} {\rm Tr}\left[ \left({\bf I} + \bar{\beta}_{2,n}   {\bf S}^T_R{\bf S}^*_R {\bf K}^{-1}_{q,2}   \right )^{-1} \right] \\
{\rm s.t.}& ~~~ {\rm Tr}({\bf S}^T_R{\bf S}^*_R ) \leq \tau_R.
\end{split}
\end{equation}
In obtaining \eqref{IV-49} from \eqref{IV-48}, we have used the identity
$ {\rm Tr}\big( \big[ {\bf I} + {\bf A} {\bf B} \big]^{-1} \big)
 =  {\rm Tr}\big( \big[ {\bf I} + {\bf B} {\bf A}  \big]^{-1} \big)+ m-n$,
where $\bf A$ and $\bf B$ are $m\times n$ and $n\times m$ matrices, respectively, \cite{Cookbook}. Similar to the Section \ref{sec:4-b}, it is observed that in \eqref{IV-49}, we can directly optimize the matrix ${\bf S}^T_R {\bf S}^*_R$ instead of ${\bf S}_R$ by solving the following optimization
\begin{equation}\label{IV-49-1}
\begin{split}
\min_{{\bf Q}_R \succeq {\bf 0}} & ~~ \sum^{N_1}_{n=1}\tilde{\sigma}_{r,G_1,n} {\rm Tr}\left[ \left({\bf I} + \bar{{\beta}}_{1,n}  {\bf K}^{-1/2}_{q,1}{\bf Q}_R {\bf K}^{-1/2}_{q,1}   \right )^{-1} \right] +
 \sum^{N_2}_{n=1} \tilde{\sigma}_{r,G_2,n}{\rm Tr}\left[ \left({\bf I} + \bar{\beta}_{2,n}  {\bf K}^{-1/2}_{q,2} {\bf Q}_R {\bf K}^{-1/2}_{q,2}   \right )^{-1} \right] \\
{\rm s.t.}& ~~~ {\rm Tr}({\bf Q}_R ) \leq \tau_R,
\end{split}
\end{equation}
where ${\bf Q}_R\triangleq {\bf S}^T_R {\bf S}^*_R$.
By using similar steps as that in \emph{Theorem~\ref{MAC_convex}}, it can be shown that \eqref{IV-49-1} is convex and it can be readily solved via the following SDP problem
\begin{equation}\label{IV-50}\nonumber
\begin{split}
 \min_{{\bf Q}_R \succeq {\bf 0},{\bf X}_n,{\bf Y}_n }~~& \sum^{M}_{n=1} \tilde{\sigma}_{r,G_1,n} {\rm Tr}\left({\bf X}_n \right) + \tilde{\sigma}_{r,G_2,n}{\rm Tr}\left({\bf Y}_n \right) \\
 {\rm s.t.} ~~~~~& {\rm Tr}({\bf Q}_R) \leq \tau_R \\
&  \left[
     \begin{array}{cc}
       {\bf I} +\bar{{\beta}}_{1,n}  {\bf K}^{-1/2}_{q,1}{\bf Q}_R {\bf K}^{-1/2}_{q,1} & {\bf I} \\
       {\bf I} & {\bf X}_n \\
     \end{array}
   \right]
 \succeq {\bf 0},~\forall n \\
 &  \left[
     \begin{array}{cc}
       {\bf I} +\bar{\beta}_{2,n}  {\bf K}^{-1/2}_{q,2} {\bf Q}_R {\bf K}^{-1/2}_{q,2} & {\bf I} \\
       {\bf I} & {\bf Y}_n \\
     \end{array}
   \right]
 \succeq {\bf 0},~\forall n .
\end{split}
\end{equation}

\section{Simulation results}
In this section, we present simulation results to verify the performance of the proposed training design algorithms. The total normalized MSE (NMSE), defined as either $\frac{1}{ M(N_1+N_2)} \sum^2_{i=1}\mathbb{E}\left\{ || {\bf H}_{i} - \hat{{\bf H}}_{i}||^2_F \right\}$ or \\ $\frac{1}{M(N_1+N_2)}\sum^2_{i=1} \mathbb{E}\left\{ || {\bf G}_{i} - \hat{{\bf G}}_{i} ||^2_F\right\}$, is utilized to illustrate the performance of the proposed algorithms.
In all simulations, the channel covariance matrices are assumed to have the following structures
\begin{equation}\label{V-1}\nonumber
\begin{split}
[{\bf Z}_{t,b}]_{n,m} = z_{t,b} J_0 (d_{t,b}|n-m|),~  b \in \{H_1,H_2,G\}, \notag \\
[{\bf Z}_{r,b}]_{n,m} =  z_{r,b} J_0 (d_{r,b}|n-m|),~  b \in \{H,G_1,G_2\},
\end{split}
\end{equation}
where $J_0(\cdot)$ is the zeroth-order Bessel function of the first kind, $d_{t,b}$ and $d_{r,b}$ are proportional to the carrier frequency and the antenna separation vectors at the transmitter
and the receiver, respectively \cite{Biguesh_tsp2009}. Moreover, the scalars $z_{t,b}$ and $z_{r,b}$ are normalization factors such that ${\rm Tr}({\bf Z}_{t,H_i}) =N_i $, ${\rm Tr}({\bf Z}_{r,H}) =M$, ${\rm Tr}({\bf Z}_{t,G}) =M$ and ${\rm Tr}({\bf Z}_{r,G_i}) =N_i$. The temporal covariance matrix of the disturbance is assumed to be modeled via a first order autoregressive (AR) filter, i.e., ${\rm AR}(1)$, that is denoted by $[{\bf K}_{q,b}]_{n,m}= I_{q,b} k_{q,b} \eta^{|n-m|}_{q,b}$ for $b \in\{1,2,R \}$ \cite{Biguesh_tsp2009}. Here, the scalar $k_{q,b}$ is a normalization factor similar to ${\bf Z}_{t,b}$ and ${\bf Z}_{r,b}$. Moreover, $I_{q,b}$ indicates the strength of the interference from the nearby users. Following the approach in \cite{Bjornson2010}, it is assumed that the received spatial covariance matrix of the disturbance, ${\bf K}_{r,b}$, shares the same eigenvalue vectors with ${\bf Z}_{r,b}$ but with different eigenvalues. For simplicity, the length of the source and relay training sequences are assumed to be $L_S = N_1+N_2$ and $L_R = M$, respectively. The sum power at the two sources are assumed to be $\tau_1+\tau_2 = 2P$.
If not specified otherwise, we assume that $N_1=N_2=M=3$. Furthermore, the system parameters for the MAC phase are set to:
$d_{t,H_1} = 1.5$, $d_{t,H_2} = 1.8$, $d_{r,G} = 1.3$, $\eta_{q,R}=0.9$ and $I_{q,R}=1$, while for the BC phase, we choose $d_{t,G} = 1.9$, $d_{r,G_1} = 1.95$, $d_{r,G_2} = 0.3$, $\eta_{q,1}=0.9$, $\eta_{q,2}=-0.9$ and $I_{q,1}= I_{q,2} =1$.

\begin{figure}[t]
  \centering
  \subfigure[MAC phase]{
    \label{Coverg:subfig:a} 
    \includegraphics[scale=0.54]{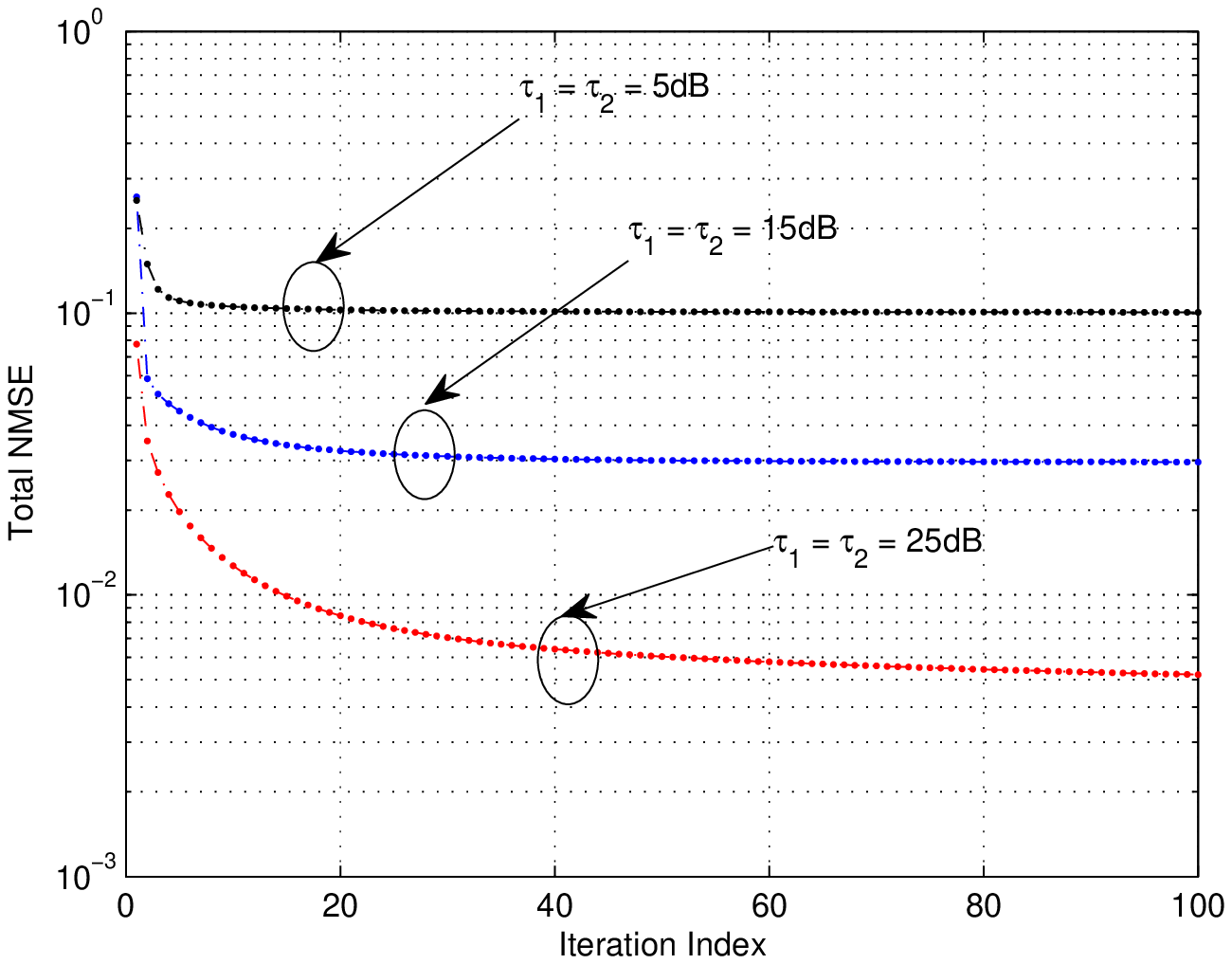}}
  \hspace{0in}
  \subfigure[BC phase]{
    \label{Coverg:subfig:b} 
    \includegraphics[scale=0.54]{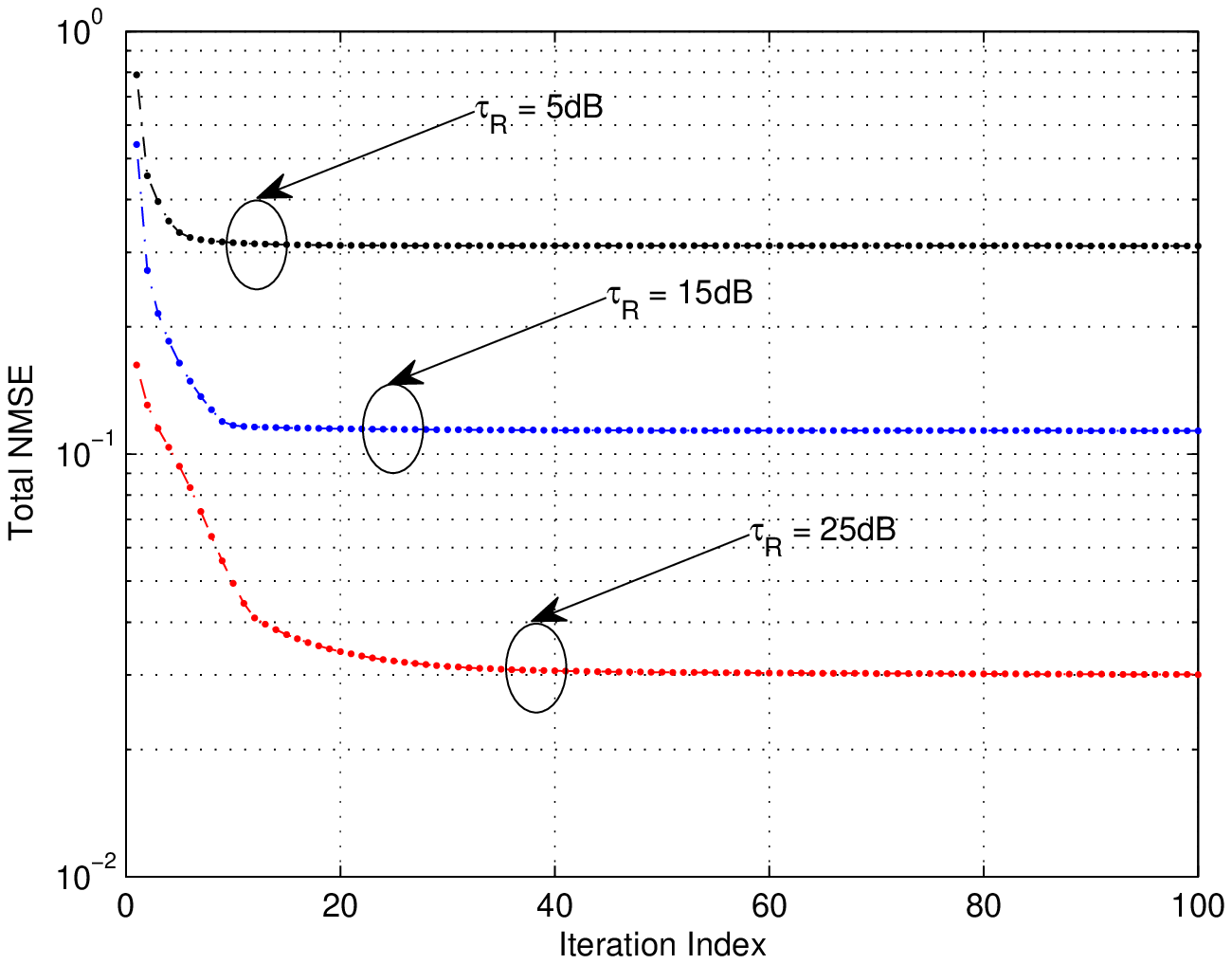}}
  \caption{Convergence behavior of the proposed iterative designs.}
  \label{Coverg:subfig} 
\end{figure}

In Fig.~\ref{Coverg:subfig}, the convergence behaviors of \emph{Algorithms 1} and \emph{2} for different SNRs are shown in subfigures $(a)$ and $(b)$, respectively. It is illustrated that in general, the proposed iterative algorithms converge very quickly and at most $60$ iterations are required for them to converge. These results also indicate that as the SNR increases more iterations are needed for the proposed algorithms to converge. In Figs.~\ref{Optim:subfig:a} and~\ref{Optim:subfig:b}, the convergence of the proposed algorithms are verified for different sets of initializations. In this setup, ``Random-$1$" indicates that a random initial point is selected, ``Random-$N$" implies that $N$ random initial points are tested but the one with the best performance is selected, and ``Identity" indicates that
\begin{equation}\label{V-1-1}\nonumber
{\bf S}={\rm Blkdiag}(a{\bf I}_{N_1}, b{\bf I}_{N_2})~{\rm and}~{\bf S}_R = c{\bf I}_M,
\end{equation}
where $a$, $b$, and $c$ are used to satisfy the source and relay power constraints.
The results in Figs.~\ref{Optim:subfig:a} and~\ref{Optim:subfig:b} indicate that for various SNR values, the proposed iterative training design algorithms are not sensitive to the selected initial point. Furthermore, it is observed that the initialization process denoted by ``Identity" performs well as an initial point and can approach the initialization scenario denoted by "Random-$10$". Hence, in the following, if not state otherwise, the ``Identity" initialization point is used.
\begin{figure}[t]
  \centering
  \subfigure[MAC phase]{
    \label{Optim:subfig:a} 
    \includegraphics[scale=0.54]{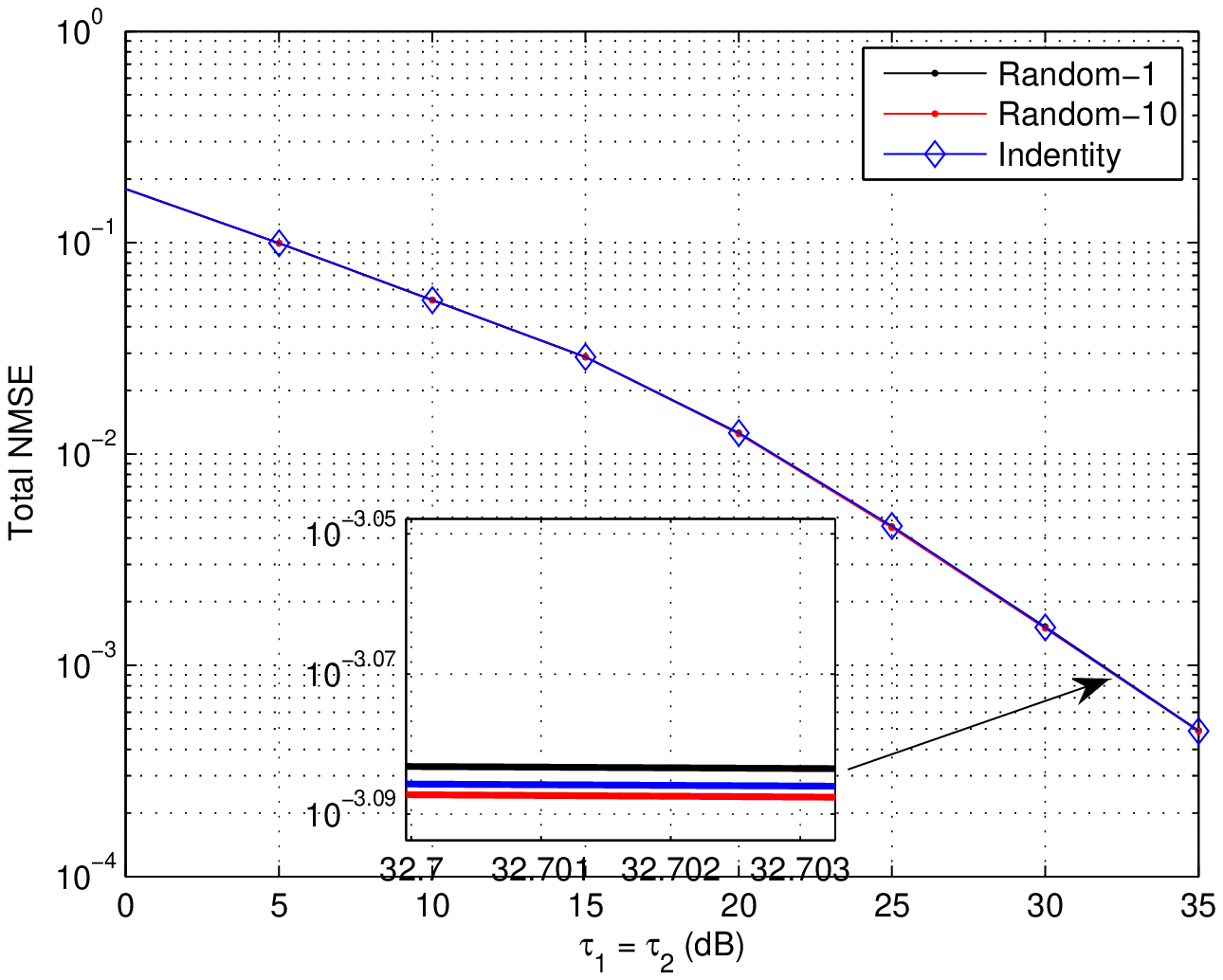}}
  \hspace{0in}
  \subfigure[BC phase]{
    \label{Optim:subfig:b} 
    \includegraphics[scale=0.54]{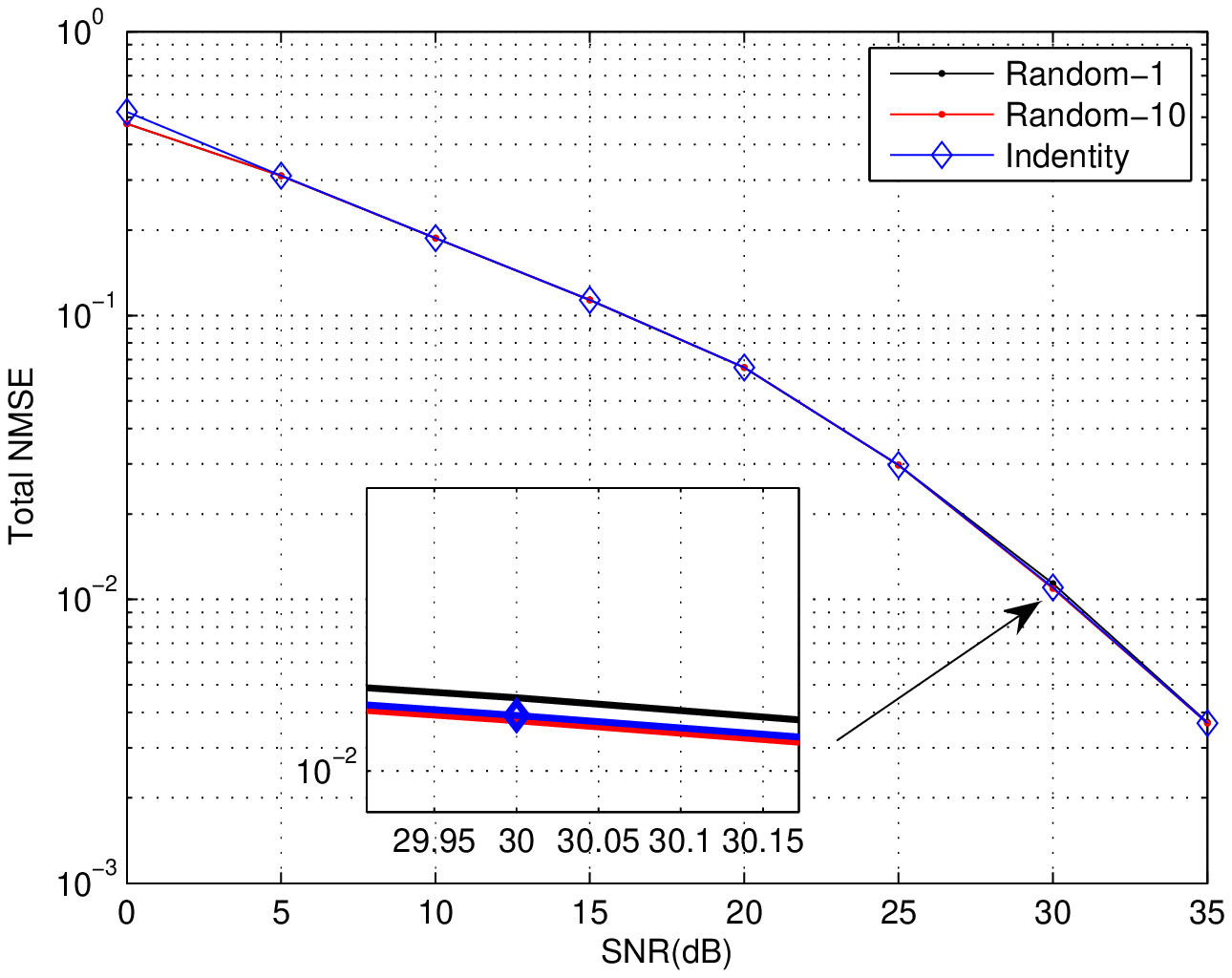}}
  \caption{Optimality for the proposed iterative designs with different initiations.}
  \label{Optim:subfig} 
\end{figure}

In Fig.~\ref{Comp:subfig}, we compare the total NMSE of the proposed iterative training design algorithm with that of \cite{Bjornson2010}, which is intended for point-to-point systems. To make this comparison possible, for the training sequence design in the MAC phase, it is assumed that two source nodes transmit their training sequences in two orthogonal time intervals, i.e., ${\bf s}_1(t)$ with $t \in [1,2,\cdots,N_1]$ and ${\bf s}_2(t)$ with $t \in [N_1+1,N_1+2,\cdots,N_1+N_2]$. In the BC phase, the training sequence, ${\bf S}_R$ is designed according to the channel from the relay to the source $S_1$. The plots in Figs.~\ref{Comp:subfig:a} and~\ref{Comp:subfig:b} illustrate that compared to the approach in~\cite{Bjornson2010}, the proposed training design can significantly improve the accuracy of channel estimation in TWR systems. This gain is even more pronounced when the two source nodes operate at different transmit power levels during the MAC phase and when the strengths of the disturbances at the two source nodes are asymmetric, i.e., $I_{q,1} \neq I_{q,2}$ during the BC phase.
This can be mainly attributed to the fact that the proposed training design algorithm, i.e., \emph{Algorithm 1}, takes into account the temporal correlation of the disturbances at the relay node in the MAC phase, while ensuring that the training sequences transmitted from the relay node simultaneously match the channels corresponding to relay-to-source links during the BC phase.

\begin{figure}[t]
  \centering
  \subfigure[MAC phase]{
    \label{Comp:subfig:a} 
    \includegraphics[scale=0.54]{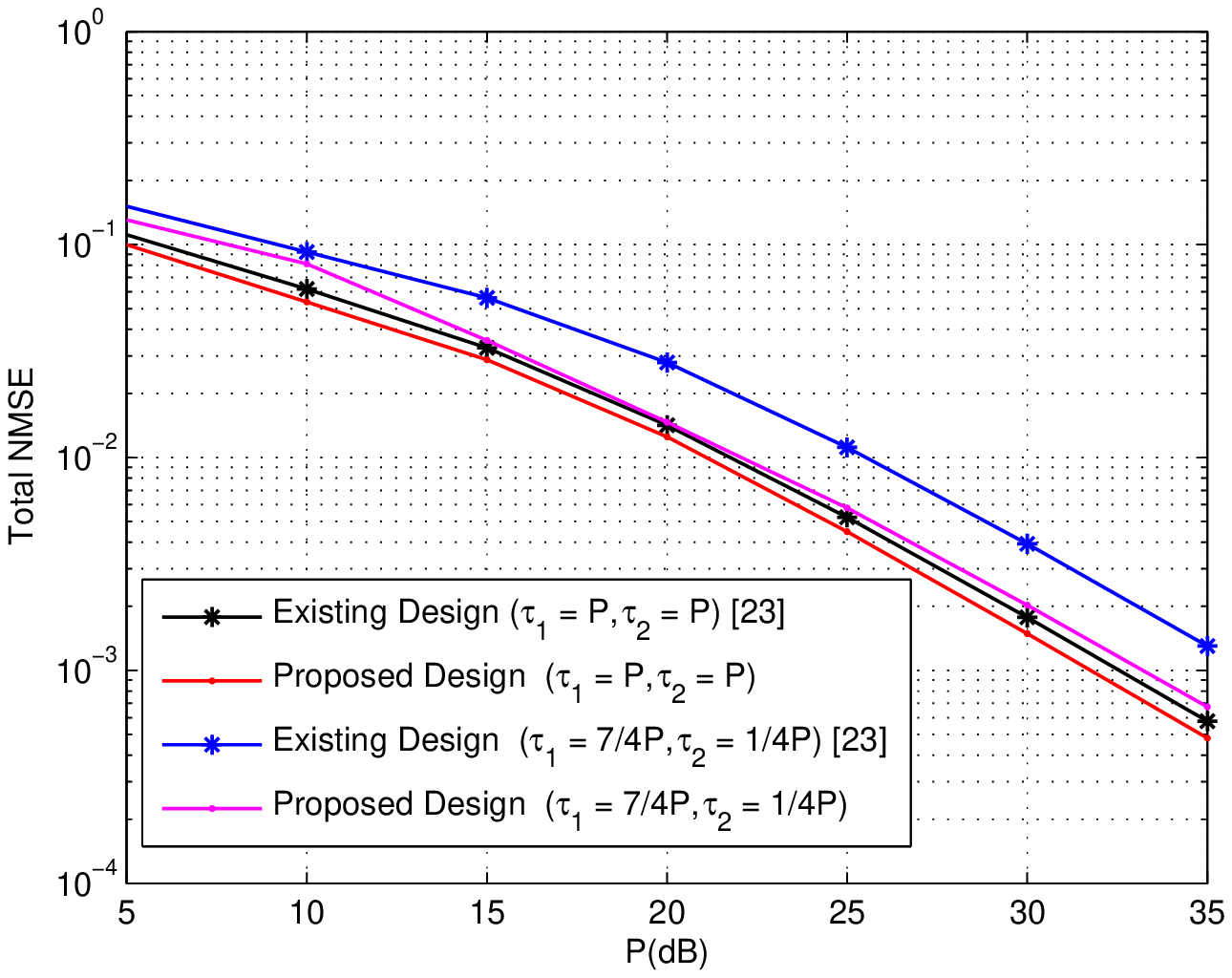}}
  \hspace{0in}
  \subfigure[BC phase]{
    \label{Comp:subfig:b} 
    \includegraphics[scale=0.54]{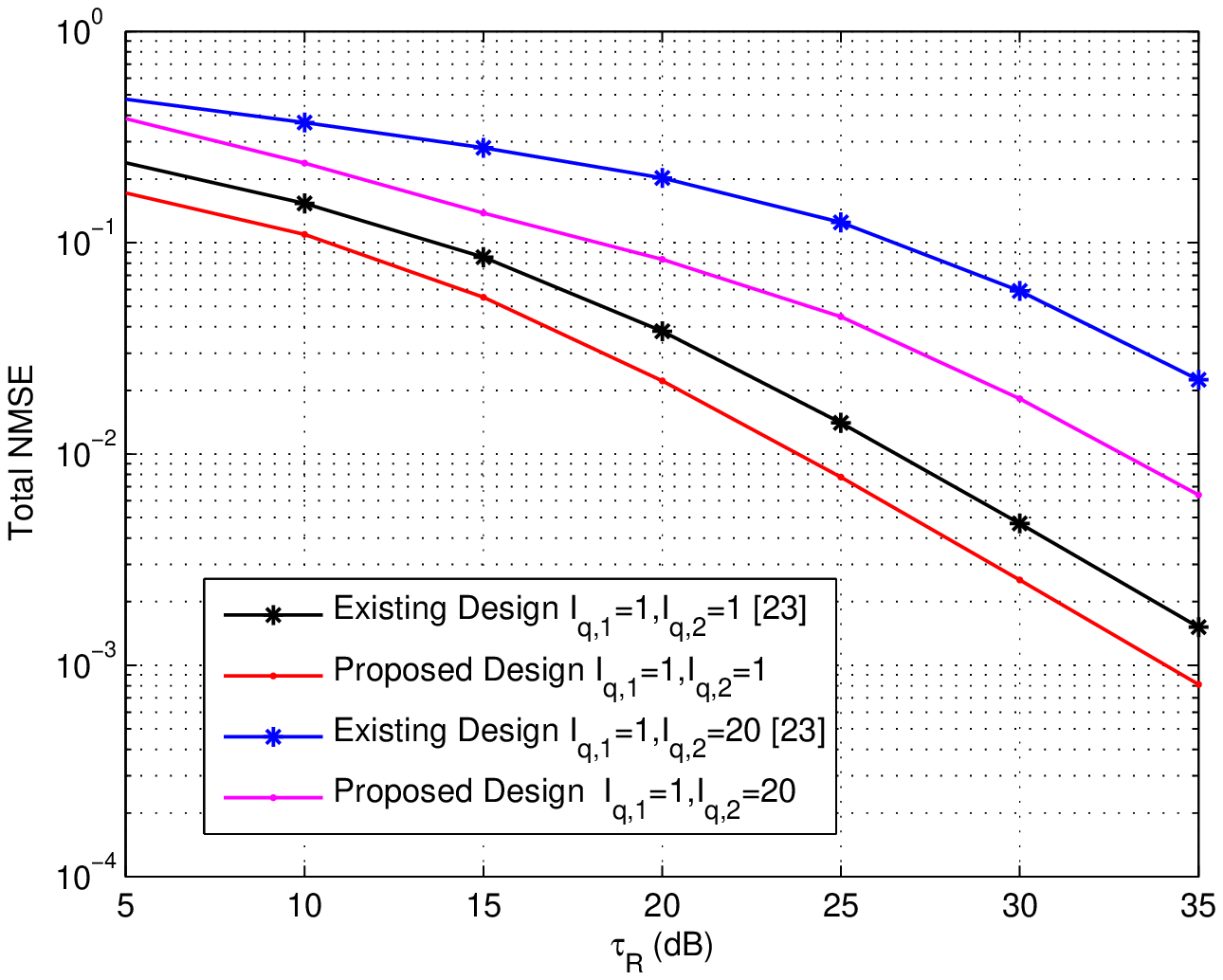}}
  \caption{Performance comparison with the existing design.}
  \label{Comp:subfig} 
\end{figure}

\begin{figure}[t]
\begin{centering}
\includegraphics[scale=0.6]{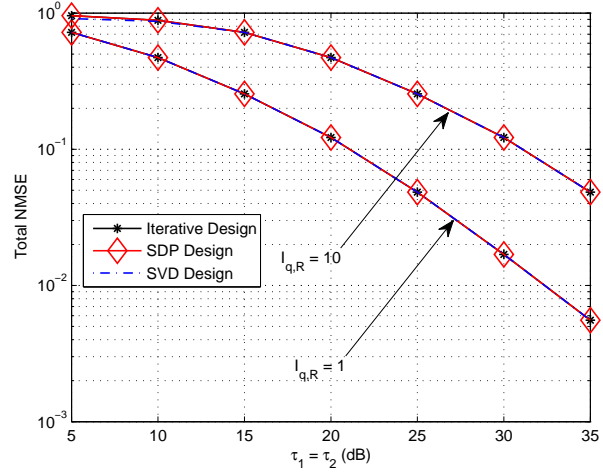}
\vspace{-0.1cm}
\caption{Performance illustration of the case with ${\bf K}_{q,R} = q {\bf I}$ for MAC phase channel estimation.}\label{MAC_Case_II}
\end{centering}
\vspace{-0.3cm}
\end{figure}

In Fig.~\ref{MAC_Case_II}, the performance of the proposed training sequence design algorithms and channel estimators in the MAC phase for ${\bf K}_{q,R} = q {\bf I}$ is demonstrated. Three training sequence design approaches are taken into consideration: 1) The iterative design based on \emph{Algorithm 1}; 2) The SDP design based on \eqref{IV-30}; and 3) The SVD design based on \emph{Theorem~\ref{MAC_sec_B}}. As shown in \emph{Theorem~\ref{MAC_convex}}, in this case, the optimization problem for finding the optimal training sequences is convex. Hence, it is well-known that both the SDP and the SVD design schemes can achieve optimal channel estimation performance. This outcome is also verified by the results in Fig.~\ref{MAC_Case_II}. However, it is interesting to note that the proposed iterative algorithm denoted by \emph{Algorithm 1} can also achieve optimal performance, which further verifies its effectiveness for designing the training sequences in the MAC phase.

\begin{figure}[t]
\begin{centering}
\includegraphics[scale=0.6]{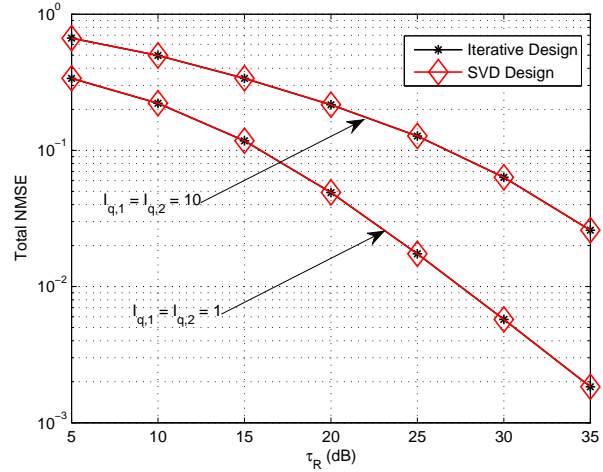}
\vspace{-0.1cm}
\caption{Total NMSE of channel estimation during the BC phase with ${\bf K}_{q,1} = q_1 {\bf I}$.}\label{BC_Case_I}
\end{centering}
\vspace{-0.3cm}
\end{figure}


\begin{figure}[t]
\begin{centering}
\includegraphics[scale=0.6]{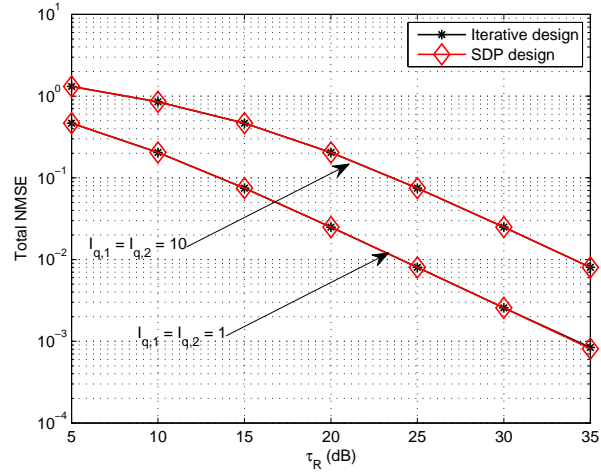}
\vspace{-0.1cm}
\caption{Total NMSE of channel estimation during the BC phase with ${\bf C}_{t,G} = \sqrt{a} {\bf I}$.}\label{BC_Case_III}
\end{centering}
\vspace{-0.3cm}
\end{figure}

Figs.~\ref{BC_Case_I} and~\ref{BC_Case_III} present the channel estimation performance for the special cases presented in Sections V-A and V-B during the BC phase. More specifically, in Fig.~\ref{BC_Case_I}, the scenario where ${\bf K}_{q,1}=q_1 {\bf I}$ and ${\bf K}_{q,2}$ is an arbitrary matrix is taken into consideration. In this figure, the plot with the legend ``SVD design" refers to the results in \emph{Lemma~\ref{BC_KKT}}. Although \emph{Lemma~\ref{BC_KKT}} shows that the training design structure given in \eqref{IV-43} is only optimal when the eigenvalues of ${\bf Z}_{t,G}$, ${\bf Z}_{r,G_i}$, ${\bf K}_{q,i}$ and ${\bf K}_{r,i}$ are small, the results in Fig.~\ref{BC_Case_I} show that the proposed ``SVD design" method closely matches the performance of the proposed iterative design algorithm for most practical scenarios of interest. Moreover, the optimality of \emph{Algorithm 2} is further verified via the results in Fig.~\ref{BC_Case_III}. In this setup we consider the special case presented in Section V-B. Note that since the proposed ``SDP design" in Section V-B is optimal, we conclude that the iterative design in Fig.~\ref{BC_Case_III} can approach the optimal solution in this special case.

\section{Conclusions}
In this paper, the problem of channel estimation in MIMO TWR systems was analyzed. Unlike prior work in this field, the impact of the interference from neighboring devices and the effect of antenna correlations on the design of training sequences and channel estimation performance were taken into consideration. To obtain the channel parameters corresponding to each individual link, we have proposed to carry out the channel estimation process in two phases: the BC phase and the MAC phase. Next, the optimal LMMSE channel estimators for both phases were derived and the corresponding training sequence design problems for each phase were formulated. Subsequently, to ensure accurate channel estimation in TWR systems, the minimum required length of the training sequences were also analyzed. Since the resulting optimization problems were non-convex in their general form, specific transformations were used to obtain near optimal iterative algorithms for the design of the training sequences. Further analysis showed that the optimal structures of the training sequences can be obtained in closed-form when the channel or the noise temporal covariance matrices have special structures. Simulation results show that the proposed training sequence design algorithms can significantly enhance channel estimation performance in TWR relaying systems

\appendices
\numberwithin{equation}{section}
\section{Proof of Lemma~\ref{MAC_length}}
\label{prof_theorem1}
To prove \emph{Lemma~\ref{MAC_length}}, we first rewrite \eqref{III-7} into the following form
\begin{equation}\label{App-I-0}
e_R = {\rm Tr}\left[ \left( {\bf Z}^{-1}_{r,H} \otimes {\bf Z}^{-1}_{t,H} + {\bf K}^{-1}_{r,R} \otimes  {\bf S}^* {\bf K}^{-1}_{q,R} {\bf S}^T \right)^{-1} \right]
= \sum^{M}_{n=1} \sigma_{r,H,n} {\rm Tr}\left[ \left({\bf Z}^{-1}_{t,H} + \beta_{R,n} {\bf S}^* {\bf K}^{-1}_{q,R} {\bf S}^T \right )^{-1} \right],
\end{equation}
where $ \beta_{R,n}\triangleq \frac{\sigma_{r,H,n}}{\delta_{r,R,n}}$. To obtain \eqref{App-I-0}, we have used the rules $ ({\bf A}\otimes {\bf B})^{-1}=  {\bf A}^{-1} \otimes {\bf B}^{-1}$ and \eqref{IV-4}.
Since ${\bf S}^* {\bf K}^{-1}_{q,R} {\bf S}^T \in \mathbb{C}^{(N_1+N_2) \times (N_1+N_2)}$, if the designed training sequence ${\bf S}$ makes ${\bf S}^* {\bf K}^{-1}_{q,R} {\bf S}^T$ full rank, the MSE can be arbitrary small by increasing the source power. In this case, the minimum length of ${\bf S}_i$ should satisfy $L_s \geq N_1+N_2$.
Otherwise, based on the fact that ${\rm Rank}({\bf A}{\bf B}) \leq \min\{{\rm Rank}({\bf A}),  {\rm Rank}({\bf B})\}$, we must have ${\rm Rank}( {\bf S}^* {\bf K}^{-1}_{q,R} {\bf S}^T) \leq N_1+N_2$. Let us consider the best case scenario, where ${\rm Rank}( {\bf S}^* {\bf K}^{-1}_{q,R} {\bf S}^T ) = L_S$,
In this case, the MSE in \eqref{App-I-0} can be lower bounded by
\begin{equation}\label{App-I-1}
\begin{split}
e_R  \geq &\sum^{M}_{n=1} \sigma_{r,H,n} \left(  \sum^{L_s}_{m=1} \frac{1}{\sigma^{-1}_{t,H,m} + \beta_{R,m} \lambda_{SK,m}} +
 \sum^{N_1+N_2}_{m=L_S+1} \sigma_{t,H,m} \right),
\end{split}
\end{equation}
where $\lambda_{SK,m}$ is the $m$-th element of ${\bm \lambda}({\bf S}^* {\bf K}^{-1}_{q,R} {\bf S}^T)$. Moreover, the eigenvlaues in $\{\sigma_{t,H,m}\}$ and $\{\lambda_{SK,m}\}$ are assumed to be arranged in decreasing order, respectively. To obtain \eqref{App-I-1}, we have use the fact that the function ${\rm Tr}[({\bf A})]$ is a schur convex function with respect to the eigenvalue of $\bf A$ and the following result from \cite{Book_Majorization}
\begin{equation}\nonumber
\begin{split}
{\bm \lambda}({\bf Z}_{r,H}) + {\bm \lambda}({\bf S}^* {\bf K}^{-1}_{q,R} {\bf S}^T) \preccurlyeq {\bm \lambda}({\bf Z}_{r,H} +{\bf S}^* {\bf K}^{-1}_{q,R} {\bf S}^T).
\end{split}
\end{equation}
When the source power is large enough,
{the term $\sum^{L_s}_{m=1} \frac{1}{\sigma^{-1}_{t,H,m} + \beta_{R,m} \lambda_{SK,m}}$ in \eqref{App-I-1} approaches zero.}
Thus, $e_R$ in \eqref{App-I-1} is lower bounded by
$e_R  \geq \sum^{M}_{n=1} \sigma_{r,H,n}  \sum^{N_1+N_2}_{m=L_S+1} \sigma_{t,H,m}$.

If ${\bf K}_{q,R} = q {\bf I}$, the total MSE in \eqref{App-I-0} can be written as
\begin{equation}\label{App-I-00}
\begin{split}
e_R & = \sum^{M}_{n=1} \sigma_{r,H,n} {\rm Tr}\left[ \left({\bf Z}^{-1}_{t,H} + \tilde{\beta}_{R,n} {\bf S}^* {\bf S}^T \right )^{-1} \right],
\end{split}
\end{equation}
where $\tilde{\beta}_{R,n} \triangleq {\beta}_{R,n}/q$. With a finite power at the source, it is assumed that the optimal solution of ${\bf S}_1$ and ${\bf S}_2$ in \eqref{IV-1} results in the optimal ${\bf S}$ to have a rank of $r\leq N_1+N_2$. By using the SVD decomposition, we assume that the optimal ${\bf S}$ can be decomposed to
\begin{equation}\label{App-I-4}\nonumber
{\bf S} = {\bf U}_S {\bm \Sigma}_S {\bf V}^H_S,
\end{equation}
where ${\bf U}_S$ and ${\bf V}_S$ are matrices of size $(N_1+N_2)\times r$ and $L_S \times r$, respectively. Moreover, ${\bf U}^H_S {\bf U}_S = {\bf I}_r$, ${\bf V}^H_S {\bf V}_S = {\bf I}_r$, and ${\bm \Sigma}_S$ is an $r \times r$ diagonal eigenvalue matrix. The optimal ${\bf S}_1$ and ${\bf S}_2$ can be denoted as
\begin{equation}\label{App-I-4}\nonumber
{\bf S}_1 = {\bf U}_{S,1} {\bm \Sigma}_S {\bf V}^H_S~{\rm and}~{\bf S}_1 = {\bf U}_{S,2} {\bm \Sigma}_S {\bf V}^H_S,
\end{equation}
where ${\bf U}_{S,1} \triangleq {\bf U}_S(1:N_1,:)$ and ${\bf U}_{S,2} \triangleq {\bf U}_S(N_1+1:N_1+N_2,:)$. Subsequently, a new $\tilde{{\bf S}}$ given by $\tilde{{\bf S}} = {\bf U}_S {\bm \Sigma}_S$, can be obtained that achieves the same total MSE as that of the optimal ${\bf S}$, however, with a shorter training sequence length of $L_S = r$. Furthermore, the new optimal $\tilde{{\bf S}}_1 = {\bf U}_{S,1} {\bm \Sigma}_S  $ and $\tilde{{\bf S}}_2 = {\bf U}_{S,2} {\bm \Sigma}_S $ require the same power at the sources nodes compared to the optimal training sequences. This completes the proof of \emph{Lemma~\ref{MAC_length}}.

\section{Proof of Lemma~\ref{MAC_sec_A}}
\label{prof_lemma1}
{
By taking the gradient of $g(\lambda_i)$, we can easily verify that
$g(\lambda_i)$ decreases with $\lambda_i$}. Next, we mainly focus on deriving the upper bound of $\lambda_i$. The source power constraint can be rewritten as
\begin{equation}\label{App-II-1}
\begin{split}
{\rm Tr}({\bf s}_i {\bf s}^H_i)  = {\rm Tr}\left[({\bf X}^i_s + \lambda_i {\bf I})^{-2}{\bf x}_{s,3,i}{\bf x}^H_{s,3,i} \right]
 \leq \frac{\sigma_{s,3,i}}{(\sigma_{s,\text{min},i}+\lambda_i)^2},
\end{split}
\end{equation}
where $\sigma_{s,\text{min},i}$ and $\sigma_{s,3,i}$ are defined in \emph{Lemma~\ref{MAC_sec_A}}.
In \eqref{App-II-1}, the inequality is obtained based on the identity
${\rm Tr}({\bf A}{\bf B}) \leq \sum_i {\sigma}_{A,i} {\sigma}_{B,i}$~\cite{Lasserre1995}.
Here, ${\sigma}_{A,i}$ and ${\sigma}_{B,i}$ are the eigenvalues of the $n\times n$ matrices ${\bf A}$ and ${\bf B}$, respectively,
{$\{{\sigma}_{A,1},{\sigma}_{A,2},\cdots,{\sigma}_{A,n}\}$ and $\{{\sigma}_{B,1},{\sigma}_{B,2},\cdots,{\sigma}_{B,n}\}$ are arranged in the same order,
}
and the equality
is achieved when ${\bf A}$ and ${\bf B}$ are diagonal matrices.
Hence we have
$\frac{\sigma_{s,3,i}}{(\sigma_{s,\text{min},i}+\lambda_i)^2} \geq \tau_i$,
which further implies
 $\lambda_i \leq \sqrt{\frac{\sigma_{s,3,i}}{\tau_i}}-\sigma_{s,\text{min},i}$.

\section{Proof of Theorem~\ref{MAC_convex}}
\label{prof_theorem3}
{In \eqref{IV-28}, the feasible set established by the power constraints is convex since the function ${\rm Tr}({\bf E}_i{\bf Q}_S)$ is linear \cite{Boyd2004}}. To prove the convexity of \eqref{IV-28}, it is sufficient to show that the objective function is convex. Without loss of generality, we denote that $f({\bf Q}_S)={\rm Tr}\big[ \big({\bf Z}^{-1}_{t,H}  + \alpha_n  {\bf Q}_S  \big)^{-1} \big]$. According to \cite{Boyd2004}, we can prove the convexity of $f({\bf Q}_S)$ by considering an arbitrary linear combination, given by ${\bf Q}_S = {\bf Q}_{S,1}+t{\bf Q}_{S,2}$, where ${\bf Q}_{S,1} \in {\mathbb{S}}^N_{+}$, ${\bf Q}_{S,2} \in {\mathbb{S}}^N$ and ${\bf Q}_{S,1}+t{\bf Q}_{S,2} \in \mathbb{S}^N_{+}$. By defining $g(t)=f({\bf Q}_{S,1}+t{\bf Q}_{S,2})$, we have
$g(t) = {\rm Tr}\big[ \big({\bf Z}^{-1}_{t,H}  + \alpha_n  ({\bf Q}_{S,1}+t{\bf Q}_{S,2}) \big )^{-1} \big]$.
Then we obtain
$\frac{d g(t)}{d t} = -{\rm Tr}\big(\alpha_n \big({\bf Z}^{-1}_{t,H}  + \alpha_n  ({\bf Q}_{S,1}+t{\bf Q}_{S,2}) \big )^{-2} {\bf Q}_{S,2}  \big)$.
Based on that,
we can further reach
\begin{equation}\label{App-III-3}
\begin{split}
\frac{d^2 g(t)}{d t^2}
  &=   2 \alpha^2_n{\rm Tr}\left(  \left({\bf Z}^{-1}_{t,H}  + \alpha_n  ({\bf Q}_{S,1}+t{\bf Q}_{S,2}) \right )^{-2} {\bf Q}_{S,2}
  \left({\bf Z}^{-1}_{t,H}  + \alpha_n  ({\bf Q}_{S,1}+t{\bf Q}_{S,2}) \right )^{-1} {\bf Q}_{S,2} \right)\\
  &\geq 0,
\end{split}
\end{equation}
To obtain \eqref{App-III-3}, we use the fact that $\big({\bf Z}^{-1}_{t,H}  + \alpha_n  ({\bf Q}_{S,1}+t{\bf Q}_{S,2}) \big)^{-2}$
and ${\bf Q}_{S,2}  \big({\bf Z}^{-1}_{t,H}  + \alpha_n  ({\bf Q}_{S,1}+t{\bf Q}_{S,2}) \big )^{-1}$\\
$\times {\bf Q}_{S,2} $ are positive semidefinite matrices.
Hence, we conclude that the function $f({\bf Q}_S)$ is convex with respect to the positive semidefinite matrix ${\bf Q}_S$, which further implies that the objective function in \eqref{IV-28} is convex since the sum of multiple convex functions is a still a convex function.

\section{Proof of Theorem~\ref{MAC_sec_B}}
\label{prof_theorem4}
For notation convenience, we define ${\bf D}_0 \triangleq {\bf Z}^{-1}_{t,H_1} + \alpha_n  {\bf S}^*  {\bf S}^T $ and let
\begin{equation}\label{App-IV-1} \nonumber
\begin{split}
{\bf D}_0 =& {\bf Z}^{-1}_{t,H} + \alpha_n  {\bf S}^*{\bf S}^T
 = \left[ \begin{array}{cc}
                           {\bf Z}^{-1}_{t,H_1} + \alpha_n {\bf S}^*_1  {\bf S}^T_1  &  \alpha_n  {\bf S}^*_1  {\bf S}^T_2  \\
                           \alpha_n  {\bf S}^*_2  {\bf S}^T_1  & {\bf Z}^{-1}_{t,H_2} + \alpha_n {\bf S}^*_2 {\bf S}^T_2 \\
                         \end{array}
                       \right]
 \triangleq  \left[
                         \begin{array}{cc}
                           {\bf D}_1 & {\bf D}^H_2 \\
                           {\bf D}_2 & {\bf D}_3 \\
                         \end{array}
                       \right].
\end{split}
\end{equation}
According to the matrix inverse identity, we have
\begin{equation}\label{App-IV-2} \nonumber
\begin{split}
{\bf D}^{-1}_0  = \left[
                         \begin{array}{cc}
                          {\bf A}_0  &  {\bf B}_0  \\
                          {\bf C}_0  & {\bf D}_0   \\
                         \end{array}
                       \right],
\end{split}
\end{equation}
where ${\bf A}_0 = ({\bf D}_1-{\bf D}^H_2 {\bf D}^{-1}_3 {\bf D}_2)^{-1}$, ${\bf B}_0 = -{\bf D}^{-1}_1 {\bf D}^H_2({\bf D}_3-{\bf D}_2 {\bf D}^{-1}_1  {\bf D}^H_2)^{-1}$, ${\bf C}_0 = -{\bf D}^{-1}_3 {\bf D}_2({\bf D}_1-{\bf D}^H_2 {\bf D}^{-1}_3 {\bf D}_2)^{-1}$, and ${\bf D}_0 = ({\bf D}_3-{\bf D}_2 {\bf D}^{-1}_1  {\bf D}^H_2 )^{-1}$.
Subsequently, we have that
${\rm Tr}({\bf D}^{-1}_0) =  {\rm Tr}\left[({\bf D}_1-{\bf D}^H_2 {\bf D}^{-1}_3 {\bf D}_2)^{-1}\right] +$ \\ $
 {\rm Tr}\left[({\bf D}_3-{\bf D}_2 {\bf D}^{-1}_1  {\bf D}^H_2 )^{-1}\right]$.
Since ${\bf D}^H_2 {\bf D}^{-1}_3 {\bf D}_2\succeq {\bf 0}$ and ${\bf D}_2 {\bf D}^{-1}_1  {\bf D}^H_2 \succeq {\bf 0}$, the following inequalities hold
${\bf D}_1-{\bf D}^H_2 {\bf D}^{-1}_3 {\bf D}_2  \preceq {\bf D}_1~{\rm and}~
{\bf D}_3-{\bf D}_2 {\bf D}^{-1}_1  {\bf D}^H_2   \preceq {\bf D}_3$.
This result further implies that
\begin{equation}\label{App-IV-5}
\begin{split}
\left({\bf D}_1-{\bf D}^H_2 {\bf D}^{-1}_3 {\bf D}_2 \right)^{-1}  \succeq {\bf D}^{-1}_1~{\rm and}~
\left({\bf D}_3-{\bf D}_2 {\bf D}^{-1}_1  {\bf D}^H_2 \right)^{-1}   \succeq {\bf D}^{-1}_3,
\end{split}
\end{equation}
where in obtaining the above we have used the fact that the matrices ${\bf D}_1$, ${\bf D}_3$, ${\bf D}_1-{\bf D}^H_2 {\bf D}^{-1}_3 {\bf D}_2$ and ${\bf D}_3-{\bf D}_2 {\bf D}^{-1}_1  {\bf D}^H_2$ are positive semidefinite. From \eqref{App-IV-5}, we have
\begin{equation}\label{App-IV-6}
\begin{split}
{\rm Tr}\left[\left({\bf D}_1-{\bf D}^H_2 {\bf D}^{-1}_3 {\bf D}_2 \right)^{-1}\right]  \succeq {\rm Tr}\left({\bf D}^{-1}_1\right)~{\rm and}~
{\rm Tr}\left[\left({\bf D}_3-{\bf D}_2 {\bf D}^{-1}_1  {\bf D}^H_2 \right)^{-1}\right]   \succeq {\rm Tr}\left({\bf D}^{-1}_3\right).
\end{split}
\end{equation}
Hence, if ${\bf D}_2 = {\bf 0}$, i.e., ${\bf S}^*_2  {\bf S}^T_1= {\bf 0}$, the value of the objective function in \eqref{IV-27} can be always reduced.

Next, it is shown that for any ${\bf S}_1$ and ${\bf S}_2$, letting ${\bf S}^*_2  {\bf S}^T_1= {\bf 0}$ does not increase the need for power at the source nodes. Since in \eqref{IV-27}, the values of the objective function and the power constraints are only affected by ${\bf S}^*_1  {\bf S}^T_1$ and ${\bf S}^*_2  {\bf S}^T_2$, the optimal ${\bf S}_1$ and ${\bf S}_2$ can be determined as
\begin{equation}\label{App-IV-7}
\begin{split}
\bar{{\bf S}}_1 = {\bf U}_{t, s_i} {\bm \Sigma}_{s_i} {\bf V}^H_{s_i},
\end{split}
\end{equation}
where ${\bf V}_{s_i}$ can be any matrix satisfying ${\bf V}^H_{s_i} {\bf V}_{s_i} = {\bf I} $. It is worth noting that since $L_S \geq N_1+N_2$ based on \emph{Lemma~\ref{MAC_length}}, one can always find a specific ${\bf V}_{s_1}$ and ${\bf V}_{s_2}$ such that
\begin{equation}\label{App-IV-77}
\begin{split}
{\bf V}^H_{s_1} {\bf V}_{s_2} = {\bf 0},
\end{split}
\end{equation}
which further results in ${\bf S}^*_2  {\bf S}^T_1= {\bf 0}$.
In this case, we do not change the value of ${\bf S}^*_i  {\bf S}^T_i$ and the power constraint, while decrease the value of the objective function according to \eqref{App-IV-6}.

It is noticed that the orthogonal training sequences has also been proven to be optimal for the cascaded channel estimation in two-way relaying system in \cite{Kim_cl2013, Rong_tsp2013}. Here, we show similar results hold for individual channel estimation. With the optimal condition ${\bf S}^*_1  {\bf S}^T_2=0$, the optimization problem in \eqref{IV-27} can be decomposed into two subproblems given by
\begin{equation}\label{App-IV-8}
\begin{split}
 \min_{{\bf S}_i}~~ & \sum^{M}_{n=1} \sigma_{r,H,n} {\rm Tr}\left[ \left({\bf Z}^{-1}_{t,H_i} + \alpha_n  {\bf S}^*_i  {\bf S}^T_i  \right )^{-1} \right]  \\
{\rm s.t.} ~~~ & {\rm Tr}({\bf S}^*_i {\bf S}^T_i) \leq \tau_i,~i=1,2
\end{split}
\end{equation}
Based on the results derived for point-to-point MIMO systems \cite{Bjornson2010}, we obtain that the unitary matrix ${\bf U}_{t, s_i}$ given in  \eqref{App-IV-7} should be of the form ${\bf U}_{t, s_i} = {\bf U}^*_{t, H_i}$, and ${\bf V}_{s_1}$ should satisfy \eqref{App-IV-77}. Then, solving \eqref{App-IV-8} just reduces to solving the following power allocation problem
\begin{equation}\label{App-IV-9}\nonumber
\begin{split}
 \min_{\sigma_{s_i,m}}~~ &\sum^{M}_{n=1} \sigma_{r,H,n} \sum^{N_i}_{m=1} \frac{\sigma_{t,H_i,m}}{1+\alpha_n \sigma_{t,H_i,m} \sigma^2_{s_i,m}}   \\
{\rm s.t.} ~~~& \sum^{N_i}_{m=1}\sigma^2_{s_i,m}  \leq \tau_i
\end{split}
\end{equation}
where the optimal $\sigma_{s_i,m}$ can be obtained via the water-filling approach in \eqref{App-IV-10}.

\section{Proof of Lemma~\ref{BC_sec_A}}
\label{prof_lemma3}
Note that when eigenvalues of ${\bf Z}_{t,G}$, ${\bf Z}_{r,G_i}$, ${\bf K}_{q,i}$ and ${\bf K}_{r,i}$ are small, $\sigma_{r,G_i,n}$ becomes very small compared to ${\bf Z}^{-1}_{t,G}$, $\frac{1}{\delta_{r,1,n}q_1}{\bf S}^*_R {\bf S}^T_R$ and $\frac{1}{\delta_{r,2,n}}{\bf S}^*_R {\bf K}^{-1}_{q,2}{\bf S}^T_R$. Based on the inverse approximation rule in \cite[Eq. (167)]{Cookbook}, the original problem can be approximated as
\begin{equation}\label{App-V-1}
\begin{split}
\min_{{\bf S}_R} & ~~ \sum^{N_1}_{n=1} \sigma_{r,G_1,n} {\rm Tr}\left[ {\bf Z}_{t,G} - \tilde{\beta}_{1,n}  {\bf Z}_{t,G} {\bf S}^*_R {\bf S}^T_R {\bf Z}_{t,G} \right] +
 \sum^{N_2}_{n=1}\sigma_{r,G_2,n} {\rm Tr}\left[ {\bf Z}_{t,G} - \beta_{2,n}  {\bf Z}_{t,G} {\bf S}^*_R {\bf K}^{-1}_{q,2} {\bf S}^T_R {\bf Z}_{t,G} \right] \\
{\rm s.t.}& ~~~ {\rm Tr}({\bf S}^*_R {\bf S}^T_R) \leq \tau_R.
\end{split}
\end{equation}
By defining a new variable ${\bf P}\triangleq{\bf Z}_{t,G} {\bf S}^*_R$, \eqref{App-V-1} is equivalent to
\begin{equation}\label{App-V-2}\nonumber
\begin{split}
\max_{{\bf S}_R} & ~~ \sum^{N_1}_{n=1}  \sigma_{r,G_1,n} {\rm Tr}\left[  \tilde{\beta}_{1,n}  {\bf P} {\bf P}^H \right] +
 \sum^{N_2}_{n=1} \sigma_{r,G_2,n} {\rm Tr}\left[ \beta_{2,n}  {\bf P} {\bf K}^{-1}_{q,2} {\bf P}^H  \right] \\
{\rm s.t.}& ~~~ {\rm Tr}\left[{\bf P} {\bf P}^H ({\bf Z}^{H}_{t,G})^{-1} {\bf Z}^{-1}_{t,G}  \right] \leq \tau_R.
\end{split}
\end{equation}
Assuming that the SVD decomposition of ${\bf P}$ is given by ${\bf U}_P {\bm \Sigma}_P {\bf V}^H_P$ with the eigenvalues in ${\bm \Sigma}_P$ arranged in decreasing order, we have
\begin{equation}\label{App-V-3}\nonumber
\begin{split}
\max_{{\bf S}_R} & ~~ \sum^{N_1}_{n=1} \sigma_{r,G_1,n}{\rm Tr}\left[  \tilde{\beta}_{1,n}  {\bm \Sigma}^2_P  \right] +
 \sum^{N_2}_{n=1}\sigma_{r,G_2,n} {\rm Tr}\left[ \beta_{2,n}  {\bm \Sigma}^{2}_P {\bf V}^H_P  {\bf V}_{q,2}{\bm \Delta}^{-1}_{q,2} {\bf V}^H_{q,2} {\bf V}_P  \right] \\
{\rm s.t.}& ~~~ {\rm Tr}\left[{\bm \Sigma}^2_P {\bf U}^H_P {\bf U}_{t,G}{\bm \Sigma}^{-2}_{t,G} {\bf U}^H_{t,G} {\bf U}_P  \right] \leq \tau_R.
\end{split}
\end{equation}
We observe that the unitary matrices ${\bf U}_P$ and ${\bf V}_P$ only affect the power constraint and the objective function, respectively. Based on the trace inequality identity in \cite[Eq. (3)]{Lasserre1995}, we have
\begin{equation}\label{App-V-44}\nonumber
\begin{split}
 {\rm Tr}\big[ \beta_{2,n}  {\bm \Sigma}^{2}_P {\bf V}^H_P  {\bf V}_{q,2}{\bm \Delta}^{-1}_{q,2} {\bf V}^H_{q,2} {\bf V}_P  \big] &\leq {\rm Tr}\big[ \beta_{2,n}  {\bm \Sigma}^{2}_P {\bm \Delta}^{-1}_{q,2}  \big]\notag \\
 {\rm Tr}\big[{\bm \Sigma}^2_P {\bf U}^H_P {\bf V}_{t,G}{\bm \Sigma}^{-2}_{t,G} {\bf V}^H_{t,G} {\bf U}_P  \big] &\geq {\rm Tr}\big[{\bm \Sigma}^2_P {\bm \Sigma}^{-2}_{t,G} \big],
\end{split}
\end{equation}
where the eigenvalues in ${\bm \Delta}_{q,2}$ and ${\bm \Sigma}_{t,G}$ are arranged in increasing order and decreasing order, respectively,
and the equalities are achieved when ${\bf V}_P = {\bf V}_{q,2}$ and ${\bf U}_P ={\bf U}_{t,G}$. Hence, we obtain the optimal ${\bf S}^*_R$ as
${\bf S}^*_R  = {\bf Z}^{-H}_{t,G} {\bf U}_{t,G} {\bm \Sigma}_P {\bf V}^H_{q,2}
 = {\bf U}_{t,G} {\bm \Sigma}_{t,G} {\bm \Sigma}_P {\bf V}^H_{q,2}$,
which further leads to \eqref{IV-43}.

With the structure of training sequence given in \eqref{IV-43}, the original optimization problem in \eqref{IV-42} is reduced to the following power allocation problem
\begin{equation}\label{App-V-6}
\begin{split}
\min_{\sigma_{s,R,n},\forall n} & \sum^{N_1}_{n=1}  \sum^{M}_{m=1}  \frac{\sigma_{t,G,m}\sigma_{r,G_1,n}}{1+\tilde{\beta}_{1,n} \sigma_{t,G,m} \sigma_{s,R,m} }+
  \sum^{N_2}_{n=1}  \sum^{M}_{m=1} \frac{\sigma_{r,G_2,n}\sigma_{t,G,m}}{1+{\beta}_{2,n} \sigma_{t,G,m} \sigma_{s,R,m} \delta^{-1}_{q,2,m} } \\
{\rm s.t.}~~& \sum^{M}_{m=1} \sigma_{s,R,m} \leq \tau_R
\end{split}
\end{equation}
The lagrangian function of \eqref{App-V-6} can be written as
\begin{equation}\label{App-V-7}\nonumber
\begin{split}
\mathcal{L} =  \sum^{N_1}_{n=1}  \sum^{M}_{m=1}  \frac{\sigma_{t,G,m}\sigma_{r,G_1,n}}{1+\tilde{\beta}_{1,n} \sigma_{t,G,m} \sigma_{s,R,m} }
+\sum^{N_2}_{n=1}  \sum^{M}_{m=1} \frac{\sigma_{r,G_2,n}\sigma_{t,G,m}}{1+{\beta}_{2,n} \sigma_{t,G,m} \sigma_{s,R,m} \delta^{-1}_{q,2,m} } + \lambda (\sum^{M}_{m=1} \sigma_{s,R,m} - \tau_R),
\end{split}
\end{equation}
where $\lambda$ is lagrangian multiplier. Based on the KKT condition, we obtain \eqref{IV-44}. Then, by setting $\sigma_{s,R,m}=0$, we obtain the range of $\lambda$ as shown in \emph{Lemma~\ref{BC_KKT}}.


\begin{thebibliography}{99}

\bibitem{Hua_JSAC_12}
\text{Editors: Y. Hua, D. W. Bliss, S. Gazor, Y. Rong, and Y. Sung}, ``Theories
  and methods for advanced wireless relays: Issue {I},'' \emph{{IEEE} J. Sel.
  Areas Commun.}, vol.~30, no.~8, 2012.

\bibitem{Rong2009}
Y.~Rong, X.~Tang, and Y.~Hua, ``A unified framework for optimizing linear
  nonregenerative multicarrier {MIMO} relay communication systems,''
  \emph{{IEEE} Trans. Signal Process.}, vol.~57, no.~12, pp. 4837--4851, 2009.

\bibitem{Ting2011}
T.~Kong and Y.~Hua, ``Optimal design of source and relay pilots for {MIMO}
  relay channel estimation,'' \emph{{IEEE} Trans. Signal Process.}, vol.~59,
  no.~9, pp. 4438 --4446, sept. 2011.

\bibitem{ZhangSL_PLNC06}
S.~Zhang, S.~Liew, and P.~P. Lam, ``Hot topic: physical-layer network coding,''
  in \emph{Proc. ACM MobiCom'06}, Los Angeles, USA, Sept. 2006.

\bibitem{Gao_tcom2009}
F.~Gao, R.~Zhang, and Y.-C. Liang, ``Optimal channel estimation and training
  design for two-way relay networks,'' \emph{{IEEE} Trans. Commun.}, vol.~57,
  no.~10, pp. 3024--3033, 2009.

\bibitem{Abdallah2012twc}
S.~Abdallah and I.~Psaromiligkos, ``Partially-blind estimation of reciprocal
  channels for af two-way relay networks employing {M-PSK} modulation,''
  \emph{{IEEE} Trans. Wireless Commun.}, vol.~11, no.~5, pp. 1649 --1654, 2012.

\bibitem{Abdallah2012tsp}
------, ``Blind channel estimation for amplify-and-forward two-way relay
  networks employing {M-PSK} modulation,'' \emph{{IEEE} Trans. Signal
  Process.}, vol.~60, no.~7, pp. 3604 --3615, 2012.

\bibitem{Gongpu_twc2011}
G.~Wang, F.~Gao, W.~Chen, and C.~Tellambura, ``Channel estimation and training
  design for two-way relay networks in time-selective fading environments,''
  \emph{{IEEE} Trans. Wireless Commun.}, vol.~10, no.~8, pp. 2681--2691, 2011.

\bibitem{Gao_tsp2009}
F.~Gao, R.~Zhang, and Y.-C. Liang, ``Channel estimation for {OFDM} modulated
  two-way relay networks,'' \emph{{IEEE} Trans. Signal Process.}, vol.~57,
  no.~11, pp. 4443--4455, 2009.

\bibitem{Shun_tsp2012}
S.~Zhang, F.~Gao, and C.-X. Pei, ``Optimal training design for individual
  channel estimation in two-way relay networks,'' \emph{{IEEE} Trans. Signal
  Process.}, vol.~60, no.~9, pp. 4987--4991, 2012.

\bibitem{Zhang2009}
R.~Zhang, Y.-C. Liang, C.~C. Chai, and S.~Cui, ``Optimal beamforming for
  two-way multi-antenna relay channel with analogue network coding,''
  \emph{{IEEE} J. Sel. Areas Commun.}, vol.~27, no.~5, pp. 699--712, 2009.

\bibitem{Rong_tsp2012}
Y.~Rong, ``Joint source and relay optimization for two-way linear
  non-regenerative {MIMO} relay communications,'' \emph{{IEEE} Trans. Signal
  Process.}, vol.~60, no.~12, pp. 6533--6546, 2012.

\bibitem{RuiWang}
R.~Wang and M.~Tao, ``Joint source and relay precoding designs for {MIMO}
  two-way relaying based on {MSE} criterion,'' \emph{{IEEE} Trans. Signal
  Process.}, vol.~60, no.~3, pp. 1352--1365, 2012.

\bibitem{Xu_Hua_TWC}
S.~Xu and Y.~Hua, ``Optimal design of spatial source-and-relay matrices for a
  non-regenerative two-way {MIMO} relay system,'' \emph{{IEEE} Trans. Wireless
  Commun.}, vol.~10, no.~5, pp. 1645--1655, May 2011.

\bibitem{Jian_GC2008}
J.~Zhao, M.~Kuhn, A.~Wittneben, and G.~Bauch, ``Self-interference aided channel
  estimation in two-way relaying systems,'' in \emph{IEEE GLOBECOM 2008}, pp.
  1--6.

\bibitem{Zhaoxi_2011}
Z.~Fang, J.~Shi, and H.~Shan, ``Comparison of channel estimation schemes for
  {MIMO} two-way relaying systems,'' in \emph{Cross Strait Quad-Regional Radio
  Science and Wireless Technology Conference (CSQRWC), 2011}, vol.~1, 2011, pp.
  719--722.

\bibitem{Pham_tcom2010}
T.-H. Pham, Y.-C. Liang, A.~Nallanathan, and H.~K. Garg, ``Optimal training
  sequences for channel estimation in bi-directional relay networks with
  multiple antennas,'' \emph{{IEEE} Trans. Commun.}, vol.~58, no.~2, pp.
  474--479, 2010.

\bibitem{Kim_cl2013}
D.-H. Kim, M.~Ju, and H.-M. Kim, ``Optimal training signal design for
  estimation of correlated {MIMO} channels in two-way amplify-and-forward relay
  systems,'' \emph{{IEEE} Commun. Lett.}, vol.~17, no.~3, pp. 491--494, 2013.

\bibitem{Rong_tsp2013}
C.~Chiong, Y.~Rong, and Y.~Xiang, ``Channel training algorithms for two-way
  {MIMO} relay systems,'' \emph{{IEEE} Trans. Signal Process.}, vol.~61,
  no.~16, pp. 3988--3998, 2013.

\bibitem{Biguesh_tsp2009}
M.~Biguesh, S.~Gazor, and M.~Shariat, ``Optimal training sequence for {MIMO}
  wireless systems in colored environments,'' \emph{{IEEE} Trans. Signal
  Process.}, vol.~57, no.~8, pp. 3144--3153, 2009.

\bibitem{Bjornson2010}
E.~Bjornson and B.~Ottersten, ``A framework for training-based estimation in
  arbitrarily correlated rician {MIMO} channels with rician disturbance,''
  \emph{{IEEE} Trans. Signal Process.}, vol.~58, no.~3, pp. 1807 --1820, 2010.

\bibitem{RuiTVT2012}
R.~Wang, M.~Tao, and Z.~Xiang, ``Nonlinear precoding design for {MIMO}
  amplify-and-forward two-way relay systems,'' \emph{{IEEE} Trans. Veh.
  Technol.}, vol.~61, no.~9, pp. 3984 --3995, 2012.

\bibitem{WangTWC2012}
R.~Wang, M.~Tao, and Y.~Huang, ``Linear precoding designs for
  amplify-and-forward multiuser two-way relay systems,'' \emph{{IEEE} Trans.
  Wireless Commun.}, vol.~11, no.~12, pp. 4457 -- 4469, 2012.

\bibitem{Yong_tsp2007}
Y.~Liu, T.~Wong, and W.~W. Hager, ``Training signal design for estimation of
  correlated {MIMO} channels with colored interference,'' \emph{{IEEE} Trans.
  Signal Process.}, vol.~55, no.~4, pp. 1486--1497, 2007.

\bibitem{Chizhik_JSAC2003}
D.~Chizhik, J.~Ling, P.~Wolniansky, R.~Valenzuela, N.~Costa, and K.~Huber,
  ``Multiple-input-multiple-output measurements and modeling in manhattan,''
  \emph{{IEEE} J. Sel. Areas Commun.}, vol.~21, no.~3, pp. 321--331, 2003.

\bibitem{Kai_VTC2002}
K.~Yu, M.~Bengtsson, B.~Ottersten, D.~McNamara, P.~Karlsson, and M.~Beach, ``A
  wideband statistical model for nlos indoor mimo channels,'' in \emph{IEEE VTC
  Spring 2002}, vol.~1, pp. 370--374.

\bibitem{StevenBook}
S.~M. Kay, \emph{Fundamentals of statistical signal processing: Estimation
  theory}.\hskip 1em plus 0.5em minus 0.4em\relax Englewood Cliffs, NJ:
  Prentice-Hall, 1993.

\bibitem{Boyd2004}
S.~Boyd and L.~Vandenberghe, \emph{Convex Optimization}.\hskip 1em plus 0.5em
  minus 0.4em\relax Cambridge University Press, 2004.

\bibitem{CVX}
M.~Grant and S.~Boyd, \emph{{CVX}: Matlab Software for Disciplined Convex
  Programming, version 1.21}.\hskip 1em plus 0.5em minus 0.4em\relax
  \url{http://cvxr.com/cvx}, July 2010.

\bibitem{Book_Majorization}
E.~Jorswieck and H.~Boche, \emph{Majorization and Matrix-Monotone Functions in
  Wireless Communications}.\hskip 1em plus 0.5em minus 0.4em\relax Now
  Publishers Inc, 2007.

\bibitem{Cookbook}
K.~B. Petersen and M.~S. Pedersen, \emph{The Matrix Cookbook}.\hskip 1em plus
  0.5em minus 0.4em\relax On line: http://matrixcookbook.com, 2008.

\bibitem{Lasserre1995}
J.-B. Lasserre, ``A trace inequality for matrix product,'' \emph{IEEE Trans.
  Autom. Contr.}, vol.~40, no.~8, pp. 1500--1501, 1995.

\end{thebibliography}


\end{document}